\def\GKM{\mathsf{GKM}}
\def\AND{\mathsf{AND}}
\def\LONG{\mathsf{long}}
\def\SHORT{\mathsf{short}}
\def\width{w}
\def\terms{\ell}
\def\epsilon{\varepsilon}
\def\e{\eps}
\def\op{\mathsf{op}}
\def\Fr{\mathsf{F}}
\title{Pseudorandom bits for non-commutative programs}
\author{Chin Ho Lee%
\thanks{Work done in part at Harvard University, supported by Madhu Sudan’s and Salil Vadhan’s Simons Investigator Awards.}\\
North Carolina State University%
\and
Emanuele Viola%
\thanks{Supported by NSF grants CCF-2114116 and CCF-2430026.}\\
Northeastern University%
}
\begin{document}
\maketitle

\begin{abstract}
We obtain new explicit pseudorandom generators for several computational models
involving groups.  Our main results are as follows:

\begin{enumerate}
 \item We consider read-once group-products over a finite group $G$,
   i.e., tests of the form $\prod_{i=1}^n g_{i}^{x_{i}}$ where $g_{i}\in G$,
a special case of read-once permutation branching programs. We give
generators with optimal seed length $c_{G}\log(n/\varepsilon)$ over any $p$-group.
The proof uses the small-bias plus noise paradigm, but derandomizes the noise to avoid the recursion in previous work.  Our generator works when the bits are read in any order.  Previously for any non-commutative group the best seed length was
$\ge\log n \log(1/\varepsilon)$, even for a fixed order.

  \item We give a reduction that ``lifts'' suitable generators for
group products over $G$ to a generator that fools \emph{width-$w$ block
products}, i.e., tests of the form $\prod g_{i}^{f_i}$ where the $f_i$ are arbitrary functions on disjoint blocks of $w$ bits.  Block products generalize several previously studied classes.  The reduction applies
to groups that are mixing in a representation-theoretic sense that
we identify.

  \item Combining (2) with (1) and other works we obtain new generators
for block products over the quaternions or over any commutative group,
with nearly optimal seed length. In particular, we obtain generators
for read-once polynomials modulo any fixed $m$ with nearly optimal seed
length. Previously this was known only for $m=2$.

  \item We give a new generator for products over ``mixing groups.''  The construction departs from previous work and uses representation theory.  For constant error, we obtain optimal seed length, improving on previous work (which applied to any group).
\end{enumerate}

This paper identifies a challenge in the area that is
reminiscent of a roadblock in circuit complexity -- handling composite
moduli -- and points to several classes of groups to be attacked next.
\end{abstract}

\section{Introduction}

The construction of explicit pseudorandom generators is a fundamental research
goal that has applications in many areas of theoretical computer science.
For background we refer to the recent survey \cite{HatamiH23}.
We first define pseudorandom generators, incorporating the variants
of \emph{any order} (reflected in the permutation $\pi$) and \emph{non-Boolean} tests (reflected in the
range set $R$).

\begin{definition}[Pseudorandom generators (PRGs)]
\label{def:PRG}An explicit function $P\colon\zo^{s}\to\zo^{n}$ is a \emph{pseudorandom
generator (PRG)} with seed length $s$ and error $\e$ for a class
of functions $F$ mapping $\zo^{n}$ to a set $R$ if for every $f\in F$ the
statistical distance between $f(P(U_{s}))$ and $f(U_{n})$ is $\le\e$,
where $U_{s}$ denotes the uniform distribution over $\zo^{s}$.  We say $P$ fools $F$ \emph{in any order} if $\pi(P)$ fools $F$
for any permutation $\pi$ of the positions of the $n$ input bits.
A PRG is \emph{explicit} if it is computable in time $n^c$.
\end{definition}

\paragraph{PRGs for branching programs, and group programs.}

A main agenda is obtaining explicit pseudorandom generators for read-once branching
programs (ROBPs), with an ultimate goal of proving $\text{BPL}=\text{L}$.
However, even for \emph{constant-width, permutation }ROBPs, the best known
seed length is $\ge\log n \log(1/\e)$. This is $\ge\log^{2}n$
when $\e=1/n$, and thus falls short of the optimal seed length $c\log(n/\e)$.
For permutation ROBPs of width $w$, seed length $c_{w}\log(n/\e)\log(\e^{-1}\log n)$
follows from instantiating the ``Polarizing Random Walks'' \cite{DBLP:journals/toc/ChattopadhyayHH19}
with a bound from \cite{ReingoldSV13,LPV22}. These generators work in \emph{any
order}; thus they essentially match the seed length $c_{w}\log(n/\e)\log(1/\e)$
that was already available for \emph{fixed-order }in a sequence of
exciting works culminating in \cite{DBLP:journals/eccc/Steinke12}.

The class of permutation ROBPs is equivalent to \emph{group
programs} (see e.g.~\cite{DBLP:conf/stoc/KouckyNP11}):
\begin{definition}
A \emph{program} (or \emph{product}) $p$ of length $n$ over a group $G$ is a tuple
$(g_1,g_2,\ldots,g_n)\in G^n$. The program computes the
function $f_p \colon \zo^n \ni x \mapsto \prod_{i\in[n]} g_i^{x_i} \in G$.
\end{definition}

No generator with seed length less than $\log n \log(1/\e)$ was available
for any non-commutative group. While optimal seed length $c\log(n/\e)$
was known for $\mathbb{Z}_{2}$ since \cite{NaN90}, it took nearly
20 years and different techniques to have the same seed length over
$\mathbb{Z}_{3}$ \cite{LovettRTV09,MekaZ09}, and remarkably that
seed length is still not available even for $\mathbb{Z}_{6}$ (see \cite{GKM18} for the best known construction).

\paragraph{PRGs for read-once polynomials.}

Another model that has received significant attention is \emph{read-once
polynomials}. Intuitively, this model can serve as a bridge between
permutation and non-permutation ROBPs. The available generators for
non-permutation ROBPs have significantly worse seed length than for
permutation programs, see e.g.~\cite{DBLP:conf/stoc/MekaRT19} and the discussion
there.

A sequence of works \cite{LeeV-rop,DBLP:conf/stoc/MekaRT19,DBLP:conf/coco/DoronHH20}
culminated in PRGs with seed length $c\log n+\log(1/\e)\log^{c}\log(1/\e)$
for read-once polynomials over $\mathbb{Z}_{2}$. But for other domains
such as $\mathbb{Z}_{3}$ such good seed lengths were not known.

\paragraph{PRGs for block-products.}

A more general model that generalizes and unifies the previous ones
is what we call block-products of width $w$ over a group $G$. Here,
the input bits are arbitrarily partitioned in blocks of $w$ bits,
arbitrary Boolean functions are then applied to each
block, and finally the outputs are used as exponents to group elements.  For our results, we will need to allow one block to be
larger; we call this \emph{spill} and incorporate it in the following
definition.
\begin{definition}[Block-product with spill] \label{def:block-prod-with-spill}
A function $f\colon\zo^n\to G$ is computable by a $w$-block product
with $\ell$ terms and a spill of $q$ bits, written as \emph{$(\ell,w,q)$-product,}
over a group $G$ if there exist $\ell+1$
disjoint subsets $I_{0},I_{1},\ldots,I_{\ell} \subseteq [n]$, where $\abs{I_0}\le q$ and
$\abs{I_i} \le w$ for each $i \in [\ell]$ such that 
\[
  f(x)= \prod_{i=1}^{\ell}g_{i}^{f_{i}(x_{I_{i}})} 
\]
for some group elements $g_{i}\in G$, functions $f_{i}\colon\zo^{I_{i}}\to\zo$.
Here $x_{I_{i}}$ are the $|I_{i}|$ bits of $x$ indexed by $I_{i}$.
\end{definition}

Note that block products are unordered by definition. They are a generalization
of several function classes that have been studied, including \emph{modular
sums}~\cite{LRTV09,MZ09,GKM18} (when $G$ is a cyclic group and
$w=1$), product tests with outputs in $\{-1,1\}$ (a.k.a.~combinatorial
checkerboards) \cite{Watson13,HLV18,LV18,LV20,Lee19} (when $G=\mathbb{Z}_{2}$),
themselves a generalization of \emph{combinatorial rectangles }\cite{ArmoniSWZ96,Lu02,GopalanMRTV12},
and unordered \emph{combinatorial shapes}~\cite{GMRZ13,GKM18} (when
$G=\mathbb{Z}_{m+1}$). Block products also generalize read-once polynomials
because one can show (for the uniform and typically also pseudorandom
distributions) that monomials of degree $\ge\log(n/\e)$ do not affect
the result significantly, and so one can simulate these polynomials
with blocks of size $\log(n/\eps)$.

In terms of generators, a series of works culminating in \cite{Lee19}
gives nearly-optimal seed length (i.e., $w+\log(\ell/\e)$ up to lower-order
factors) over $\mathbb{Z}_{2}$. But such a result was not known over other groups such as $\mathbb{Z}_{3}$ or any non-commutative group.

\subsection{Our results}

In this work we bring new techniques, notably from group theory, to
bear on these problems, and use them to obtain new pseudorandom generators.

First, we obtain optimal seed length for products over $p$-groups.
\begin{definition}
A finite $p$-group is a group of order $p^k$ for an integer $k$
and a prime $p$.
\end{definition}

Equivalently, the order of every element is a power of $p$. (The
latter definition makes sense for infinite groups, but we only consider
finite groups.) The class of $p$-groups is rich and has been studied in
various areas of theory of computation. For example, $p$-groups remain
a candidate for good group-theoretic algorithms for matrix multiplication
\cite{DBLP:journals/corr/abs-1712-02302}; the isomorphism testing
for a subclass of $p$-groups has been identified as a barrier to
faster group isomorphism algorithms \cite{DBLP:conf/stoc/Sun23};
$p$-groups (specifically, unitriangular groups) are used for cryptography
in NC$^{0}$ \cite{ApplebaumIK06} (see \cite{viola-FTTCS09} for
an exposition emphasizing these groups); finally, $p$-groups (specifically,
quaternions) are used in computer graphics to express 3D rotations~\cite{kuipers2002quaternions}.

We now give a few examples of such groups, all of which are non-commutative.
\begin{itemize}
  \item The \emph{quaternion group $\mathbb{Q}_{8}$ }of order $8$ is a $2$-group.

  \item \emph{Unitriangular groups} over $\mathbb{F}_{p}$ are $p$-groups.
They consist of upper-triangular matrices (of some fixed dimension),
with $1$ on the diagonal and entries in $\mathbb{F}_{p}$.

  \item \emph{Wreath products} give natural examples of $p$-groups. For example,
the wreath product $\mathbb{Z}_{p} \wr \mathbb{Z}_{p}$ is a group of
order $p^{p+1}$, hence a $p$-group. This group is the direct product
$\mathbb{Z}_{p}^{p}$ with another element in $\mathbb{Z}_{p}$ acting
on the tuple by shifting the coordinates. For concreteness, the case
$p=2$ can be presented as $(a,b;z)$ where $a,b,z\in\mathbb{Z}_{2}$,
$(a,b;0)(a',b';z')=(a+a',b+b';z')$, and $(a,b;1)(a',b';z')=(a+b',b+a';1+z')$.
Wreath product constructions (not necessarily $p$-groups) have been
studied in a variety of contexts ranging from group-theoretic algorithms
for matrix multiplication \cite{CohnKSU05}, to construction of expander
graphs \cite{AlonLW01,DBLP:journals/toc/RozenmanSW06}, to mixing
in non-quasirandom groups \cite{GowersV-group-mix}.

  \item The \emph{dihedral group }$\mathbb{D}_{n}$ is the group of order $2n$ of symmetries
of a regular polygon with $n$ sides. When $n=2^{t}$, $\mathbb{D}_{n}$ is a $2$-group.
\end{itemize}

We give pseudorandom generators for programs over $p$-groups, with
optimal seed length.  Throughout this paper, we use $c_x$ to denote a constant that depends on the variable $x$.
\begin{theorem}
\label{thm:fooling-p-groups}Let $G$ be a $p$-group. There is an
explicit pseudorandom generator that fools programs of length $n$
over $G$ in any order, with seed length $c_{G}\log(n/\e)$.
\end{theorem}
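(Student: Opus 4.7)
The plan is to employ the small-bias plus noise paradigm with a representation-theoretic analysis tailored to $p$-groups. By standard Fourier analysis on $G$, the statistical distance between $\prod_i \rho(g_i)^{x_i}$ under the PRG and under uniform is controlled by the operator-norm difference for each non-trivial irreducible representation $\rho$; summing over $\rho$ bounds the statistical distance of $f_p$ up to factors depending only on $|G|$. Since $G$ is a $p$-group, every irreducible $\rho$ has dimension a power of $p$ bounded by $|G|^{1/2}$, so all dimension-dependent constants can be absorbed into $c_G$.

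I would take the generator's output to be $y \oplus z$, where $y \in \zo^n$ is sampled from an $\e'$-biased distribution with $\e' = \e/|G|^{O(1)}$ and $z \in \zo^n$ is an independent low-weight noise pattern. In the abelian case $y$ alone suffices; the role of $z$ is to tame the non-commuting terms. Concretely, I would argue by induction on the nilpotency class of $G$ along the lower central series $G = \gamma_1 \supseteq \gamma_2 \supseteq \cdots \supseteq \gamma_c = \{1\}$, which terminates precisely because $p$-groups are nilpotent. Each time two factors in the product fail to commute, a commutator is pushed into $\gamma_{i+1}$; by choosing the noise rate $\delta$ a sufficiently small constant depending on $G$, the relevant character sum at level $i$ shrinks by a factor of $O(\delta + \e')$, and after $c = c_G$ iterations one is left with an abelian residual problem handled by $y$.

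The crucial innovation, as flagged by the abstract, is to derandomize $z$ without recursion. Rather than sampling $z$ i.i.d.\ Bernoulli and then derandomizing the joint distribution via an INW-style recursive construction (which is what costs the extra $\log(1/\e)$ factor in Steinke and related works), I would sample $z$ from a $k$-wise independent low-weight distribution for $k = c_G \log(1/\e)$. Because the character-theoretic analysis only inspects at most $k$ positions of $z$ at a time, $k$-wise independence matches fully random Bernoulli noise on the relevant moments. The seed cost is $c_G \log n$ for $y$ (\emph{\`a la} Naor--Naor) and $O(k \log n) = c_G \log(n/\e)$ for $z$, giving total seed length $c_G \log(n/\e)$. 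The ``any order'' property is immediate since both small-bias and low-weight $k$-wise independent distributions are permutation-invariant.

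The main obstacle I anticipate is the quantitative per-level lemma: one must show that a non-commuting swap at level $i$ of the central series, under $\e'$-biased $y$ plus $\delta$-rate $k$-wise independent $z$, attenuates the corresponding character sum by a factor of $O(\delta + \e')$ \emph{without} incurring a $\log n$ overhead per level. Establishing this without recursion is where the structure of $p$-group representations becomes essential: their characters take values in a cyclotomic ring of bounded degree depending only on $G$, allowing the induction to close with dimension-independent constants absorbed into $c_G$.
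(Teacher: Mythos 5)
Your proposal has a genuine gap, and your seed-length arithmetic already betrays it. You write $k = c_G\log(1/\e)$ and then claim that $O(k\log n) = c_G\log(n/\e)$; but $O(k\log n) = c_G\log(1/\e)\log n$, which for $\e = 1/n$ is $c_G\log^2 n$ --- exactly the $\log n\log(1/\e)$ barrier the theorem is trying to beat, not the $c\log(n/\e)$ you need. Exact $k$-wise independence over $n$ bits costs $\Theta(k\log n)$ bits of seed, so this cannot be waved away. (One could try $\delta$-\emph{almost} $k$-wise independent biased noise, which the paper does use elsewhere and which does cost only $O(k + \log(1/\delta) + \log\log n)$ bits; but then the analysis no longer literally ``matches fully random Bernoulli noise on the relevant moments,'' and you would have to re-establish the per-level claim under the weaker hypothesis --- which you admit you have not done.)

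The deeper gap is that you never prove the ``per-level lemma'' you flag as the main obstacle, and it is not clear it is true in the form you state. Your sketch argues along the lower central series, pushing commutators down level by level and asking for a $\log n$-free attenuation at each level; this is precisely the kind of iterative argument the paper's abstract announces it \emph{avoids}. The paper takes a different route: Lemma~\ref{lem:p-group-word-polynomial} shows that a length-$n$ program over a $p$-group is a polynomial map of \emph{constant degree} $c_G$ over $\F_p$ (via a normal chain of index-$p$ subgroups plus the Barrington conjugation trick). Lemma~\ref{lem:FK} (Forbes--Kelley plus the group-program Fourier bound from \cite{ReingoldSV13,LPV22}) shows that small-bias plus i.i.d.\ Bernoulli-$(1/p)$ noise $N_p$ fools such programs. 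The noise is then derandomized in one shot: by Lemma~\ref{lem:p-group-word-polynomial}, it suffices for the derandomized $N_p$ to fool constant-degree polynomials over $\F_p$, and Corollary~\ref{cor:N_p-fool-polynomial} (built on Viola's generator, Lemma~\ref{lem:viola-d}) gives exactly such a distribution with seed $c_p(2^d\log(1/\e)+\log n) = c_G\log(n/\e)$, where the constant degree $d$ is what keeps $2^d$ a constant. Your approach omits the polynomial representation entirely, and replacing it with representation theory and a nilpotency-class induction reintroduces the recursive structure whose absence is the theorem's main point. The remark about cyclotomic fields of bounded degree is true but does not supply the missing quantitative step.
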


In fact, the same result holds even for block-products over $p$-groups
with constant block length $w$.

\paragraph{Polynomials and block-products.}
We give a general reduction that ``lifts'' a PRG $P$ for group
products over $G$ to a PRG $P'$ for block-products (and read-once
polynomials)  over $G$.  The reduction applies to any group $G$ that is \emph{mixing}:

\begin{definition}[Mixing groups] \label{def:mixing}
A (finite) group $G$ is \emph{mixing} if for every nontrivial irreducible (unitary) representation $\rho$ and non-identity element $g \in G$, the matrix $\rho(g)$ has no eigenvalue 1.
\end{definition}

\begin{remark}
Our results for mixing groups (\Cref{thm:reduction,thm:fooling-mixing-groups}) apply more generally to fooling \emph{words} over a mixing subset $H$ of a (not necessarily mixing) group $G$.  The property we need is that \Cref{def:mixing} holds for every element in $H$.  There are many examples of mixing subsets of non-mixing groups which generate the entire group $G$.  For example, for $\mathbb{S}_3 = \mathbb{D}_3$, it suffices to exclude the ``flip'' element, i.e. the non-identity element $r$, where $r^2 = 1$.  Moreover, one can have natural examples for infinite groups.  However for simplicity we focus on finite mixing groups.
\end{remark}

We note that mixing groups are exactly the class of \emph{Dedekind groups}.

\begin{definition}
(Finite) \emph{Dedekind groups} are groups of the form $\mathbb{Q}_{8}\times\mathbb{Z}_{2}^{t} \times D$
for any integer $t$ and commutative group $D$ of odd order.  A non-commutative Dedekind group is also called a Hamiltonian group.
\end{definition}

\begin{lemma}[Mixing characterization of Dedekind groups]\label{lem:dedekind-mixing}
  A finite group is mixing if and only if it is Dedekind.
\end{lemma}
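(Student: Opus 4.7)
I handle both directions, using the representation theory of direct products and the classical Dedekind-Baer classification.

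For \emph{Dedekind $\Rightarrow$ mixing}: let $G = \mathbb{Q}_{8} \times \mathbb{Z}_{2}^{t} \times D$ with $|D|$ odd.  Since irreducibles of a direct product are tensor products of irreducibles of the factors, every irreducible of $G$ has the form $\rho = \rho_{1} \otimes \chi_{2} \otimes \chi_{3}$, with $\rho_{1}$ either one of the four 1-dimensional characters of $\mathbb{Q}_{8}$ or its unique faithful 2-dimensional representation $\sigma$, and $\chi_{2}, \chi_{3}$ characters of $\mathbb{Z}_{2}^{t}$ and $D$.  The key case is $\rho_{1} = \sigma$: writing $g = (g_{1}, g_{2}, g_{3})$, we have $\rho(g) = \lambda \cdot \sigma(g_{1})$ with $\lambda := \chi_{2}(g_{2}) \chi_{3}(g_{3})$.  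Since $|D|$ is odd, $\chi_{3}(g_{3})$ is a root of unity of odd order, so $\lambda$ lies in $\{\pm 1\}$ times the odd-order roots of unity, and in particular $\lambda \notin \{+i, -i\}$.  Combined with the fact that $\sigma(g_{1})$ for $g_{1} \ne 1$ has eigenvalues $\{-1, -1\}$ or $\{+i, -i\}$ (never $1$), a short case analysis then shows $\rho(g)$ has no eigenvalue $1$ for non-identity $g$.

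For \emph{mixing $\Rightarrow$ Dedekind}: argue by contrapositive.  If $G$ is non-abelian but not of the Hamiltonian form $\mathbb{Q}_{8} \times E \times D$, some subgroup $H \le G$ fails to be normal, and the induced permutation representation $\mathrm{Ind}_{H}^{G}(\mathbf{1})$ then contains a nontrivial irreducible component with a non-identity element fixing a nonzero vector, i.e., an eigenvalue-$1$ witness.  If $G$ is abelian but its 2-Sylow is not elementary abelian, a non-faithful nontrivial character $\chi$ of $G$ exhibits a non-identity element in $\ker \chi$, again witnessing failure of mixing.

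The main obstacle is the reverse direction: ruling out every non-Dedekind structure by producing an explicit $(\rho, g)$ witness.  This likely requires case analysis on exceptional 2-group families (dihedral $\mathbb{D}_{2^{k}}$, semidihedral groups, generalized quaternions $\mathbb{Q}_{2^{k}}$ for $k \ge 4$) via their character tables, together with a careful induced-representation argument to handle non-normal subgroups in general and to glue the $2$-Sylow and odd-order analyses.
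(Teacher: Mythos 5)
Your proof is incomplete in both directions and takes a much heavier route than the paper's. The paper never invokes the Dedekind--Baer classification. It works with the equivalent characterization that $G$ is Dedekind iff every subgroup (equivalently, every cyclic subgroup $\langle g\rangle$) is normal, and first reformulates mixing as: for every $g\in G$ and irrep $\rho$, the fixed subspace $W_g:=\ker(\rho(g)-I)$ is a subrepresentation. Both directions are then three-line computations. For Dedekind $\Rightarrow$ mixing: if $v\in W_g$ and $h\in G$, normality of $\langle g\rangle$ gives $h^{-1}gh=g^i$, so $\rho(g)\rho(h)v=\rho(h)\rho(g^i)v=\rho(h)v$, hence $W_g$ is $G$-invariant, and irreducibility forces $W_g\in\{0,\,\text{all}\}$. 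For mixing $\Rightarrow$ Dedekind: $\rho(h)W_g\subseteq W_g$ forces $\rho(h^{-1}gh)$ to fix $W_g$ pointwise for every irrep, whence $\langle h^{-1}gh\rangle=\langle g\rangle$ and $\langle g\rangle$ is normal, done. No case analysis of exceptional $2$-groups is needed.

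Your forward direction has a concrete gap: you only analyze $g_1\ne 1$ on the $\mathbb{Q}_8$ factor. When $g_1=1$ you get $\rho(g)=\lambda I$ with $\lambda=\chi_2(g_2)\chi_3(g_3)$, and $\lambda=1$ is possible for $g\ne e$ since $\chi_2,\chi_3$ need not be faithful; similarly $g_1=-1$ with $\lambda=-1$ gives $\rho(g)=I$. In those cases $\rho(g)$ has eigenvalue $1$, and the same already happens for $\mathbb{Q}_8$ alone (a nontrivial $1$-dim character $\chi$ kills $-1$, so $\chi(-1)=1$). What this exposes is that the statement really concerns the subrepresentation formulation used in the paper's proof — which tolerates $\rho(g)=I$ — not the verbatim ``no eigenvalue $1$ for any $g\ne e$'' phrasing. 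Your reverse-direction sketch inherits the same issue and in fact proves too much: for \emph{any} proper nontrivial $H\le G$ (normal or not) and $h\in H\setminus\{e\}$, the action of $h$ on $G/H$ has $\ge 2$ orbits (the fixed coset $H$ plus at least one more), so $h$ has eigenvalue $1$ with multiplicity $\ge 2$ in $\mathrm{Ind}_H^G(\mathbf{1})$; after removing the single trivial component, some nontrivial constituent still gives $h$ eigenvalue $1$. That argument would declare every group with a proper nontrivial subgroup non-mixing. Finally, you acknowledge yourself that the remaining $2$-group case analysis is missing, so the reverse direction is unfinished even setting these issues aside.
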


A proof of \Cref{lem:dedekind-mixing} is in \Cref{sec:proof-dedeking-mixing}.
We can now state our reduction:
\begin{theorem} \label{thm:reduction}
  Let $G$ be a mixing group. Suppose there is
a PRG $P_{1}$ with seed length $s_{1}$ that $\eps$-fools $(\ell,1,3\log(1/\eps))$-products
over $G$. Then there is a PRG that $\bigl(c_G\log(w + \log(\ell/\eps)) \cdot \eps \bigr)$-fools $(\ell,w,\log(1/\eps))$-products
over $G$, with seed length 
\[
c_{G}\bigl(s_{1}+\log(\ell/\eps)+w\bigr)\cdot\log^{c}\bigl(w+\log(n\ell /\eps) \bigr).
\]
\end{theorem}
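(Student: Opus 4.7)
The strategy is to reduce fooling $(\ell, w, \log(1/\eps))$-products over a mixing group $G$ to fooling the single-bit case handled by $P_{1}$, using Fourier analysis on each block and exploiting the mixing hypothesis to control the non-commutative product. By standard representation theory, it suffices to fool, for every nontrivial irreducible representation $\rho$ of $G$, the matrix product $\prod_{i}\rho(g_i)^{f_i(x_{I_i})}$ in operator norm. Writing $\rho(g_i)^{f_i(x_{I_i})} = I + f_i(x_{I_i})\cdot(\rho(g_i)-I)$ and Fourier-expanding each Boolean $f_i$ on $\zo^{I_i}$ gives a sum over tuples $(T_1,\dots,T_\ell)$, with $T_i \subseteq I_i$, of monomials $\prod_i \hat{f}_i(T_i)\,\chi_{T_i}(x_{I_i})$ times a matrix word $\prod_i A_i^{\mathbf{1}[T_i\ne\emptyset]}$ where $A_i = \rho(g_i) - I$. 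The mixing hypothesis says exactly that each $A_i$ is invertible, which will be essential for controlling word norms later.

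Next I would truncate each $f_i$ to Fourier degree $d = O(w + \log(\ell/\eps))$. By Parseval and Cauchy--Schwarz, the tail contributes negligibly once the inner distribution is $k$-wise independent within each block for $k = d + O(1)$. The PRG is then built in three layers: (i) a $k$-wise independent generator inside each block, with seed $O(d\log w + \log\ell)$, correctly reproducing all surviving characters $\chi_{T_i}(x_{I_i})$ with $|T_i|\le d$; (ii) a ``signature'' map extracting from each block the parity of a pseudorandomly chosen subset of positions, turning each block into a single effective bit; and (iii) the PRG $P_1$ applied to the signature bits to fool the resulting $(\ell, 1, q)$-product, where the spill $q = 3\log(1/\eps)$ absorbs the few blocks whose Fourier spectrum is too concentrated for the signature trick to apply -- these blocks are moved into $I_0$. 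A polarizing-random-walk style composition glues the three layers and explains the $\log^{c}(w+\log(n\ell/\eps))$ overhead and the additive form $s_1 + \log(\ell/\eps) + w$ of the seed length.

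The hardest step will be bounding how the approximation errors from the truncation and the layered sampling propagate through the non-commutative product over $\ell$ blocks. Here the mixing property does the real work: because every $A_i$ is invertible and $\rho(g_i)$ has no eigenvalue $1$, a matrix word $\prod_i \rho(g_i)^{\sigma_i}$ with pseudorandom exponents behaves like a rapidly mixing random walk on $\rho(G)$, allowing per-block errors to be aggregated with only a $c_G\log(w+\log(\ell/\eps))$ blow-up rather than $\ell$-fold blow-up. I would quantify this by induction on $\ell$, using the invertibility of each $A_i$ to rewrite partial products in a telescoping form and then union-bounding over irreducible representations (a $c_G$ loss) and over the $O(\log(w+\log(\ell/\eps)))$ layers of the construction. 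The main obstacle is making this telescoping argument tight enough that the final error is only $c_G\log(w+\log(\ell/\eps))\cdot\eps$ rather than $\mathrm{poly}(\ell)\cdot\eps$; I expect this requires a careful choice of signature map together with Steinke-style reordering of the blocks to maximize cancellation in the matrix word.
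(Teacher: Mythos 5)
Your proposal takes a fundamentally different route from the paper, and it has genuine gaps that would prevent it from going through as written.

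The paper's proof iterates a single-step reduction (Theorem~\ref{thm:one-iter}) based on \emph{pseudorandom restrictions} in the Forbes--Kelley and Meka--Reingold--Tal style: it hits the product with $D + T \wedge \cdot$, where $D$ and $T$ are almost $k$-wise independent and $T$ keeps each coordinate alive with small probability. Mixing enters only through Claim~\ref{claim:long-prod-small-bias} / Claim~\ref{claim:bounded-bias}: since every irrep of a mixing group lands in $\calU_\theta(d)$, a product with many non-constant blocks has operator-norm bias bounded away from $1$ in each block, hence exponentially small bias overall. Long products therefore remain negligibly biased even after restriction, and short products shrink in width with high probability. The iteration reduces width to $O(\log\log(1/\eps) + \log m)$ in $O(\log(w + \log(\ell/\eps)))$ rounds, which is exactly where the multiplicative $\log(w + \log(\ell/\eps)) \cdot \eps$ error and the $\log^c(\cdot)$ seed overhead come from. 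Finally the residual $O(1)$-term junta is handled by bounded independence. $P_1$ is applied only to width-1 products produced by the restriction in the long-product branch (Theorem~\ref{thm:width-reduction}, via Lemma~\ref{lem:1bit-product}), not to any ``signature'' of the original blocks.

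Concrete problems with your approach. First, the ``signature map'' that is supposed to turn each block into one effective bit and hand the result to $P_1$ is not actually specified, and it is not clear how it could preserve the structure of an $(\ell,1,q)$-product over $G$: a pseudorandom parity of the bits of $x_{I_i}$ is generically \emph{not} equal to, or a good proxy for, the Boolean value $f_i(x_{I_i})$, so the matrix word you feed $P_1$ is not the matrix word you need to approximate. The paper avoids this by actually restricting $f_i$ to a single live coordinate (with high probability), so the reduction to $P_1$ is exact, not approximate. Second, the Fourier truncation to degree $d = O(w + \log(\ell/\eps))$ is vacuous here: each $f_i$ acts on at most $w$ bits, so it already has Fourier degree at most $w \le d$, and truncation does nothing. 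Third, and most importantly, the step you flag as ``the hardest'' — the telescoping / random-walk argument that supposedly reduces an $\ell$-fold error accumulation to $O(\log(w + \log(\ell/\eps)))$ — is precisely the content that is missing. The operator norm is submultiplicative, so errors from per-block approximations multiply the worst-case bound, not add; to avoid an $\Omega(\ell)$ (or worse) blowup you need an argument that exploits the \emph{joint} structure of the pseudorandom input across blocks, which is what the restriction machinery provides and which invertibility of $\rho(g_i) - I$ alone does not give you. In short, mixing is the right hypothesis but you are using it for the wrong thing: it is needed to bound the \emph{bias} of long restricted words (Claim~\ref{claim:avg-norm} and Claim~\ref{claim:long-prod-small-bias}), not to make $A_i$ invertible so one can ``telescope.'' Without a concrete substitute for the restriction-plus-dichotomy argument, the proposal does not close.
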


Note that if $P_{1}$ has nearly optimal seed length (i.e., $\log(\ell/\e)$
times lower-order terms) then also the final PRG has nearly optimal
seed length (i.e., $w+\log(\ell/\e)$, times lower-order terms).

Applying the reduction (\Cref{thm:reduction}) we obtain near-optimal
PRGs for block products over commutative or Dedekind $2$-groups (in
particular, the quaternions).

\begin{corollary}
\label{cor:intro} Let $G$ be either a commutative group, or a Dedekind
2-group, that is, $G=\mathbb{Q}_{8}\times\mathbb{Z}_{2}^{t}$ for
some $t$. There is an explicit PRG that $\eps$-fools $(\ell,w,0)$-block products over $G$
with seed length $c_{G}(w+\log(\ell/\e))\log^{c}(w+\log(\ell n/\e))$.
\end{corollary}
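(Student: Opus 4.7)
The plan is to derive \Cref{cor:intro} from the lifting reduction \Cref{thm:reduction}. Concretely, the corollary's seed-length bound is exactly what \Cref{thm:reduction} produces when fed a base PRG $P_1$ of seed length $s_1 = \widetilde{O}(\log(n\ell/\eps))$ that $\eps$-fools $(\ell,1,3\log(1/\eps))$-products over $G$; so the task reduces to constructing such a $P_1$ in each case. Given $P_1$, one instantiates \Cref{thm:reduction} with input error $\eps/(c_G\log(w+\log(\ell/\eps)))$ to absorb the logarithmic blow-up factor, and notes that an $(\ell,w,0)$-block product is a special case of an $(\ell,w,\log(1/\eps))$-product (with an empty spill). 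Plugging $s_1 = \widetilde{O}(\log(n\ell/\eps))$ into the bound of \Cref{thm:reduction} then yields the claimed $c_G(w+\log(\ell/\eps))\cdot\log^c(w+\log(n\ell/\eps))$.

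For the commutative case, write $G \cong \prod_j \mathbb{Z}_{m_j}$; it suffices to fool each character by a union bound over the $O(\log |G|)$ characters. The PRG for combinatorial shapes and generalized modular sums of \cite{GKM18}, building on \cite{LRTV09,MZ09}, naturally handles a single spill block (whose arbitrary Boolean function is exactly the ``shape'' of that block) and yields $s_1 = \widetilde{O}(\log(\ell/\eps))$. For the Dedekind $2$-group case, $G = \mathbb{Q}_8 \times \mathbb{Z}_2^t$ is a $2$-group, so \Cref{thm:fooling-p-groups} already provides a PRG for plain $(\ell,1,0)$-products with seed $c_G\log(n/\eps)$; to additionally fool the $3\log(1/\eps)$-bit spill, I would XOR this generator (set to error $\eps^{O(1)}$) with a $\delta$-biased distribution of bias $\delta = \eps^{5/2}$ on $n$ bits, which is $\eps/2$-close to uniform on any $3\log(1/\eps)$-subset of positions and so fools the arbitrary Boolean function $f_0$ on the spill block; both ingredients have seed $\widetilde{O}(\log(n/\eps))$.

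The main obstacle I expect is verifying that the XOR of a small-bias generator with the base product PRG simultaneously fools the spill term and the $\ell$-bit product. This is a Fourier/representation-theoretic argument: expanding over irreducible representations $\rho$ of $G$, one analyzes $\rho(g_0^{f_0(x_{I_0})}) \prod_i \rho(g_i^{x_i})$ and uses mixing (\Cref{def:mixing}, which holds here by \Cref{lem:dedekind-mixing}) so that every nontrivial $\rho(g_i)$ has no eigenvalue $1$ and the required cancellations are obtained when separating spill-block bits from product-term bits. For Dedekind $2$-groups the only non-commutative irreducible representation is the $2$-dimensional one coming from $\mathbb{Q}_8$, which keeps the computation concrete. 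Once $s_1 = \widetilde{O}(\log(n\ell/\eps))$ is established, the remaining bookkeeping is a direct application of \Cref{thm:reduction} as sketched above.
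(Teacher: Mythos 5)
Your overall structure matches the paper's: apply \Cref{thm:reduction} with $P_1$ being the GKM generator~\cite{GopalanKM15} in the commutative case and the generator from \Cref{thm:fooling-p-groups} in the Dedekind-$2$-group case. You also correctly flag that the spill must be handled. The gap is in how you handle it.

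For the $p$-group case, you propose \emph{modifying} the generator by XORing with a fresh $\eps^{5/2}$-biased distribution, and argue only that the spill marginal becomes close to uniform. That is not sufficient: the test is the joint random variable $g_0^{f_0(x_{I_0})}\prod_i g_i^{f_i(x_{I_i})}$, and you must control the distribution of the product \emph{conditioned} on the spill bits, not just the spill bits alone. You acknowledge this obstacle but do not close it. The paper does not modify the generator at all; it observes that the \Cref{thm:fooling-p-groups} generator is already an XOR of small-bias distributions and proves (\Cref{claim:bias-conditioning}) that conditioning a $\delta$-biased distribution on $|S|$ coordinates leaves a $2^{|S|+1}\delta$-biased distribution on the rest. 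Since $|I_0|\le 3\log(1/\eps)$ and the generator is run at bias $(\eps/n)^c$ for large $c$, the conditional distribution still fools the remaining $(\ell,1,0)$-product, and averaging over the spill fixings gives \Cref{cor:prg-p-groups}. Your XOR idea is both unnecessary and unverified; the conditioning lemma is the missing step.

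For the commutative case, ``naturally handles a single spill block'' is too vague. The GKM guarantee (\Cref{lem:gkm-prg}) fools $\langle w,x\rangle$ for $w\in\R^n$ with bounded $\ell_1$ norm; it does not directly apply to an arbitrary Boolean function on $3\log(1/\eps)$ bits composed with a character. The paper (\Cref{claim:gkm-prg}) makes this work by a concrete encoding: it reduces to fooling each one-dimensional irrep, writes the spill as $\omega^{h(x_{I_0})}$, and packs the spill coordinates into a single integer-valued linear form $F(x)=\sum_j a_j x_j + 2^{\lceil\log(m\ell)\rceil}\sum_j 2^{j-1}x_{r_j}$ whose bits separately recover the modular sum and the raw spill bits, with $\sum_i|w_i|\le O(m\ell/\eps^3)$. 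Running GKM at error $O(\eps^3/(m\ell))$ then fools $F$, hence the block product. This encoding is the nontrivial step your sketch elides.

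In short: same high-level route, but both spill arguments need to be replaced by the paper's concrete mechanisms (\Cref{claim:bias-conditioning} for the $p$-group case, the integer-packing argument in \Cref{claim:gkm-prg} for the commutative case).
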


\begin{proof}
We use the reduction (\Cref{thm:reduction}). For commutative groups
we use the PRG in \cite{GopalanKM15} for $P_1$; for Dedekind
2-groups we use our \Cref{thm:fooling-p-groups} for $P_{1}$. Actually,
in both cases the generators were only stated for group products while
we need to handle the spill. The simple modification is in \Cref{sec:prg-spill-product}.
\end{proof}
As remarked earlier, as a consequence of \Cref{cor:intro}, we obtain
PRGs for read-once polynomials over $n$ variables over any finite field 
$\F$ with near-optimal seed length $c_{\F}\log(n/\e)\log^{c}\log(n/\e$).
Again, this was not known even for $\F_3$.

This result is also a step towards handling group programs over more
general groups, for example \emph{nilpotent groups,} which are direct
products of $p$-groups (for different $p$).  Jumping ahead, our techniques imply that generators for such groups follow from generators for (non-read-once) polynomials over composites.

Finally, we give a new generator for products over mixing groups.
\begin{theorem} \label{thm:fooling-mixing-groups}
Let $G$ be a mixing group. There is an explicit PRG $P$ that $\eps$-fools length-$n$ programs
over $G$ with seed length $c_G \log(n/\e)\log(1/\e)$, in any order.
\end{theorem}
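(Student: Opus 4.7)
The plan is a three-step approach: pass to irreducible representations of $G$ to reduce the task to fooling matrix products, translate the mixing hypothesis into a uniform spectral gap for those matrix products, and run a polarizing-walk construction whose number of rounds is controlled by that gap.

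For the first step, by Peter--Weyl / Fourier analysis on $G$, the statistical distance between $f_p(P(U_s))$ and $f_p(U_n)$ is bounded, up to a factor depending only on $|G|$, by $\max_\rho \bigl\|\mathbb{E}[\rho(f_p(P(U_s)))] - \mathbb{E}[\rho(f_p(U_n))]\bigr\|_{\op}$ over nontrivial irreducible representations $\rho$ of $G$. Since $|G|$ is a constant, it suffices to fool each matrix-valued read-once product $F_\rho(x) := \prod_i \rho(g_i)^{x_i}$ in operator norm to error $\e/c_G$.

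For the second step, because $G$ is mixing and finite, $\rho(g)$ has no eigenvalue $1$ whenever $g \neq \mathrm{id}$, and its eigenvalues lie in the finite set of $|G|$-th roots of unity. Hence there is a constant $\gamma_G > 0$ depending only on $G$ such that $\bigl\|(1-p) I + p \rho(g)\bigr\|_{\op} \le 1 - p\gamma_G$ for every $p \in [0, 1/2]$ and every $g \neq \mathrm{id}$, so $\mathbb{E}_{U_p}[F_\rho(X)]$ decays geometrically in the number of nontrivial factors.

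For the third step I would apply the polarizing random walks of \cite{DBLP:journals/toc/ChattopadhyayHH19}, using $\delta$-biased distributions with $\delta = \mathrm{poly}(\e/n)$ as fractional PRGs. The spectral gap lets each round contract the operator-norm error between the pseudorandom and truly-uniform expectations by a constant factor depending only on $G$, so $T = c_G \log(1/\e)$ rounds suffice. Each round uses $O(\log(n/\e))$ random bits, giving seed length $c_G \log(n/\e) \log(1/\e)$; the any-order property is automatic since small-bias distributions are symmetric under permutations of the bit positions.

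The main obstacle is the per-round contraction estimate: showing that a single $\delta$-biased step actually shrinks the operator-norm error by a constant factor. A naive $L_1$-Fourier bound on $F_\rho$ does not suffice because the sum of matrix-valued Fourier coefficient norms can be exponentially large in $n$. Instead, the argument should be structural, propagating the spectral gap of the second step through the matrix-product expectation, for instance via an $L_2$-type potential or a telescoping matrix-norm induction that combines contraction at each factor with the pseudorandom guarantee of the step distribution.
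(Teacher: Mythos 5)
Your steps 1 and 2 match the paper: reduce to fooling each irreducible representation in operator norm via \Cref{claim:closeness}, and observe (via the eigenvalue gap of \Cref{claim:avg-norm}) that averaging over one uniform bit contracts $\norm{\cdot}_\op$ by a constant $\gamma_G$ whenever the group element is non-identity. But your step 3 is not a proof of the theorem, and in fact it is precisely the approach the theorem is meant to improve upon. The paper states explicitly that instantiating Polarizing Random Walks \cite{DBLP:journals/toc/ChattopadhyayHH19} with the relevant Fourier bounds \cite{ReingoldSV13,LPV22} already yields seed length $c_{G}\log(n/\eps)\log(\eps^{-1}\log n)$ for any group; the whole point of \Cref{thm:fooling-mixing-groups} is to shave the $\log\log n$ factor down to $\log(1/\eps)$ for mixing groups (which, at constant error, is the difference between $O(\log n \log\log n)$ and the optimal $O(\log n)$). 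In the PRW framework the number of rounds is governed by the fractional-PRG amplitude, which for small-bias distributions is tied to $n$-dependent Fourier-growth bounds, and that dependence is exactly where the extra $\log\log n$ comes from. You acknowledge this obstacle (``a naive $L_1$-Fourier bound\dots does not suffice'') but the suggested fix --- an unspecified $L_2$-potential or telescoping induction --- is not carried out, and it is not clear it can be: your own step 2 only gives contraction under a \emph{uniform} bit, not under a $\delta$-biased step, and converting one to the other is where the loss occurs.

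The paper's actual argument abandons the iterative/polarization paradigm entirely. After handling the easy case $t \le c_G\log(1/\eps)$ (where $t$ is the number of non-identity coordinates under $\rho$) with an almost $O(\log(1/\eps))$-wise uniform XOR, it makes $\log n$ parallel ``guesses'' $2^i\ell$ of $t$. For each guess it subsamples positions at rate $2^{-i}$ with a bounded-independence hash $h_i$, routes the surviving positions into $O(\log(1/\eps))$ buckets with a second hash $h$, and feeds fresh uniform bits into the buckets; the final generator is the XOR over all guesses. For the guess that bracket the true $t$, with probability $1-O(\eps)$ a constant fraction of the $O(\log(1/\eps))$ buckets receive exactly one surviving non-identity coordinate, and those coordinates receive truly uniform bits, so the contraction estimate from step 2 applies directly to give operator-norm bias $\le \eps$ in one shot. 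Because the hashes $h_0,\dots,h_{\log n-1}$ share a seed of length $O_G(\log n\log(1/\eps))$ and $h$ uses $O_G(\log(n/\eps)\log(1/\eps))$ bits, the total seed is $O_G(\log(n/\eps)\log(1/\eps))$ with no iteration and hence no $\log\log n$ loss. This hashing-and-bucketing construction is the key idea missing from your proposal.
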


The parameter improvement over previous work appears tiny: As remarked
earlier, \cite{DBLP:journals/toc/ChattopadhyayHH19} gives seed length
$c_{G}\log(n/\e)\log(\e^{-1}\log n)$, and moreover for any $G$.
Still, for constant error we obtain optimal seed length which was
known only in the fixed-order case (cf.~\cite{DBLP:journals/eccc/Steinke12}).
Also note that mixing groups of the form $\mathbb{Q}_{8}\times\mathbb{Z}_{2}^{t}$
(i.e., $m=1$) are $2$-groups, for which we gave optimal seed length
in \Cref{thm:fooling-p-groups}. But the techniques there do not even
apply to the commutative (mixing) group $\mathbb{Z}_{2}\times\mathbb{Z}_{3}$.

Our main interest in this result is that its proof is different from
previous work: it showcases how we can use information on the representation
theory to improve the parameters, pointing to several open problems.

\subsection{Future directions and open problems}

This work suggests that the difficulty of handling more general classes
of groups lies in composite moduli. For example, we do not have new
generators for $\mathbb{D}_{3}= \mathbb{S}_{3}$, a group of order $6$, even though
we have optimal seed length for $\mathbb{D}_{n}$ when $n$ is a power of two.  Thus, a challenge  emerging from this work is to improve the seed length over any non-commutative group of composite order.  Again, $\mathbb{S}_{3}$ is an obvious candidate, which is equivalent to fooling
width-3 permutation ROBPs. But other groups could be easier to handle,
for example Dedekind groups or the direct product
of a $p$-group and a $q$-group where $p\ne q$ are primes.

Also, the techniques in this paper point to several other questions.
Can we extend our reduction to block products where instead of $g^{f}$ for Boolean $f$ we more generally have $g^f$ replaced by a function with range $G$?  For what other groups can we exploit representation theory to obtain better PRGs?

\section{Proof of \Cref{thm:fooling-p-groups}}

We use the fact that programs over $p$-groups can be written as polynomials.
Elements in a group of order $p^{k}$ will be written as $k$-tuples
over $\mathbb{F}_{p}$.

\begin{lemma} \label{lem:p-group-word-polynomial}Let $G$ be a group of order $p^{k}$,
and $k$ an integer. There is a 1-1 correspondence between $G$ and $\F_p^k$ and a polynomial map $f=(f_1,\ldots,f_k) \colon (\F_p^k)^n \to \F_p^k$
over $\mathbb{F}_{p}$ where the $f_{i}\colon (\F_p^k)^n \to \F_p$ have degree $c_{G}$
such that for any $\overline{g}:=(g_1,g_1,\ldots,g_n) \in G^n$ and $x \in \zo^n$, we have
\[
  \prod_{i=1}^n g_i^{x_i} = \bigl(f_1(\overline{g}),f_{2}(\overline{g}),\ldots,f_k(\overline{g})\bigr)(x_1, \ldots, x_n) .
\]
\end{lemma}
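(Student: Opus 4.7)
The plan is to work inside the group algebra $\F_p[G]$ and reduce everything to the classical fact that, for a finite $p$-group $G$, the augmentation ideal of $\F_p[G]$ is nilpotent of index bounded by a constant depending only on $G$.

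First I would fix any convenient bijection $\phi \colon G \to \F_p^k$; one can take the polycyclic normal form coming from a composition series $1 = G_0 \lhd G_1 \lhd \cdots \lhd G_k = G$ with successive quotients $\F_p$, but the choice does not matter for what follows. The starting observation is the identity $g^{x} = 1 + x(g - 1)$ in $\F_p[G]$, valid because $x \in \zo$. Multiplying over $i$ and expanding, one has in $\F_p[G]$
\[
\prod_{i=1}^n g_i^{x_i}
\;=\; \prod_{i=1}^n \bigl(1 + x_i(g_i - 1)\bigr)
\;=\; \sum_{S \subseteq [n]} \Bigl(\prod_{i \in S} x_i\Bigr)\, \prod_{i \in S}^{\mathrm{ord}}(g_i - 1),
\]
where $\prod^{\mathrm{ord}}$ denotes the ordered product along $1 < 2 < \cdots < n$.

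Next I use that every factor $(g_i - 1)$ lies in the augmentation ideal $I = \ker\bigl(\F_p[G] \to \F_p\bigr)$, and that $I$ is nilpotent of some index $d = d(G) \le |G|$. All terms with $|S| \ge d$ then vanish, so what survives is a polynomial in $x_1, \ldots, x_n$ of degree at most $d - 1 \le c_G$ whose coefficients lie in $\F_p[G]$ and depend only on $\overline{g}$. To extract the $j$-th coordinate of $\phi\bigl(\prod_i g_i^{x_i}\bigr) \in \F_p^k$, I apply the $\F_p$-linear functional $\lambda_j \colon \F_p[G] \to \F_p$ defined by $\lambda_j\bigl(\sum_h c_h\, h\bigr) = \sum_h c_h\, \phi_j(h)$, which agrees with $\phi_j$ on group basis elements. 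Taking $f_j$ to be the resulting polynomial in $x$ (with coefficients determined by $\overline{g}$) gives the desired degree-$c_G$ expression for $\prod_i g_i^{x_i}$ in the $\F_p^k$-encoding.

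The only step that is not purely formal is the nilpotence $I^d = 0$, which is the part I would treat as the main obstacle, though it is classical. A short argument: any nonzero $\F_p[G]$-module $V$ has $V^G \ne 0$, because every $G$-orbit on $V \setminus V^G$ has $p$-power size at least $p$, forcing $|V^G| \equiv |V| \equiv 0 \pmod p$ and hence $|V^G| \ge p$. Consequently every simple $\F_p[G]$-module is the trivial module $\F_p$, so $\F_p[G]$ is local with unique maximal ideal $I$, and $I$ is nilpotent as the Jacobson radical of a finite-dimensional $\F_p$-algebra. The crude bound $d \le |G|$ suffices for our purposes, and Jennings' theorem sharpens it to $d \le k(p-1) + 1$.
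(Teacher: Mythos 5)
Your proof is correct, but it takes a genuinely different route from the paper's. The paper argues by induction on $|G|$: it takes a normal subgroup $H$ of index $p$, writes each $g_i = a^{e_i}h_i$ with $h_i \in H$, and uses the Barrington conjugation trick to rewrite $\prod g_i^{x_i}$ as $\bigl(\prod_i b_i h_i^{x_i} b_i^{-1}\bigr) b_n$ with $b_i = a^{\sum_{j\le i}e_j x_j}$; normality of $H$ keeps the conjugates inside $H$, and the recursion then bottoms out at the cyclic case. Your argument instead works once and for all inside $\F_p[G]$, expanding $\prod_i\bigl(1 + x_i(g_i-1)\bigr)$ and truncating via nilpotence of the augmentation ideal — with the nilpotence itself established through the clean chain: $V^G \ne 0$ for nonzero modules by the class equation, hence $\F_p$ is the unique simple module, hence $\F_p[G]$ is local with $J(\F_p[G]) = I$, hence $I$ is nilpotent by Artinianness. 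Both arguments are sound; your group-algebra route is more conceptual and, via Jennings' theorem, even yields sharper explicit control over the degree, while the paper's composition-series route is more elementary (it needs only that a $p$-group has a normal subgroup of index $p$) and more explicitly constructive. One minor point worth making explicit if you write this up: the lemma also implicitly wants the coefficients of the polynomial in $x$ to depend polynomially (of degree $c_G$) on the $\F_p^k$-encodings of the $g_i$. In your argument the coefficient of $\prod_{i\in S}x_i$ is $\lambda_j\bigl(\prod^{\mathrm{ord}}_{i\in S}(g_i-1)\bigr)$, which depends multilinearly on the group-algebra indicators of the $g_i$; since every $\F_p$-valued function on $\F_p^k$ is a polynomial of degree at most $k(p-1)$ in the encoding coordinates, and $|S| < d$, the $\bar{g}$-degree is at most $(d-1)k(p-1) = c_G$, so this is fine — but it is a step worth stating rather than leaving implicit.
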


This lemma is essentially in the previous work \cite{DBLP:journals/eccc/ECCC-TR01-040}.
However the statement there is for nilpotent groups and cannot be
immediately used. Also, the proof relies on previous work and is somewhat
indirect. So we give a direct proof of the result we need (i.e., \Cref{lem:p-group-word-polynomial}).

Before the proof we illustrate it via an example.
\begin{example}
Let $G := \mathbb{Z}_{2} \wr \mathbb{Z}_{2}$ from the introduction.
Consider a product $\prod_i (a_i,b_i;z_i)$.  Via a polynomial map we can rewrite this product into a normal form where all the $z_i$ are in one element only:
\[
\Bigl( \prod_i (a'_i,b'_i;0) \Bigr) (0,0;z').
\]
Computing this product is then immediate, via a linear map.  The key observation is that $a'_i = a_i$ if the sum of the $z_j$ with $j<i$ is even, and $a'_i = b_i$ otherwise, and that this computation is a quadratic polynomial (in the input bits $a_i,b_i,z_i$).  
\end{example}

\begin{proof}[Proof of \Cref{lem:p-group-word-polynomial}]
We proceed by induction on $k$.
If $k=1$, then $G$ is cyclic.
We can take a generator $a \in G$ and define the 1-1 mapping $G \ni a^z \leftrightarrow z \in \F_p$.
So $\prod_{i=1}^n g_i^{x_i} = \prod_{i=1}^n a^{z_i x_i}$ can be written as the degree-1 polynomial $\sum_{i=1}^n z_i x_i$.

Otherwise, $G$ has a normal subgroup $H$ of order
$p^{k-1}$~\cite[Chapter~6, Theorem~1.(3)]{dummitfoote}. The corresponding
quotient group $Q = G/H$ has order $p$ and is therefore cyclic.
So we can write $g_i \in G$ as
\[
g_{i}=a^{e_{i}}h_{i}
\]
where $h_{i}\in H$ and $a$ is a generator of $Q$.
Applying the induction hypothesis on $H$, we can identify each element $g_i = a^{e_i} h_i$ with a $k$-tuple $(e_i, e_i') \in \F_p^k$, where $e_i' \in \F_p^{k-1}$ corresponds to $h_i \in H$.

Now we apply the conjugation trick as in \cite{Barrington89}, and use induction. That is, let $b_{i}:=a^{\sum_{j\le i} e_j x_j}$ and write
\[
  \prod_{i=1}^n g_i^{x_i}
  = \Bigl(\prod_{i=1}^n \left(b_i h_i^{x_i} b_i^{-1}\right)\Bigr)b_n.
\]
Note that $b_{i}$ and $b_i^{-1}$ can be computed by some degree-1 polynomials over $\F_p$, and $h_i^{x_i}$ can be (trivially) computed by a degree-$c_H$ polynomial over $\F_p$.

Therefore, each term $b_i h_i^{x_i} b_i^{-1}$ can be computed by some degree-$c_G$ polynomial map $f^{H,i} = (f_1, \ldots, f_k)$ over $\mathbb{F}_{p}$.
Moreover, these terms lie in $H$ because $H$ is a normal subgroup of $G$. Hence we have reduced to a product over
$H$, which by induction hypothesis, can be computed by some degree-$c_H$ polynomial, and the result follows.
\end{proof}

Given \Cref{lem:p-group-word-polynomial}, it suffices to construct a \emph{bit}-generator that fools low-degree polynomials over $\F_p$.

\paragraph{The case $p=2$.}
For this case, we can simply combine \Cref{lem:p-group-word-polynomial}
with known generators for polynomials over $\mathbb{F}_{2}$ \cite{BoV-gen,Lov09,Viola-d}.  In fact, we obtain results for non-read-once programs
as well, and of any length. (Indeed, such polynomials are equivalent
to low-degree polynomials over $\mathbb{F}_{2}$.)

\paragraph{The case $p>2$.}
Here we need additional ideas because \emph{bit}-generators that fool polynomials over
$\mathbb{F}_{q}$ with $q\ne2$ are not known. However, the works
\cite{BoV-gen,Lov09,Viola-d} do give generators that output \emph{field elements} that fool such polynomials.

\begin{lemma}[\cite{Viola-d}]
\label{lem:viola-d}There are distributions $Y$ over $\mathbb{F}_{p}^{n}$
that can be explicitly sampled from a uniform seed of $c_{p}(2^{d} \log(1/\e) +\log n)$
bits such that for any degree-$d$ polynomial $f$ in $n$ variables over
$\mathbb{F}_{p}$, we have $\Delta(f(Y),f(U))\le\e$.
\end{lemma}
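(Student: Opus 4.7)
The plan is to build $Y$ as the sum of $d$ independent, identically distributed copies of an $\epsilon_0$-biased distribution over $\mathbb{F}_p^n$, following the ``sum of small-bias fools low degree'' paradigm. Explicit $\epsilon_0$-biased distributions over $\mathbb{F}_p^n$ (e.g.\ via Reed--Solomon $\circ$ Hadamard, generalized from the $\mathbb{F}_2$ case) sample from $c_p(\log n + \log(1/\epsilon_0))$ bits, so once we pin down the required bias $\epsilon_0$ as a function of $\epsilon$ and $d$, the seed length will fall out. Concretely, I will set $Y = X_1 + X_2 + \cdots + X_d$ with $X_i$ i.i.d.\ $\epsilon_0$-biased, and prove that for every nontrivial additive character $\chi$ of $\mathbb{F}_p$ and every degree-$d$ polynomial $f \colon \mathbb{F}_p^n \to \mathbb{F}_p$, one has $|\mathbb{E}[\chi(f(Y))]| \le C_d \cdot \epsilon_0^{1/2^{d-1}}$, where $C_d$ depends only on $p$ and $d$. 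By summing over all $p-1$ nontrivial characters and a standard Fourier inversion, this yields the desired statistical-distance bound on $f(Y)$.

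The core is an induction on $d$. The base case $d=1$ is the defining property of small bias: $\chi \circ f$ is itself a nontrivial character of $(\mathbb{F}_p^n, +)$, and the sum of additional independent $X_i$ only helps. For the inductive step I will use the Weyl/Cauchy--Schwarz squaring trick: write
\[
  |\mathbb{E}[\chi(f(X_1 + \cdots + X_d))]|^2 \le \mathbb{E}_{X_d, X_d'}\Bigl[ \bigl| \mathbb{E}_{X_1, \ldots, X_{d-1}}\bigl[ \chi\bigl( f_{X_d,X_d'}(X_1 + \cdots + X_{d-1}) \bigr) \bigr] \bigr| \Bigr],
\]
where $f_{X_d,X_d'}(x) := f(x + X_d) - f(x + X_d')$ is, for each outcome of the last two variables, a polynomial of degree $d-1$ in $x$. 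Applying the inductive hypothesis to the inner expectation (with $d-1$ in place of $d$) and taking square roots gives the desired estimate with the exponent $1/2^{d-1}$. To conclude, I set $\epsilon_0 \approx (\epsilon/C_d)^{2^{d-1}}$, which makes the per-copy seed length $c_p(\log n + 2^{d-1}\log(1/\epsilon))$; concatenating the $d$ seeds and folding $d$-dependent factors into the overall constant yields the claimed $c_p(2^d \log(1/\epsilon) + \log n)$ bits.

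The main obstacle I anticipate is the inductive step: one must verify that the ``derivative'' $f(x + X_d) - f(x + X_d')$ genuinely has degree at most $d-1$ in $x$ (this is where degree is used; it is standard but requires care in characteristic $p$ when expanding monomials) and that the inductive hypothesis can be invoked with absolute values inside the expectation, which is why the iterated square roots produce the $2^{d-1}$ exponent. A secondary but real issue is sharpening the seed length to avoid an extra factor of $d$ multiplying $\log n$: the naive concatenation gives $d \log n$, so to land on $c_p \log n$ one either absorbs $d$ into the bias parameter of a \emph{single} small-bias string over $\mathbb{F}_p^{dn}$ whose $d$ equal-length blocks are then used in place of the independent $X_i$'s (using that small bias on the joint space implies small bias on each block and on differences), or simply uses a construction of $\epsilon_0$-biased distributions whose $\log n$ overhead does not scale with $d$. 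Either way the representation-theoretic ingredients (characters of $\mathbb{F}_p$) and the derivative/Cauchy--Schwarz engine are the heart of the argument.
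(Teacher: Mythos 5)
Your high-level construction matches Viola's: take $Y = X_1 + \cdots + X_d$ with the $X_i$ i.i.d.\ $\epsilon_0$-biased over $\mathbb{F}_p^n$, bound character sums of degree-$d$ polynomials, and finish by Fourier inversion. But the core inductive claim you propose is false, and patching it is exactly where the real work of \cite{Viola-d} lives.

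You assert $|\mathbb{E}[\chi(f(Y))]| \le C_d\,\epsilon_0^{1/2^{d-1}}$ for every degree-$d$ polynomial $f$ and nontrivial character $\chi$. This cannot hold: a degree-$d$ polynomial can be highly biased. Already $f \equiv 0$ gives $|\mathbb{E}[\chi(f(Y))]| = 1$, and a non-degenerate example such as $f(x)=x_1x_2$ over $\mathbb{F}_2$ has bias $1/2$ under the uniform distribution, hence also under any good pseudorandom $Y$. The quantity that the lemma needs controlled (and that feeds into the Fourier-inversion step you invoke) is the \emph{difference} $|\mathbb{E}[\chi(f(Y))] - \mathbb{E}[\chi(f(U))]|$, not the absolute bias. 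Your own base case betrays the issue: you say ``$\chi\circ f$ is itself a nontrivial character,'' which presumes $f$ non-constant, but the derivatives $f(\cdot+X_d)-f(\cdot+X_d')$ arising in the inductive step can be constant or biased, so the inductive hypothesis as you state it is not available when you need it.

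This is not a cosmetic slip that a simple rewording fixes. If one reformulates the induction in terms of differences, the Cauchy--Schwarz step gives
\[
|\mathbb{E}[\chi(f(Y_d))]|^2 \;\le\; \mathbb{E}_{X_d,X_d'}\,\mathbb{E}_{Y_{d-1}}\bigl[\chi\bigl(f(Y_{d-1}+X_d)-f(Y_{d-1}+X_d')\bigr)\bigr],
\]
and after applying the difference-form inductive hypothesis to the inner expectation one is left needing to compare $\mathbb{E}_{Z}[g(Z)]$ with $\mathbb{E}_{H}[g(H)]$, where $Z := X_d - X_d'$ is small-bias, $H$ is uniform, and $g(z) := \mathbb{E}_U[\chi(f(U+z)-f(U))]$. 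The function $g$ is an averaged exponential sum, not a bounded-degree polynomial in $z$, so the small-bias property of $Z$ does not directly control $\mathbb{E}_Z[g(Z)]-\mathbb{E}_H[g(H)]$. Bridging precisely this gap is the substantive content of Viola's argument; the derivative/Cauchy--Schwarz engine and the $\epsilon_0^{1/2^{d-1}}$ exponent you identify are the right scaffolding, but the proposal as written omits the key step that makes the comparison to the uniform distribution go through.
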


However, we need distributions over $\zo^n$. This distinction is critical
and arises in a number of previous works. Currently, for domain $\zo^n$
only weaker results with seed length $\ge\log^{2}n$ are known \cite{LovettMS10}.

Still, as pointed out in \cite{LovettRTV09,MekaZ09}, \Cref{lem:viola-d}
implies results over the domain $\F_2^n$ for \emph{biased bits:}

\begin{definition}
We denote by $N_p$ a vector of $n$ i.i.d.~bits
coming up $1$ with probability $1/p$.
\end{definition}

\begin{corollary}[\cite{LovettRTV09,MekaZ09}] \label{cor:N_p-fool-polynomial}
  There are distributions $X$ over $\zo^{n}$ that can be explicitly sampled
  from a uniform seed of $c_{p}(2^{d}\log(1/\eps)+\log n)$ bits such that for
any degree-$d$ polynomial $f$ in $n$ variables over $\mathbb{F}_{p}$
we have $\Delta(f(X),f(N_p))\le\e$.
\end{corollary}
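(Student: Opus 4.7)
The plan is to reduce fooling $f(N_p)$ for a degree-$d$ polynomial $f \colon \zo^n \to \F_p$ to fooling a higher-degree polynomial over the larger domain $\F_p^n$, where \Cref{lem:viola-d} directly applies.  The key observation is that if $Y \in \F_p$ is uniform then the indicator $\chi(Y) := [Y = 0]$ is a Bernoulli$(1/p)$ bit, and by Fermat's little theorem $\chi(y) = 1 - y^{p-1}$ is an $\F_p$-polynomial of degree $p-1$.  Hence for $U$ uniform over $\F_p^n$, the coordinate-wise application $(\chi(U_1), \ldots, \chi(U_n))$ is distributed exactly as $N_p$.

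Given $f$ of degree $d$, I would then define $g(y_1, \ldots, y_n) := f(\chi(y_1), \ldots, \chi(y_n))$, a polynomial over $\F_p$ of degree at most $d(p-1)$, and invoke \Cref{lem:viola-d} with this degree and error $\e$.  This yields an explicit distribution $Y'$ over $\F_p^n$, sampled from $c_p(2^{d(p-1)} \log(1/\e) + \log n)$ bits, that fools every degree-$d(p-1)$ polynomial over $\F_p$.  Setting $X := (\chi(Y'_1), \ldots, \chi(Y'_n))$, a $\zo^n$-valued explicit function of the same seed, yields $f(X) = g(Y')$ and $f(N_p) = g(U)$ in distribution, so $\Delta(f(X), f(N_p)) = \Delta(g(Y'), g(U)) \le \e$.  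Since the construction of $X$ does not depend on $f$, the same distribution simultaneously fools every degree-$d$ polynomial, and the seed length $c_p(2^{d(p-1)}\log(1/\e) + \log n)$ matches the claimed bound after absorbing the $p$-dependent factor in the exponent into the constant $c_p$ (harmless since $p$ is fixed).  There is no real obstacle here: this is the standard substitution trick behind \cite{LovettRTV09,MekaZ09}, and the only thing to watch is the degree blow-up from $d$ to $d(p-1)$ incurred by encoding a biased bit as a function of a uniform field element.
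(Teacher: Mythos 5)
Your proposal is correct and follows essentially the same route as the paper: apply \Cref{lem:viola-d} at degree $d(p-1)$ and push the resulting $\F_p$-valued distribution coordinate-wise through a degree-$(p-1)$ indicator polynomial to obtain a $\zo^n$-valued distribution that simulates $N_p$ in this context. The only cosmetic difference is that the paper uses $y \mapsto y^{p-1}$ while you use $\chi(y) = 1 - y^{p-1}$ (the latter actually matches the stated Bernoulli$(1/p)$ convention for $N_p$ more literally), but the two maps are interchangeable here and the argument is otherwise identical.
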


\begin{proof}
Let $Y = (Y_1, Y_2, \ldots, Y_n)$ be the distribution from \Cref{lem:viola-d},
for degree $d(p-1)$. Define $X:=(Y_{1}^{p-1},Y_{2}^{p-1},\ldots,Y_{n}^{p-1})$.
Note $X$ is over $\zo^{n}$. Also, if $U$ is uniform in $\mathbb{F}_{p}$
then $U^{p-1}=N_{p}$. The result follows.
\end{proof}
We will show how to use biased bits. For this we use that the program
is read-once.
\begin{lemma}
\label{lem:FK}Let $X$ fool degree-1 polynomials over $\mathbb{F}_{2}$
with error $\e^{c_{G,p}}$. Then $X+N_{p}$ fools programs of length $n$ over $G$
with error $\e$.
\end{lemma}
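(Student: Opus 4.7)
I would use Fourier analysis on the group $G$ to reduce to controlling a single irreducible representation, factorize the noise-smoothed version coordinate-wise using read-onceness, and then bound the resulting Fourier $L_1$ norm using the $p$-group structure. In more detail: by (possibly non-abelian) Fourier analysis on $G$, to show $\mathrm{SD}\bigl(\prod_i g_i^{(X+N_p)_i}, \prod_i g_i^{U_i}\bigr)\le\eps$ it suffices to show
\[
  \Bigl\lVert \mathbb{E}\bigl[\pi\bigl(\textstyle\prod_i g_i^{(X+N_p)_i}\bigr)\bigr] - \mathbb{E}_U\bigl[\pi\bigl(\textstyle\prod_i g_i^{U_i}\bigr)\bigr]\Bigr\rVert_{\op} \le \eps/|G|^{O(1)}
\]
for every nontrivial irreducible representation $\pi$ of $G$. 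Since $\pi$ is a homomorphism and the program is read-once, $\pi\bigl(\prod_i g_i^{x_i}\bigr) = \prod_i A_i^{x_i}$ where $A_i := \pi(g_i)$ is a unitary matrix.

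By independence of the coordinates of $N_p$, the noise-smoothed matrix-valued function factorizes coordinate-wise: $\mathbb{E}_{N_p}\bigl[\prod_i A_i^{(y+N_p)_i}\bigr] = \prod_i M_i(y_i)$, where $M_i(y_i) := \tfrac{p-1}{p}\, A_i^{y_i} + \tfrac{1}{p}\, A_i^{1-y_i}$. Writing $M_i(y_i) = \bar M_i + \chi_{\{i\}}(y)\,\tilde M_i$ with $\bar M_i := (I+A_i)/2$ and $\tilde M_i := (1-2/p)(I-A_i)/2$, and expanding the ordered product, one obtains a matrix-valued Fourier expansion $\Phi(y) := \prod_i M_i(y_i) = \sum_S \chi_S(y)\,\hat\Phi(S)$ in which $\hat\Phi(S)$ is an ordered matrix product containing $\tilde M_i$ for $i\in S$ and $\bar M_i$ for $i\notin S$. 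Using that $X$ is $\eps^{c_{G,p}}$-biased (fools $\F_2$ parities),
\[
  \bigl\lVert \mathbb{E}_X[\Phi(X)] - \mathbb{E}_U[\Phi(U)]\bigr\rVert_{\op} \le \eps^{c_{G,p}} \sum_{S\ne\emptyset} \lVert\hat\Phi(S)\rVert_{\op},
\]
and the $\tilde M_i$ factors contribute a $(1-2/p)^{|S|}$ dampening to each summand.

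The main obstacle is controlling the ordered product of $\bar M_i = (I+A_i)/2$'s for $i\notin S$: a naive submultiplicative bound gives only $\lVert\bar M_i\rVert_{\op}\le 1$ per factor and hence an unacceptable $(2-2/p)^n$ total blowup. I would overcome this by exploiting that, for $G$ a $p$-group, all eigenvalues of $A_i$ are $p^j$-th roots of unity, so each $\bar M_i$ is a strict contraction on the nontrivial isotypic components of $\pi$ (the only eigenvalue-$1$ subspace is the trivial isotypic component, on which $\tilde M_i$ vanishes and does not contribute to nonempty $S$). Combined with the polynomial slack afforded by the exponent $c_{G,p}$ (which lets $\eps^{c_{G,p}}$ absorb factors polynomial in $|G|$ and $1/\eps$), this should yield the required Fourier $L_1$ bound and complete the proof.
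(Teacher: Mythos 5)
Your proof takes a genuinely different route from the paper. The paper's proof is a two-line citation: apply Lemma~7.2 of Forbes--Kelley (small-bias plus noise fools functions with bounded Fourier growth) together with the Fourier-growth bound $\sum_{|S|=k}|\hat f(S)| \le c_G^k$ for group programs from \cite{ReingoldSV13,LPV22}, adjusting the noise rate from $1/4$ to $1/p$. That chain works for \emph{any} finite group $G$, as the lemma statement requires. You instead try to reprove the underlying Fourier bound from scratch by decomposing the noise-smoothed product representation-by-representation.

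The decomposition you set up is correct: writing $M_i(y_i)=\bar M_i + \chi_{\{i\}}(y)\tilde M_i$ with $\bar M_i=(I+A_i)/2$ and $\tilde M_i=(1-2/p)(I-A_i)/2$, expanding the ordered product, and invoking small-bias is exactly the right shape. The gap is in the claimed bound on $\sum_{S\neq\emptyset}\lVert\hat\Phi(S)\rVert_\op$. You assert that, for a $p$-group, each $\bar M_i$ is a strict contraction on the nontrivial irrep $\pi$ because the eigenvalue-$1$ subspace of $A_i=\pi(g_i)$ ``is the trivial isotypic component.'' This is false: $\pi$ is irreducible and has no isotypic decomposition, and more to the point $\pi(g_i)$ can have eigenvalue $1$ inside a nontrivial irrep even for a $p$-group. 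Concretely, $\mathbb{D}_4$ is a $2$-group; in its unique $2$-dimensional irrep, a reflection is sent to $\begin{pmatrix}0&1\\1&0\end{pmatrix}$, which has eigenvalue $1$. So $\lVert\bar M_i\rVert_\op=1$ with a nontrivial $1$-eigenspace, and since these $1$-eigenspaces vary with $i$, the ordered product of $\bar M_j$'s between two $\tilde M$-factors carries no dampening. Without that, the sum $\sum_S (1-2/p)^{|S|}$ is $\approx (2-2/p)^n$ and the argument fails. (Note also that the lemma is stated for arbitrary $G$, not only $p$-groups; the $p$-group structure is used later, to convert the program into an $\F_p$-polynomial so that $N_p$ can be derandomized, not here.) The missing ingredient is precisely the level-wise Fourier growth bound $\sum_{|S|=k}\lVert\hat\Phi(S)\rVert_\op \le c_G^k$ for permutation branching programs, whose proof in \cite{ReingoldSV13,LPV22} is nontrivial and does not follow from per-factor contraction estimates; the paper avoids reproving it by citing it directly.
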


\begin{proof}
This follows from Lemma 7.2 in \cite{ForbesKelley-2018} combined
with the Fourier bound in \cite{ReingoldSV13,LPV22}.
The proof in \cite{ForbesKelley-2018}
is for the fixed noise parameter $p=4$, but the generalization
to any $p$ is immediate (replace $1/2$ with $1-2/p$ in the last
two lines of the proof).
\end{proof}
We now have all the ingredients.
\begin{proof}[Proof of \Cref{thm:fooling-p-groups}.]
 Use \Cref{lem:FK}. Averaging over $X$, it suffices to derandomize
$N_{p}$. By \Cref{lem:p-group-word-polynomial} it suffices to do this
for low-degree polynomials. This follows from \Cref{cor:N_p-fool-polynomial}.
\end{proof}

\section{Representation theory and matrix analysis}

In this section, we present the fragment of representation theory and matrix analysis that we need. The books by Serre~\cite{Serre77}, Diaconis~\cite{MR964069}, and Terras~\cite{MR1695775} are good references for representation theory and
non-commutative Fourier analysis. The Barbados notes~\cite{Wigderson-Barbados10}, \cite[Section~13]{MR3584096}, \cite{GowersV-group-mix}, or \cite{DLV-group-bu} provide briefer introductions.

\paragraph{Matrices.}
Let $M$ be a square complex matrix. We denote by $\tr(M)$
the trace of $M$, by $\overline{M}$ the conjugate of $M$, by $M^{T}$
the transpose of $M$, and by $M^{*}$ the conjugate transpose $\overline{M^{T}}$
(aka adjoint, Hermitian conjugate, etc.). The matrix $M$ is \emph{unitary}
if the rows and the columns are orthonormal; equivalently $M^{-1}=M^{*}$.

The \emph{Frobenius norm}, (a.k.a.~Schatten 2-norm, Hilbert--Schmidt operator) of a square matrix $M$, denoted $\norm{M}_\Fr$, is  $\sum_{i,j}\abs{M_{i,j}}^2 = \tr(M M^*)$.

The \emph{operator norm} of a matrix $M$, denoted $\norm{M}_\op$, is the square root of the largest eigenvalue of the matrix $MM^\ast$.
In particular, if $M$ is a normal matrix, i.e. $MM^\ast = M^\ast M$, then $\norm{M}_\op$ equals its largest eigenvalue in magnitude.

\begin{fact}  $\norm{AB}_\op \le \norm{A}_\op \norm{B}_\op$.
\end{fact}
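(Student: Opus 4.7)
The plan is to reduce submultiplicativity to the standard variational characterization $\norm{M}_\op = \sup_{\norm{v}_2 = 1} \norm{Mv}_2$ and then apply it twice. First I would bridge between this characterization and the eigenvalue-based definition given in the paper. By the Rayleigh--Ritz theorem applied to the Hermitian positive semidefinite matrix $MM^*$, its largest eigenvalue equals $\sup_{\norm{v}_2=1} v^* M M^* v$. Replacing $M$ with $M^*$ (and noting that $MM^*$ and $M^*M$ share the same nonzero eigenvalues) gives $\norm{M}_\op^2 = \sup_{\norm{v}_2=1} v^* M^* M v = \sup_{\norm{v}_2=1} \norm{Mv}_2^2$, yielding the variational form.

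With this in hand, the submultiplicativity is a two-line computation. For any unit vector $v$, applying the characterization to $A$ on the vector $Bv$ gives $\norm{ABv}_2 \le \norm{A}_\op \cdot \norm{Bv}_2$, and applying it to $B$ on the unit vector $v$ gives $\norm{Bv}_2 \le \norm{B}_\op$. Chaining these yields $\norm{ABv}_2 \le \norm{A}_\op \norm{B}_\op$, and taking the supremum over unit $v$ produces $\norm{AB}_\op \le \norm{A}_\op \norm{B}_\op$.

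There is no genuine obstacle here; the only subtlety is the passage from the eigenvalue-of-$MM^*$ definition to the variational definition, which is standard linear algebra. An alternative route, avoiding any appeal to Rayleigh--Ritz, is to use the singular value decomposition of $M$: writing $M = U \Sigma V^*$ with $U, V$ unitary and $\Sigma$ diagonal with nonnegative entries shows directly that $\norm{M}_\op$ equals the maximum diagonal entry of $\Sigma$ and also equals $\sup_{\norm{v}_2=1}\norm{Mv}_2$, after which the two-line argument above applies verbatim.
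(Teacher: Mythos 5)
The paper states this as a standard fact without proof, so there is no argument to compare against. Your proof is correct and complete: it bridges the paper's eigenvalue-based definition of $\norm{\cdot}_\op$ to the variational characterization $\norm{M}_\op = \sup_{\norm{v}_2=1}\norm{Mv}_2$ via Rayleigh--Ritz (or SVD), and then chains the two applications of the characterization in the standard way.
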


\begin{fact} \label{fact:operator-norm-bounds-frobenius}
For a $d \times d$ matrix $M$ with eigenvalues $\lambda_1, \ldots, \lambda_d$, we have $\norm{M}_\Fr^2 = \sum_{i=1}^d \abs{\lambda_i}^2 \le d \norm{M}_\op^2$.
\end{fact}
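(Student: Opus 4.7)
The plan is to reduce $M$ to a canonical form using the unitary invariance of the Frobenius norm, and then read off both quantities from that form.  In the paper's intended application $M$ will typically be a representation matrix (or a combination of such), which is normal, so I will state the argument in the normal case where the displayed equality in the fact is literally correct; the inequality with $d \|M\|_\op^2$ holds for any $M$ by the same argument applied to singular values.

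First I would recall that the Frobenius norm is unitarily invariant: for any unitary $U$, using the cyclicity of trace,
\[
\|UMU^*\|_\Fr^2 = \tr\bigl((UMU^*)(UMU^*)^*\bigr) = \tr(UM M^* U^*) = \tr(M M^*) = \|M\|_\Fr^2.
\]
Next, since $M$ is normal, the spectral theorem gives a unitary $U$ with $U^* M U = D$, where $D$ is the diagonal matrix with entries $\lambda_1, \ldots, \lambda_d$.  Applying the unitary invariance just established,
\[
\|M\|_\Fr^2 = \|D\|_\Fr^2 = \sum_{i=1}^d |\lambda_i|^2,
\]
which gives the claimed equality.  (For a general $M$ one would instead apply Schur's triangularization and only recover the inequality $\sum_i |\lambda_i|^2 \le \|M\|_\Fr^2$; in either case, the upper bound below goes through.)

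For the inequality $\sum_i |\lambda_i|^2 \le d \|M\|_\op^2$, I would argue $|\lambda_i| \le \|M\|_\op$ for each $i$: if $v$ is a unit eigenvector with $Mv = \lambda_i v$, then $|\lambda_i| = \|Mv\|_2 \le \|M\|_\op$ by definition of the operator norm.  Squaring and summing over the $d$ eigenvalues yields the bound.  There is no real obstacle here; the only subtlety is to confirm that the intended scope of the stated equality is normal matrices (which is the relevant case in the representation-theoretic sections to follow), so that the ``$=$'' sign is justified rather than merely ``$\le$''.
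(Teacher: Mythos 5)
The paper states this as a bare Fact and gives no proof, so there is nothing to compare your argument against line by line; what matters is whether your argument is correct and whether it matches how the Fact is actually used.

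Your proof is correct and carefully flagged. You rightly observe that the middle equality $\|M\|_\Fr^2 = \sum_i |\lambda_i|^2$ is literally true only for normal $M$ (via unitary diagonalization), that for general $M$ Schur triangularization gives only $\sum_i |\lambda_i|^2 \le \|M\|_\Fr^2$, and that the bound $\|M\|_\Fr^2 \le d\,\|M\|_\op^2$ holds for every $M$ once one passes to singular values. That last bound is exactly what is used in \Cref{claim:closeness}, and your remark is not merely pedantic there: the matrix to which the Fact is applied is $\widehat{X}(\rho)-\widehat{Y}(\rho)$, a difference of convex combinations of unitaries, which is in general \emph{not} normal, so the equality in the Fact does not hold in that application and the ``$=$'' label on the step in \Cref{claim:closeness} should really be ``$\le$''. (Your guess that ``the paper's intended application $M$ will typically be a representation matrix'' is slightly off; it is a difference of Fourier coefficients. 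But your singular-value argument covers that case.) In short: your argument is complete, and it correctly identifies both the hypothesis needed for the stated equality and the fact that the inequality used downstream is the one that holds unconditionally.
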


\paragraph{Representation theory.}

Let $G$ be a group. A \emph{representation }$\rho$ of $G$ with
dimension $d$ maps elements of $G$ to $d\times d$ unitary, complex
matrices so that $\rho(xy)=\rho(x)\rho(y)$. Thus, $\rho$ is a homomorphism
from $G$ to the group of linear transformations of the vector space
$\mathbb{C}^{d}$. We denote by $d_{\rho}$ the dimension of $\rho$.

If there is a non-trivial subspace $W$ of $\mathbb{C}^{d}$ that
is invariant under $\rho$, that is, $\rho(x)W\subseteq W$ for every
$x\in G$, then $\rho$ is \emph{reducible}; otherwise it is \emph{irreducible.}
Irreducible representations are abbreviated \emph{irreps }and play
a critical role in Fourier analysis. We denote by $\widehat{G}$ a
complete set of inequivalent irreducible representations of $G$.

Let $\hG$ be the set of irreducible representations of $G$ (i.e. the dual group of $G$).
We have
\begin{equation}
\sum_{\rho \in \hG} d^2_\rho = |G|.
    \label{eq:sum-d-square}
\end{equation}

  For a random variable $Z$ we also use $Z$ to denote its probability mass function.

  For an irrep $\rho \in \hG$, the $\rho$-th Fourier coefficient of $Z$ is
  \[
    \hZ(\rho) := \sum_{g \in G} Z(g) \ol{\rho(g)}  = \E\Bigl[\ol{\rho(Z)}\Bigr] .
  \]
  The Fourier expansion of $Z \colon G \to \R$ is
  \[
    Z(g) = \frac{1}{\abs{G}} \sum_{\rho \in \hG} d_\rho \tr\bigl(\hZ(\rho) \rho(g) \bigr) .
  \]
  Parseval's identity gives
  \[
    \sum_{g \in G} Z(g)^2
    = \frac{1}{\abs{G}} \sum_{\rho \in \hG} d_\rho \norm{\hZ(\rho)}_\Fr^2 .
  \]

\begin{claim} \label{claim:closeness}
Suppose $X$ and $Y$ are two random variables over $G$ such that for every irreducible representation $\rho$ of $G$, we have $\norm{\E[\rho(X)] - \E[\rho(Y)]}_\op \le \eps$.
 Then $X$ and $Y$ are $(\sqrt{\abs{G}} \cdot \eps)$-close in statistical distance.
\end{claim}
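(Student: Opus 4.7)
The plan is to route the bound through the $L_2$ distance of the two probability mass functions and then expand this $L_2$ distance in the Fourier basis using Parseval's identity, converting a bound on operator norms into a bound on Frobenius norms. All the pieces are already in the excerpt; the proof should be essentially a line-by-line computation.

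First I would bound the statistical distance $\Delta(X,Y) = \tfrac{1}{2}\sum_{g\in G}|X(g)-Y(g)|$ by Cauchy--Schwarz, namely
\[
  \Delta(X,Y) \le \tfrac{1}{2}\sqrt{\abs{G}} \cdot \Bigl(\sum_{g\in G}\bigl(X(g)-Y(g)\bigr)^2\Bigr)^{1/2}.
\]
This reduces the problem to bounding the inner $L_2$ norm by $\eps$ (with a factor of $2$ to spare).

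Next I would apply Parseval, as stated in the excerpt, to the real-valued function $Z := X-Y$, giving
\[
  \sum_{g \in G}\bigl(X(g)-Y(g)\bigr)^2
  = \frac{1}{\abs{G}}\sum_{\rho \in \hG} d_\rho \,\bigl\|\hX(\rho)-\hY(\rho)\bigr\|_\Fr^2.
\]
For each irrep $\rho$, the Fourier coefficient difference is $\hX(\rho)-\hY(\rho) = \E[\ol{\rho(X)}] - \E[\ol{\rho(Y)}]$, whose operator norm equals $\|\E[\rho(X)] - \E[\rho(Y)]\|_\op \le \eps$ by hypothesis (complex conjugation preserves singular values and hence the operator norm). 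Applying \Cref{fact:operator-norm-bounds-frobenius} to the $d_\rho \times d_\rho$ matrix $\hX(\rho)-\hY(\rho)$ then gives $\|\hX(\rho)-\hY(\rho)\|_\Fr^2 \le d_\rho \cdot \eps^2$.

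Substituting this bound and using identity \eqref{eq:sum-d-square},
\[
  \sum_{g\in G}\bigl(X(g)-Y(g)\bigr)^2
  \le \frac{\eps^2}{\abs{G}} \sum_{\rho \in \hG} d_\rho^2 = \eps^2,
\]
so the earlier Cauchy--Schwarz step gives $\Delta(X,Y) \le \tfrac{1}{2}\sqrt{\abs{G}}\cdot \eps \le \sqrt{\abs{G}}\cdot \eps$, as desired. I do not anticipate any real obstacle here; the only small point to be careful about is the conjugation in the definition of $\hZ(\rho)$, which is harmless because it does not change the operator norm.
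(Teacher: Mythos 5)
Your proposal is correct and follows essentially the same line-by-line computation as the paper's proof: Cauchy--Schwarz to pass to the $L_2$ distance, Parseval's identity on $Z=X-Y$, the operator-norm-to-Frobenius bound to convert the per-irrep hypothesis, and then $\sum_\rho d_\rho^2 = |G|$. You also correctly flag (and resolve) the one point the paper silently elides, namely that the Fourier coefficient $\hZ(\rho)$ involves a conjugation, which is harmless because conjugation preserves singular values and hence the operator norm.
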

\begin{proof}
  \begin{align*}
  \frac{1}{2} \sum_{g \in G} \abs[\big]{X(g) - Y(g)}
    &\le \frac{\sqrt{\abs{G}}}{2} \Bigl( \sum_{g \in G} \bigl ( X(g) - Y(g) \bigr)^2 \Bigr)^{1/2} & \text{(Cauchy--Schwarz)}\\
    &= \frac{\sqrt{\abs{G}}}{2} \biggl( \frac{1}{\abs{G}} \sum_{\rho \in \hG} d_\rho \norm[\big]{ \hX(\rho) - \hY(\rho) }_\Fr^2 \biggr)^{1/2} & \text{(Parseval)} \\
    &= \frac{1}{2} \Bigl( \sum_{\rho \in \hG} d_\rho \cdot (d_\rho \eps^2) \Bigr)^{1/2} & \text{(\Cref{fact:operator-norm-bounds-frobenius})}\\
    &= \frac{\eps}{2} \cdot \Bigl(\sum_{\rho \in \hG} d_\rho^2\Bigr)^{1/2}
    = \sqrt{\abs{G}} \cdot \eps/2 . &\text{(\Cref{eq:sum-d-square})} & \qedhere
  \end{align*}
\end{proof}

\section{Proof of \Cref{thm:fooling-mixing-groups}}

Again, besides the parameter improvement, our main point here is to illustrate how we use representation theory to obtain pseudorandom generators.  These ideas will then be generalized to the more general and complicated setting of block products in the next section.

Let $\rho$ be an irreducible representation of a mixing group (\Cref{def:mixing}).
By definition of mixing, if $\rho$ is a non-identity matrix then it does not have $1$ as its eigenvalues.
A main observation is that if there are many non-identity matrices $\rho(g_i)$ in the program, then the bias $\norm{\E[\prod_{i=1}^n \rho(g_i)^{U_i}]}_\op$ is small.  This is proved in the next two claims.

\begin{claim} \label{claim:avg-norm}
  Let $M$ be a unitary matrix with eigenvalues $e^{i\theta_j}$ for some $\theta_j \in [-\pi,\pi]$ on the unit circle.
Suppose $\abs{\theta_j} \ge \theta$ for every $j$. Then $\norm{(I + M)/2}_\op \le 1 - \theta^2/8$.
\end{claim}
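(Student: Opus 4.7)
The plan is to reduce the matrix inequality to a scalar trigonometric one. Since $M$ is unitary (hence normal) and $I$ and $M$ commute, the matrix $(I+M)/2$ is itself normal. By the fact recorded in Section 3, the operator norm of a normal matrix equals the maximum modulus of its eigenvalues. The eigenvalues of $(I+M)/2$ are exactly $(1+e^{i\theta_j})/2$, so the problem reduces to bounding $\max_j \bigl|(1+e^{i\theta_j})/2\bigr|$.

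A direct calculation using $|1+e^{i\theta}|^2 = 2 + 2\cos\theta$ gives
\[
\left|\frac{1+e^{i\theta}}{2}\right|^2 = \frac{1+\cos\theta}{2} = \cos^2(\theta/2),
\]
so $\bigl|(1+e^{i\theta_j})/2\bigr| = \cos(\theta_j/2)$, which is non-negative because $\theta_j/2 \in [-\pi/2, \pi/2]$. The map $x \mapsto \cos(x/2)$ is even and non-increasing in $|x|$ on $[-\pi,\pi]$, so the hypothesis $|\theta_j| \ge \theta$ yields $\cos(\theta_j/2) \le \cos(\theta/2)$ for every $j$, and hence $\|(I+M)/2\|_\op \le \cos(\theta/2)$.

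It remains to establish the scalar inequality $\cos(\theta/2) \le 1 - \theta^2/8$. I would combine the half-angle identity $\cos(\theta/2) = 1 - 2\sin^2(\theta/4)$ with a lower bound on $\sin$, such as Jordan's inequality $\sin(x) \ge 2x/\pi$ on $[0,\pi/2]$ or the Taylor bound $\sin(x) \ge x - x^3/6$, to conclude $2\sin^2(\theta/4) \ge \theta^2/8$ (up to lower order corrections). The main obstacle is pinning down the precise constant $1/8$: the leading-order Taylor expansion $\cos(\theta/2) = 1 - \theta^2/8 + \theta^4/384 - \cdots$ is essentially tight at $\theta \to 0$, so verifying the clean constant $1/8$ requires either handling the positive fourth-order correction by appealing to a monotonicity argument on a suitable subinterval of $[0,\pi]$, or showing that the correction can be absorbed into the stated bound. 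Everything else in the argument is routine diagonalization plus a single elementary trig computation.
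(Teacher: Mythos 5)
Your argument is the same diagonalization the paper uses, and your hesitation about the constant is not mere pedantry: the scalar inequality $\cos(\theta/2) \le 1 - \theta^2/8$ is in fact \emph{false}, so the claim as stated (and the paper's own proof, which asserts $\abs{\cos(\theta_j/2)} \le 1 - \theta^2/8$ without justification) is wrong. A well-known inequality goes in the opposite direction: for $u \in \R$ one has $\cos u \ge 1 - u^2/2$ (the function $u \mapsto \cos u - 1 + u^2/2$ vanishes at $0$ and has nonnegative derivative $u - \sin u$), and setting $u = \theta/2$ gives $\cos(\theta/2) \ge 1 - \theta^2/8$, with equality only at $\theta = 0$. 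Concretely, at $\theta = 1$ one has $\cos(1/2) \approx 0.8776 > 0.8750 = 1 - 1/8$. This matches your observation from the Taylor series that the $+\theta^4/384$ term pushes $\cos(\theta/2)$ above $1 - \theta^2/8$.

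The fix is to shrink the constant. On $[0,\pi/2]$ Jordan's inequality yields $\cos u \le 1 - 4u^2/\pi^2$ (equality at both endpoints; check that $h(u) := 1 - 4u^2/\pi^2 - \cos u$ satisfies $h(0)=h(\pi/2)=0$, $h'(0)=0$, and $h''$ changes sign exactly once, so $h \ge 0$). With $u = \theta/2$ this gives $\norm{(I+M)/2}_\op \le \cos(\theta/2) \le 1 - \theta^2/\pi^2$, so the statement is true with $1/8$ replaced by $1/\pi^2$ (or any smaller constant, e.g.\ $1/10$). Since every downstream use of \Cref{claim:avg-norm} --- e.g.\ in \Cref{claim:irreps-mixing}, in the proof of \Cref{thm:fooling-mixing-groups}, and in \Cref{claim:bounded-bias} --- only relies on a bound of the form $1 - \Omega(\theta^2)$, the weaker constant does not affect any of the results. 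So: same approach as the paper, and you correctly flagged a genuine error in the stated constant; the cleanest repair is the Jordan-inequality route you sketched, with $1/\pi^2$ in place of $1/8$.
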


\begin{proof}
  As $M$ is unitary, we can write $M = Q^\ast D Q$, where $D$ is a diagonal matrix with $M$'s eigenvalues on its diagonal and $Q$ is unitary.
  The eigenvalues of $(I+M)/2 = Q^\ast (\frac{I+D}{2})Q$ are $\frac{1 + e^{i\theta_j}}{2} = e^{i\theta_j/2} \cdot \frac{e^{-i\theta_j/2} + e^{i\theta_j/2}}{2} = e^{i\theta_j/2} \cos(\theta_j/2)$, which have magnitudes at most $\abs{\cos(\theta_j/2)} \le 1 - \theta^2/8$.
\end{proof}

\begin{claim} \label{claim:irreps-mixing}
  Let $G$ be a mixing group.
  Let $\rho$ be an irreducible representation of $G$ of dimension $d_\rho$.
  Let $f_\rho(x) = \prod_{i=1}^n \rho(g_i)^{x_i}$ be the representation of a group program.
  Suppose $\rho(g_i) \ne I_{d_\rho}$ for $t \ge c_G \log(1/\eps)$ many $i$'s.
  Then $\norm{\E[f_\rho(U)]}_\op \le \eps$.
\end{claim}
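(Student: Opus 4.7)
The plan is to exploit independence of the $U_i$ to factor $\E[f_\rho(U)]$ into a product of $n$ matrices, each of the form $(I + \rho(g_i))/2$, and then to bound the operator norm of each factor that corresponds to a non-identity $\rho(g_i)$ using \Cref{claim:avg-norm}.

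First I would observe that since $U_1, \ldots, U_n$ are independent and matrix multiplication is bilinear, one has
\[
\E[f_\rho(U)] = \prod_{i=1}^n \E\bigl[\rho(g_i)^{U_i}\bigr] = \prod_{i=1}^n \frac{I + \rho(g_i)}{2}.
\]
By submultiplicativity of the operator norm, it then suffices to bound the operator norm of each factor. For indices $i$ with $\rho(g_i) = I$ the factor is $I$ and contributes norm $1$; the entire bound will come from the $t$ indices with $\rho(g_i) \ne I$.

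The heart of the argument is to show that for each $i$ with $\rho(g_i) \ne I$ we have $\norm{(I+\rho(g_i))/2}_\op \le 1 - \Omega_G(1)$. This is where mixing enters. Since $G$ is finite, $\rho(g_i)$ has finite order dividing $|G|$, so its eigenvalues are $|G|$-th roots of unity, i.e.\ of the form $e^{i\theta_j}$ with $\theta_j \in \{2\pi k/|G| : 0 \le k < |G|\}$. Because $G$ is mixing and $\rho(g_i) \ne I$, \Cref{def:mixing} says none of these eigenvalues equals $1$, so $|\theta_j| \ge 2\pi/|G|$ for every $j$. Applying \Cref{claim:avg-norm} with $\theta = 2\pi/|G|$ gives
\[
\Bigl\|\frac{I+\rho(g_i)}{2}\Bigr\|_\op \le 1 - \frac{\pi^2}{2|G|^2}.
\]

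Combining, we obtain $\norm{\E[f_\rho(U)]}_\op \le \bigl(1 - \pi^2/(2|G|^2)\bigr)^t$, which is at most $\eps$ provided $t \ge c_G \log(1/\eps)$ for a suitable constant $c_G = O(|G|^2)$. The only non-routine point is the eigenvalue spacing bound, and that is immediate once one recalls that finite-order matrices have roots-of-unity eigenvalues; the rest is bookkeeping.
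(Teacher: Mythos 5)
Your proof is correct and follows essentially the same route as the paper: after factoring $\E[f_\rho(U)]$ over the independent coordinates, each non-identity factor $(I+\rho(g_i))/2$ has operator norm bounded away from $1$ via \Cref{claim:avg-norm} and the mixing property, and the bounds multiply across the $t$ non-identity indices. The one welcome addition is that you make $c_G$ explicit by observing that the eigenvalues of $\rho(g_i)$ are $|G|$-th roots of unity, so a non-$1$ eigenvalue has angle at least $2\pi/|G|$, yielding $c_G = O(|G|^2)$; the paper leaves this gap implicit in its constant, and uses a slightly more roundabout ``fix the other coordinates'' formulation instead of your direct factorization.
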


\begin{proof}
  Let $T$ be the $t$ coordinates $j$ where $\rho(g_j) \ne I_{d_\rho}$.
  For every fixing of the other coordinates, we can write $f_\rho(U)$ as
  \[
    B \prod_{j \in T} \rho(g_j)^{U_j} B_j
  \]
  for some unitary matrices $B$ and $B_j$'s.
  So
  \[
    \norm{f_\rho(U)}_\op
    \le \norm{B}_\op \prod_{j \in T} \norm{ \E[\rho(g_j)^{U_j}]}_\op \norm{B_j}_\op
    \le (1 - c_G)^t \le \eps . \qedhere
  \]
\end{proof}

We now proceed with the proof of the main result.  The proof extends to handle the spill, but for simplicity we do not discuss it here. We fool each irreducible representation of $G$ separately and then appeal to \Cref{claim:closeness}.
Fix a representation $\rho$ and consider the product
\[
  f_\rho(x) := \prod_{j=1}^\ell \rho(g_j)^{x_j} .
\]
Let $t$ be the number of non-identity elements $\rho(g_j) : j \in [n]$ and $S$ be their coordinates.

Let us sketch the construction.
First, XORing with an almost $2c_G \log(1/\eps)$-wise uniform distribution takes care of the case $t \le c_G \log(1/\eps)$, so we may assume that $t$ is larger.
In this case, by \Cref{claim:irreps-mixing}, we have that the bias $\norm{\E[f_\rho(U)]}_\op$ is small under the uniform distribution.
Our goal is to set $ct$ bits in $S$ to uniform and apply \Cref{claim:irreps-mixing} again.

Let $\ell := c_G \log(1/\eps)$.
Let $M$ be a $(\log n) \times 10\ell$ matrix filled with uniform bits.

We will make $\log n$ guesses of $t$.
For each guess $v=2^i \cdot \ell : i\in \{0, \ldots,\log n-1\}$ of $t$, we select a subset of size $\ell$ of the input positions using a hash function $h_i$, and then hash these $\ell$ positions to row $i$ of $M$ using another hash function $h$, and assign input bits correspondingly.
The final generator is obtained by trying all guesses, using the same seed for each guess $h_i$, and XORing
together the bits.

\medskip

In more detail, for each $i \in \{0, \ldots, \log n-1\}$, let $h_i\colon [n] \to \zo$ be a $10\ell$-wise independent hash family with $\Pr_{h_i}[h_i(j) = 1] = 2^{-i}$ for each $j \in [n]$.
Let $h\colon [n] \to [10\ell]$ be another $5\ell$-wise uniform hash family.
The output of our generator is 
\[
  D := D^{(0)} \oplus \cdots \oplus D^{(\log n-1)},
\]
where the $j$-th bit of $D^{(i)}$ is $$h_i(j) \cdot M_{i,h(j)}.$$
We use the same seed to sample $h_0, \dots, h_{\log n-1}$, which costs at most $O_G(\log n \log(1/\eps))$ bits~\cite[Corollary~3.34]{Vadhan12}.
Sampling $h$ uses another $O_G(\log(n/\eps) \log(1/\eps))$ bits.
This uses a total of $O_G(\log(n/\eps) \log(1/\eps))$ bits.

\medskip

We now show that $\norm{\E[f_\rho(D)]}_\op \le O(\eps)$.
Suppose $t \in [2^i \ell, 2^{i+1} \ell]$.
Recall that $S$ are the coordinates corresponding to the non-identity matrices in the product.
Let $J := h_i^{-1}\{1\} \cap S$.
As $\Pr[h_i(j) = 1]=2^{-i}$, we have $\E[\abs{J}] \in [\ell, 2\ell]$.
Applying tail bounds for bounded independence (see \Cref{lem:k-wise-tail}), we have $\abs{J} \in [\ell/2, 3\ell]$ except with probability $\eps$.
Conditioned on this event, as $\abs{J} \le 3\ell$ and $h$ is $5\ell$-wise uniform, we can think of $h$ as a random function from $J$ to $[10\ell]$.
Hence, for each $j \in [10\ell]$, we have
\[
  \Pr\bigl[ \abs{J \cap h^{-1}(j)} = 1\bigr]
  = \abs{J} \cdot 1/(10\ell) \cdot (1 - 1/(10\ell))^{\abs{J}-1}
  \ge (\ell/2) \cdot (1/10\ell) \cdot (1/2)
  \ge 1/40 .
\]
By a Chernoff bound, we have that except with probability at most $\eps$, the number of $j$ such that $\abs{J \cap h^{-1}(j)} = 1$ is at least $\ell/10$.

Let $T$ be these coordinates.
Fixing all the bits in $M$ except the ones in row $i$ that are fed into $T$, we can write the conditional expectation of $f_\rho(G)$ over the bits in $T$ as
\[
  B \prod_{j \in T} \E_{x_j}[A_j^{x_j}] B_j,
\]
for some unitary matrices $B$, $A_j$'s and $B_j$'s, and in particular, $A_j$ has its eigenvalues bounded away from 1 on the complex unit circle.
Therefore, by \Cref{claim:avg-norm},
\[
  \norm[\Big]{ B \prod_{j \in T} \E_{x_j}\bigl[A_j^{x_j}\bigr] B_j }_\op
  \le \prod_{j \in T} \norm[\bigg]{\frac{(I+A_j)}{2}}_\op
  \le \eps . \qedhere
\]

\section{Proof of \Cref{thm:reduction}}
In this section we prove \Cref{thm:reduction}.  This type of reductions goes back to the work of \cite{GopalanMRTV12} on read-once CNFs (itself building on \cite{AjtaiW89}), and have been refined in several subsequent works.  The work \cite{LeeV-rop} extended the techniques to read-once polynomials.  It exploited the observation that when the number of monomials is significantly larger than its degree, the bias of the polynomial is small, and therefore the bias of the restricted function remains small.
Building on this observation, \cite{MRT19} showed that one can aggressively restrict most of the coordinates, while keeping the bias of the restricted function small.
In addition, a typical restricted product is a low-degree polynomial (plus a spill), for which we have optimal generators~\cite{BoV-gen,Lov09,Viola-d}.

However, \cite{MRT19} reduces to non-linear polynomials (degree 16).  As discussed earlier, bit-generators with good seed lengths are only known over $\Z_2$.  We give a refined reduction that reduces to polynomials of degree one, for which we have generators over $\Z_m$ for any $m$ \cite{LovettRTV09,MekaZ09,GKM18}.

At the same time, we show that the reduction can be carried over any mixing group, by working with representations of the group.

\begin{definition}
  Let $\calU_\theta(d)$ be the set of $d \times d$ unitary matrices with eigenvalues $e^{2\pi i\theta_j}$ where $\abs{\theta_j} \ge \theta$.
\end{definition}

\begin{definition}
A group $G$ is \emph{$\theta$-mixing} if it has a complete set of unitary
irreducible representations where each non-identity matrix lies in $\calU_\theta(d)$ for some $d$.
\end{definition}

The following theorem will serve as the basis of our iterative construction of the PRG.

\begin{theorem} \label{thm:one-iter}
  Let $w \ge \log\log(1/\eps) + \log m$.
  Suppose there is a PRG $P$ with seed length $s$ that $\eps$-fools $(m^5 2^{30\width},2\width, 2\log(1/\eps))$-products over $G$.
  Let $P_1$ be a PRG with seed length $s_1$ that $\eps$-fools $(\terms,1,3\log(1/\eps))$-products over a group $G$ of order $m$ that is $(1/m)$-mixing.
  Then there is a PRG that $O(\eps)$-fools $(m^5 2^{45\width}, 3\width, 2\log(1/\eps))$-products over $G$ with seed length
  \[
    s + \Bigl( s_1 + O_m((\log(1/\eps) + w) \log w + \log\log n) \Bigr) .
  \]
\end{theorem}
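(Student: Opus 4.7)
By \Cref{claim:closeness}, it suffices to fool each irreducible representation $\rho$ of $G$ separately in operator norm. Fix $\rho$ and let $F(x) := \rho\bigl(\prod_i g_i^{f_i(x_{I_i})}\bigr)$ be the matrix-valued representation of the input $(m^5 2^{45w}, 3w, 2\log(1/\eps))$-product. My plan is to output $D = D_1 \oplus D_2$, where $D_2$ is drawn from the given PRG $P$ and supported on a subset $K \subseteq [n]$ meeting each $I_i$ in at most $2w$ bits, and $D_1$ comes from an ``outer'' construction supported on $J := [n] \setminus K$ (meeting each $I_i$ in at most $w$ bits). The seed breaks up as $s$ for $D_2$ and $s_1 + O_m(\cdots)$ for $D_1$.

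For the inner step, fix $D_1 = x_1$: the function $x_2 \mapsto F(x_1 \oplus x_2)$ is a representation of a block product in $x_2$ of width at most $2w$, with the spill block inherited as at most $2\log(1/\eps)$. Choosing the split $J, K$ carefully (possibly depending on the $f_i$'s) and dropping blocks whose restricted $f_i$ becomes identically zero, I aim to bring the number of nontrivial residual blocks within $m^5 2^{30w}$; the slack factor $2^{15w}$ absorbs the trivialized blocks. An invocation of $P$ then gives $\norm{\E_{D_2}[F(x_1 \oplus D_2)] - \E_{U_2}[F(x_1 \oplus U_2)]}_\op \le \eps$, and averaging over $D_1$ keeps the inner error at $\eps$.

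For the outer step, it remains to fool the matrix-valued function $G(x_1) := \E_{u_2}[F(x_1 \oplus u_2)] = \prod_i M_i(x_{1,J_i})$, where $M_i(y) := p_i(y) \rho(g_i) + (1 - p_i(y)) I$ and $p_i(y) := \Pr_z[f_i(y,z) = 1]$. The crucial mixing observation is that $M_i(y)$ is a strict contraction with $\norm{M_i(y)}_\op \le 1 - \Omega(1/m^2)$ whenever $\rho(g_i) \ne I$ and $p_i(y) \in (0,1)$, by an easy adaptation of \Cref{claim:avg-norm} to convex combinations of $I$ and a unitary with bounded-away eigenvalues. The outer generator $D_1$ then mirrors the hashing scheme from the proof of \Cref{thm:fooling-mixing-groups}: $\log n$ guesses of the number of contracting indices at various scales, $\Theta(\log(1/\eps))$-wise hashing of the candidate $J_i$'s into $\Theta(\log(1/\eps))$ buckets, and each bucket is seeded by one bit from $P_1$ XORed into a $\Theta(w)$-wise independent distribution over $\zo^w$ for the remaining bits of $J_i$. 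The $P_1$ seed contributes $s_1$, while the hash functions and $\Theta(w)$-wise source contribute $O_m((\log(1/\eps) + w)\log w + \log\log n)$.

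The main obstacle will be transferring the scalar mixing argument of \Cref{claim:irreps-mixing} to the matrix-valued product $G(x_1)$: one must show that the outer generator forces enough of the $M_i(x_{1,J_i})$'s to be genuine contractions, and that applying the \Cref{claim:avg-norm}-style bound across hash buckets yields an operator norm at most $\eps$. Degenerate $f_i$'s---for which $f_i(\cdot, z)$ is constant in $y$, so $M_i$ reduces to a fixed matrix---are handled by XORing in an almost-$k$-wise-independent distribution that covers the case when $J_i$ contributes no contraction; those cases are then absorbed by the inner $P$ through the same conditioning argument. Combining inner and outer errors by triangle inequality and summing over irreps using \Cref{claim:closeness} then yields the claimed $O(\eps)$ statistical distance and total seed length $s + s_1 + O_m((\log(1/\eps) + w)\log w + \log\log n)$.
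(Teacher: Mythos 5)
Your approach diverges substantially from the paper's, and unfortunately several steps in the outline would not go through.

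First, you partition the coordinates $[n]$ into fixed disjoint sets $J$ and $K$ with $D_1$ supported on $J$ and $D_2 = P(U)$ on $K$, and say the split may be chosen ``possibly depending on the $f_i$'s.'' This is not permitted: the generator must be oblivious, and a fixed split cannot guarantee that every $I_i$ meets $K$ in at most $2w$ bits (the adversary can place the block boundaries wherever it likes). The paper avoids this by making the partition \emph{pseudorandom}: it samples a sparse almost-$k$-wise-independent restriction $T$ as part of the seed and outputs $(D + T \wedge P(U)) + P_\LONG$, where both summands are supported on all of $[n]$. The XOR structure means the restriction $T$ can be analyzed simultaneously for all functions in the class, which is what you would need $J,K$ to do.

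Second, your ``outer'' step asks $P_1$ to fool the matrix-valued product $G(x_1) = \prod_i M_i(x_{1,J_i})$ with $M_i(y) = p_i(y)\rho(g_i) + (1-p_i(y))I$. The $M_i$ are generically strict contractions, not unitaries, and not in the image of $\rho$. But $P_1$ is only guaranteed to fool $(\ell,1,3\log(1/\eps))$-products of group elements; it has no guarantee against products of arbitrary sub-unitary matrices. The paper sidesteps this entirely: the width-reduction step (\Cref{thm:width-reduction}) uses a pseudorandom restriction to reduce a long product to an honest width-1 group product (plus a small spill), and $P_1$ is invoked only on that class. There is no need to handle averaged/contraction matrices.

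Third, the seed budget does not close. Mirroring the hashing scheme from \Cref{thm:fooling-mixing-groups} costs $\Theta_G(\log(n/\eps)\log(1/\eps))$ bits (it needs $\log n$ guesses and $\Theta(\log(1/\eps))$-wise hash families over $[n]$), which is strictly larger than the target $s_1 + O_m((\log(1/\eps)+w)\log w + \log\log n)$. The paper stays within budget precisely by using almost-$k$-wise restrictions instead of hashing, with $k = O(\log(1/\eps)+w)$ and $\delta$ chosen so that the seed is $O_m((\log(1/\eps)+w)\log w + \log\log n)$.

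Finally, the paper's argument is a clean two-case split on the number $\ell$ of non-constant blocks (long $\ell \ge m^5 2^{30w}$ versus short), with \Cref{lem:reduce-short-prod} handling the short case by dropping the width from $3w$ to $2w$ so $P$ applies, and \Cref{thm:width-reduction} handling the long case by dropping to width 1 so $P_1$ applies. Your integrated ``inner/outer'' construction does not recover this dichotomy, and the handling of ``degenerate $f_i$'s'' at the end is left vague where the paper gives a concrete count (\Cref{lem:1bit-product}).
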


We first show how to apply \Cref{thm:one-iter} iteratively to obtain \Cref{thm:reduction}.

\begin{proof} [Proof of \Cref{thm:reduction}]
  We iterate \Cref{thm:one-iter} repeatedly for some $t$ times to reduce the problem to fooling an $O(\log(m/\eps))$-junta which can be done using an almost bounded uniform distribution.

  Given an $(\terms,\width,\log(1/\eps))$-product $f$, let $\width' = \max\{\width, \log \terms, \log m\}$ so that we can view $f$ as an $(m^5 2^{45\width'}, 3\width', 2\log(1/\eps))$-product.
  We first apply \Cref{thm:one-iter} for $t_1 = O(\log \width')$ times until we have a 
  \[
    \bigl( (m \cdot \log(1/\eps))^C, \log\log(1/\eps) + \log m, 2\log(1/\eps)\bigr)\text{-product},
  \]
  for some constant $C$.

  Let $b := \frac{\log(1/\eps) + \log m}{\log\log(1/\eps) + \log m}$.
  We will apply the following repeatedly for some $r = O_m(1)$ steps.
  We divide the $f_i: i \ge 1$ into groups of $b$ functions and view the product of functions in each group as a single function, this way we can think of the above product as a 
  \[
    \Bigl(\tfrac{ (m \cdot \log(1/\eps) )^C}{b}, \log(1/\eps) + \log m, 2\log(1/\eps) \Bigr)\text{-product.}
  \]
  So we can continue applying \Cref{thm:one-iter} for $t_2 = O(\log(\log(1/\eps) + \log m)) \le O_m(\log\log(1/\eps))$ times and the restricted function becomes a
  \[
    \Bigl(\tfrac{ (m \cdot \log(1/\eps) )^C}{b}, \log\log(1/\eps) + \log m, 2\log(1/\eps) \Bigr)\text{-product.}
  \]
  Repeating this process for 
  \begin{align*}
    r
    = \log_b\Bigl( \bigl(m \cdot \log(1/\eps) \bigr)^C \Bigr)
    \le \frac{2C \bigl( \log m + \log\log(1/\eps) \bigr)}{ \log\log(1/\eps) }
    = O_m(1)
  \end{align*}
  times, we are left with a
  \[
    \bigl( O(1), \log\log(1/\eps) + \log m, 2\log(1/\eps) \bigr)\text{-product}
  \]
  which can be fooled by an $\eps$-almost $O(\log(m/\eps))$-wise uniform distribution that can be sampled using $s' = O(\log(m/\eps) + \log\log n)$ bits~\cite{NaN93,AGHP92}.
  Therefore, in total we apply \Cref{thm:one-iter} for
  \begin{align*}
    t := t_1 + r \cdot t_2
    \le  O(\log \width') + O_m( \log\log(1/\eps) )
    = O_m\bigl(\log(w + \log(\ell/\eps))\bigr)
  \end{align*}
  times, each with a seed of
  \[
    s = s_1 + O_m((\log(\ell/\eps) + w) \log w + \log\log n) .
  \]
  bits.
  Hence in total it uses 
  \begin{align*}
    s \cdot t + s'
    &\le O_m\bigl( s_1 +  \log(\ell/\eps) + \width  \bigr) \cdot \polylog\bigl(\width, \log \terms, \log n, \log(1/\eps)\bigr) 
  \end{align*}
  bits.
\end{proof}

\subsection{Analysis of one iteration: Proof of \Cref{thm:one-iter}}

We now prove \Cref{thm:one-iter}.
Given an $(m^5 \cdot 2^{45\width}, 3\width, 2\log(1/\eps))$-product $f = \prod_{i=0}^\terms f_i$ over $G$ of order $m$ that is $(1/m)$-mixing, let $\terms$ be the number of non-constant $f_i$.
We say $f$ is a \emph{long product} if $\terms \ge m^5 \cdot 2^{30\width}$, otherwise $f$ is \emph{short}.
At a high-level, we apply \Cref{thm:width-reduction} to $P_1$ to obtain a PRG that fools long products in \emph{one shot}, and use \Cref{lem:sbpnfp,lem:reduce-short-prod} below to reduce fooling a short product to fooling a product of smaller width $\width$.

\begin{lemma} \label{lem:sbpnfp}
  Let $w \ge \log m$ and $C$ be a sufficiently large constant.
  Define
	\begin{align*}
    k &:= C (w + \log(\ell/\eps)) \\
    \delta &:= (m \cdot w)^{-k} \\
		p &:= 2^{-C} .
	\end{align*}
  There exist two $\delta$-almost $k$-wise independent distributions $D$ and $T$ with $\E[D_i]=1/2$ and $\E[T_i]=p$ for every $i\in [n]$, such that for every $(\ell, w, 0)$-product $f$ over $G$ of order $m$, we have $\abs{\E_{D,T} [ f(D + T \wedge U) ] -\E[f(U)] } \le \eps$.

  Moreover, $D$ and $T$ can be efficiently sampled with a seed of length $O_m((\log(\ell/\eps) + w) \log w + \log\log n)$.
\end{lemma}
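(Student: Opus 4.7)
The plan is to adapt the small-bias-plus-noise framework used in the proof of \Cref{lem:FK} (and going back to \cite{ForbesKelley-2018}) from group products to block products over mixing groups, using representation theory.  The entire error will come from how much the almost $k$-wise uniform distributions $D$ and $T$ deviate from truly random on Fourier supports of size at most $k$; the noise process $T \wedge U$ will kill longer supports via the mixing of the irreps.

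First I would reduce to fooling a single non-trivial irreducible representation $\rho$ of $G$ using \Cref{claim:closeness}, so that it suffices to bound $\norm{\E_{D,T,U}[\rho(f(D \oplus T \wedge U))] - \E_{U'}[\rho(f(U'))]}_\op$ by $\eps/\sqrt{m}$.  Writing $A_j := \rho(g_j)$ and using the identity $A^b = I + b(A - I)$ for $b \in \zo$, I would Fourier-expand each Boolean $f_j$ on its block $I_j$ to obtain
\[
  \rho(f(x)) \;=\; \sum_{B' \subseteq [\terms]} \sum_{(S_j)_{j \in B'}} \chi_S(x) \prod_{j \in B'} \widehat{f_j}(S_j)\,(A_j - I),
\]
where $S := \bigsqcup_{j \in B'} S_j$ and the matrix product is taken in block order.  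Using $\E_U[\chi_S(T \wedge U)] = \mathbf{1}[T|_S = 0]$, each Fourier term gets weighted by $\E_{D,T}[\chi_S(D)\,\mathbf{1}[T|_S = 0]]$.  Under truly uniform $D$ and truly $p$-biased $T$, all nonempty-$S$ weights vanish, so the sum collapses exactly to $\prod_j (I + \E[f_j](A_j - I)) = \E[\rho(f(U'))]$; hence the error is the sum of nonempty-$S$ matrix Fourier terms weighted by the deviation of the corresponding joint expectation from $0$.

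I would then split the expansion into the regimes $\abs{S} \le k$ and $\abs{S} > k$.  For $\abs{S} \le k$, the $\delta$-almost $k$-wise uniformity of $D$ and $T$ gives per-weight error at most $\delta$; multiplying by the operator-norm Fourier mass at level $\le k$, bounded using $\norm{A_j - I}_\op \le 2$ and the Boolean $L_1$ bound $\sum_{S_j} \abs{\widehat{f_j}(S_j)} \le 2^{w/2}$, yields a total low-level error at most $\delta \cdot (O(\terms m 2^w))^{k}$, which is at most $\eps/\sqrt{m}$ for the choice $k = C(w + \log(\terms/\eps))$ and $\delta = (mw)^{-k}$.  For $\abs{S} > k$, the factor $\mathbf{1}[T|_S = 0]$ already has expectation at most $(1 - p)^k + \delta$ by sub-selecting $k$ coordinates of $S$ and invoking almost $k$-wise uniformity of $T$, while the matrix coefficient decays via $(1/m)$-mixing: I would re-pair each non-identity factor $(A_j - I)$ with a freshly uniform coordinate supplied by $T \wedge U$ so that the pair becomes $(I + A_j)/2$, whose operator norm is at most $1 - \Omega(1/m^2)$ by \Cref{claim:avg-norm}, giving total decay $(1 - \Omega(1/m^2))^{\Omega(k/w)}$ across the non-identity blocks.

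The main obstacle will be the high-level regime: a naive Fourier $L_1$ bound summed over all $(S_j)$ gives only $2^{O(\terms w)}$, which would swamp the noise savings $(1 - p)^k$.  Overcoming this requires a matrix-valued analog of the level-$k$ Fourier $L_1$ bound of \cite{ReingoldSV13,LPV22}, carried out through representation theory: one must re-pair the non-identity factors $(A_j - I)$ with truly uniform coordinates from $T \wedge U$ \emph{before} taking operator norms, using the $(1/m)$-mixing hypothesis to guarantee the per-pair $\Omega(1/m^2)$ operator-norm loss.  This is the Fourier-side analog of \Cref{claim:irreps-mixing} and is precisely what lets the small-bias-plus-noise paradigm go through for block products over mixing groups.
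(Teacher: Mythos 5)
The paper does not give a self-contained argument for this lemma.  It states that \Cref{lem:sbpnfp} follows from a Forbes--Kelley-style lemma (stated immediately after it in the paper), which itself is obtained by running the argument of \cite{ForbesKelley-2018} against the \emph{level-$k$ Fourier $L_1$ bound} for Boolean functions computable by block products over groups established in \cite{LPV22} (applied to the indicators $\mathbf{1}[f(x)=g]$, $g\in G$).  Crucially, that route needs no representation theory and no mixing hypothesis; the lemma is valid for an arbitrary group of order $m$, and is in fact used that way in \Cref{lem:reduce-short-prod} (``Let $G$ be any group of order $m$'').

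Your proposal takes a genuinely different and representation-theoretic route: reduce via \Cref{claim:closeness} to a single irrep $\rho$, write $\rho(f(x))=\prod_j\bigl(I+f_j(x_{I_j})(A_j-I)\bigr)$, Fourier-expand each $f_j$, and treat the low-level ($\abs{S}\le k$) and high-level ($\abs{S}>k$) matrix Fourier terms separately.  The low-level part of your argument is fine.  The problem is the high-level part, and you correctly identify where it breaks: the naive $L_1$ bound on the matrix Fourier mass is $2^{\Theta(\ell w)}$, which the noise factor $(1-p)^k$ cannot beat.  But the fix you propose---``re-pair each $(A_j-I)$ with a fresh uniform coordinate from $T\wedge U$ and invoke $(1/m)$-mixing to get per-pair operator-norm loss $1-\Omega(1/m^2)$''---does not close this gap, for two reasons.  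First, it imports the mixing hypothesis, which this lemma does not (and should not) assume, so it cannot yield the stated result.  Second, and more fundamentally, mixing controls the \emph{bias} $\norm{\E[\rho(\cdot)]}_\op$ after averaging over uniform coordinates, but it does not control the raw Fourier weight at a fixed level $\abs{S}=k$, which is a combinatorial quantity computed \emph{before} any averaging.  The quantity you actually need is exactly the level-$k$ Fourier $L_1$ bound of \cite{ReingoldSV13,LPV22} for generalized group products, which is proved by a telescoping/decoupling argument exploiting the read-once branching-program structure, not eigenvalue gaps of $\rho(g)$.  Without an actual proof of that bound (or a citation to it used as the paper does), the high-level estimate in your argument is a wish rather than a step, and the proof does not go through.

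As a secondary observation, your conversion of the operator-norm bound on each irrep to the statistical-distance bound in the lemma via \Cref{claim:closeness} is correct in spirit but costs a $\sqrt{\abs{G}}$ factor, which must be absorbed into the constant $C$ in $k$---a point worth making explicit.  The paper's Boolean-indicator route (one indicator per group element, then union bound over $\abs{G}=m$ elements) incurs an analogous factor of $m$, so this is not a loss of your approach, merely bookkeeping to be done carefully.
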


\Cref{lem:sbpnfp} follows from the following lemma, which can be obtained from applying a variant of a result of Forbes and Kelley~\cite{ForbesKelley-2018} to the Fourier bounds on functions computable by block products over groups, which was established in \cite{LPV22}.
(Block products are called generalized group products in \cite{LPV22}.)

\begin{lemma}[\cite{ForbesKelley-2018,LPV22}]
  Let $f\colon \zo^n \to \zo$ be computable by an $(\ell,w,0)$-block product over a group $G$.
  Let $D$ and $T$ be two independent $\delta$-almost $2(k+w)$-wise independent distributions on $\zo^n$ with $\E[D_i] = 1/2$ and $\E[T_i] = p$,
  and $U$ be the uniform distribution on $\zo^n$.
  Then
  \[
    \abs[\big]{\E[f(D + T \wedge U)] - \E[f(U)] }
    \le \ell \cdot \bigl( \sqrt{\delta} \cdot (w \cdot \abs{G})^{k+w} + (1-2p)^{k/2} + \sqrt{\gamma} \bigr) .
  \]
\end{lemma}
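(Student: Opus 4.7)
The plan is to follow the small-bias-plus-noise Fourier argument of \cite{ForbesKelley-2018}, instantiated with the Fourier-growth bounds for block products over groups from \cite{LPV22}.

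I would start by Fourier-expanding the bias,
\[
  \E[f(D + T \wedge U)] - \E[f(U)] = \sum_{\emptyset \ne S \subseteq [n]} \hat f(S) \cdot \E\bigl[\chi_S(D + T \wedge U)\bigr],
\]
where $\chi_S(x) = (-1)^{\sum_{i \in S} x_i}$.  Since $D$ is independent of $(T,U)$, each term factors as $\hat f(S) \cdot \E[\chi_S(D)] \cdot \E[\chi_S(T \wedge U)]$, and marginalizing $U$ first shows that $\E[\chi_S(T \wedge U)] = \Pr[T|_S = 0^{|S|}]$; for a truly $p$-biased $T$ this equals $(1-p)^{|S|}$, and for a $\delta$-almost $k$-wise $T$ it is within $\delta$ of that value whenever $|S| \le k$.

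Next, split the Fourier sum at level $|S| = k$.  For the low-level part ($|S| \le k$), the $\delta$-almost $2(k+w)$-wise independence of $D$ bounds $\abs{\E[\chi_S(D)]}$ by $\delta$ on every nonempty $S$.  Then control $\sum_{|S| \le k} \abs{\hat f(S)}$ by Cauchy--Schwarz against the level-wise $L_2$ weights, plugging in the \cite{LPV22} Fourier-level bound for $(\ell, w, 0)$-products (which at level $j$ is of the form $\ell \cdot (w\abs{G})^{O(j+w)}$).  This yields the first summand $\ell \sqrt{\delta}(w\abs{G})^{k+w}$, with the square root on $\delta$ coming from redistributing the per-character loss of $\delta$ via Cauchy--Schwarz.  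For the high-level part ($|S| > k$), use the noise-attenuation factor $\E[\chi_S(T \wedge U)] \approx (1-p)^{|S|}$, which decays exponentially with $|S|$; combining this with Parseval (since $f$ is $\{0,1\}$-valued, $\sum_S \hat f(S)^2 \le 1$) and another Cauchy--Schwarz gives the second summand $\ell (1-2p)^{k/2}$, the $1/2$-exponent being the standard conversion of a weighted $L_1$ estimate into an $L_2$ estimate.  The remaining $\sqrt{\gamma}$ term absorbs approximation slack in treating $T$ as genuinely $p$-biased up to the truncation level.

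The main obstacle I expect is threading the \cite{LPV22} Fourier-growth bound through the Cauchy--Schwarz/noise-decay trade-off so that the exponents of $w$, $\abs{G}$, and $k$ match the stated bound, and in particular choosing the level $2(k+w)$ of almost-independence of $D$ so that the low-level almost-independence error does not swamp the high-level noise-attenuation error.  A secondary subtlety is that block products output a group element rather than a bit, so one must first reduce to Boolean $f$ (e.g.\ by composing with the indicator of a single target element in $G$ and summing); this reduction is where the outer factor of $\ell$ naturally enters, matching the $\ell$ that already appears in the \cite{LPV22} bound on the number of Fourier characters of a block product.
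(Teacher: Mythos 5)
The paper does not actually prove this lemma — it cites it as following ``from applying a variant of a result of Forbes and Kelley~\cite{ForbesKelley-2018} to the Fourier bounds\ldots established in \cite{LPV22},'' so there is no in-paper proof to match against. Your sketch does capture the general spirit of what those works do (Fourier-expand, split at level $k$, use the bounded independence of $D$ for low levels, use noise attenuation from $T \wedge U$ for high levels, feed in the \cite{LPV22} Fourier-growth bound for block products), and you correctly flag the need to reduce a $G$-valued product to a Boolean test and correctly attribute the outer $\ell$ to the \cite{LPV22} bound.

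However, the two technical steps you rely on do not actually produce the stated bound, and this is precisely where the Forbes--Kelley argument is subtle. For the tail, Cauchy--Schwarz against Parseval gives
\[
  \sum_{\abs{S}>k} \abs{\hat f(S)} (1-p)^{\abs{S}}
  \le \Bigl(\sum_{\abs{S}>k} \hat f(S)^2\Bigr)^{1/2}
      \Bigl(\sum_{j>k} \tbinom{n}{j} (1-p)^{2j}\Bigr)^{1/2},
\]
and for constant $p$ the second factor grows with $n$ (it is not even finite in the limit unless $(1-p)^2 n < 1$), whereas the lemma's bound is $n$-free. Using the \cite{LPV22} level-$j$ $L_1$ bound instead of Parseval also fails: the geometric ratio $(w\abs{G})(1-p)$ exceeds $1$, so the sum diverges. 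The actual Forbes--Kelley argument avoids this by a second-moment / variance analysis of the $T$-restricted function (averaging $\E_T$ of the squared bias), and it is this step, not a ``redistribution via Cauchy--Schwarz,'' that produces the $\sqrt{\delta}$ and the square-root-style exponent in $(1-2p)^{k/2}$. Your description of $\sqrt{\delta}$ as coming from ``redistributing the per-character loss'' and of $\sqrt{\gamma}$ as ``absorbing approximation slack'' is not a derivation; as written, the low-level part would yield $\delta \cdot \ell (w\abs{G})^{k+w}$, not $\sqrt{\delta}\cdot\ell(w\abs{G})^{k+w}$. To close the gap you would need to reproduce the Forbes--Kelley decomposition at the level of detail of their Lemma~7.2 rather than the direct Fourier-split you propose.
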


We remark that for every constant $p$, one can show that $n^{-\omega(1)}$-bias plus noise $N_p$ is necessary to fool programs over groups of order $\poly(n)$ with any subconstant error $\eps$.
This follows from \cite{DILV24-II}, where it shows that there exists such a distribution which puts $2\eps$ more probability mass on strings whose Hamming weight is greater than $n/2 + O_p(\sqrt{kn})$ than the uniform distribution.

\begin{lemma}[Width reduction for short products] \label{lem:reduce-short-prod}
  Let $G$ be any group of order $m$.
  Let $D$ and $T$ be the two distributions defined in \Cref{lem:sbpnfp}.
  Let $w \ge \log m$.
  Let $f$ be an $(\terms,3\width,2\log(1/\eps))$-product over $G$, where $\terms \le m^5 \cdot 2^{30\width}$.
  Then with probability at least $1 - \eps$ over $D$ and $T$, the restricted function $f_{D,T}$ is an $(\terms,2\width,2\log(1/\eps))$-product over $G$.
\end{lemma}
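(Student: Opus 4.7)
\emph{Plan.} The plan is to view the pair $(D, T)$ as a random restriction of $f$: each coordinate $j$ with $T_j = 0$ is hardwired to $D_j$, while a coordinate with $T_j = 1$ stays \emph{alive}, taking value $D_j \oplus U_j$, which is uniform in $U_j$ independent of $D_j$. Under this interpretation, $f_{D,T}$ is naturally a block-product with the same $\terms$ group elements $g_i$ and with each block function $f_i$ restricted to the alive coordinates of $I_i$. The original spill block $I_0$ contributes at most $\abs{I_0} \le 2\log(1/\eps)$ alive bits, so the spill budget is automatically maintained. Hence it remains to show that, with probability at least $1 - \eps$ over $T$ (the event depends only on $T$, not on $D$), every non-spill block $I_i$ has at most $2\width$ alive bits, which certifies that $f_{D,T}$ is an $(\terms, 2\width, 2\log(1/\eps))$-product.

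For each non-spill block $i \in [\terms]$, let $A_i := \abs{I_i \cap T^{-1}(1)}$, a sum of at most $3\width$ indicator variables. Since $T$ is $\delta$-almost $k$-wise independent with $\E[T_j] = p = 2^{-C}$ and $k = C(\width + \log(\terms/\eps)) \ge 2\width+1$ for $C$ sufficiently large, applying Markov's inequality to the factorial moment $\binom{A_i}{2\width+1}$ gives
\[
  \Pr[A_i > 2\width] \;\le\; \binom{3\width}{2\width+1}\bigl(p^{2\width+1} + \delta\bigr).
\]
The $\delta$-contribution is negligible because $\delta = (m\width)^{-k}$ is super-polynomially small. The main term satisfies $\binom{3\width}{2\width+1} p^{2\width+1} \le (3e/2)^{2\width+1} \cdot 2^{-C(2\width+1)} \le 2^{-\Omega(C\width)}$ for $C$ a sufficiently large absolute constant.

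A union bound over the $\terms \le m^5 \cdot 2^{30\width}$ non-spill blocks then bounds the total failure probability by $m^5 \cdot 2^{(30 - \Omega(C))\width}$; choosing $C$ large enough relative to the exponent $30$ in the short-product threshold (and using $\width \ge \log m$) makes this at most $\eps$. The main obstacle is precisely this quantitative balancing: the constant $C$ in the definition of $p = 2^{-C}$ and $k = C(\width + \log(\terms/\eps))$ must be chosen once-and-for-all large enough that the per-block binomial tail $2^{-\Omega(C\width)}$ beats the short-product bound $\terms \le m^5 \cdot 2^{30\width}$. Once $C$ is fixed, the remaining structural step — verifying that $f_{D,T}$ is a genuine $(\terms, 2\width, 2\log(1/\eps))$-product, with blocks whose restricted $f_i$ becomes constant only decreasing the effective term count — is routine bookkeeping.
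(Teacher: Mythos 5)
The union bound over all $\terms$ non-spill blocks does not close. Your per-block failure estimate is
\[
\Pr[A_i > 2\width] \;\lesssim\; \binom{3\width}{2\width+1}\,p^{2\width+1}\;\approx\;2^{(3-2C)\width},
\]
which depends only on $\width$ and the fixed constant $p=2^{-C}$ — it has no $\eps$ dependence. After the union bound over $\terms \le m^5\cdot 2^{30\width}$ blocks you get a bound of the form $m^5\cdot 2^{(33-2C)\width}$, i.e.\ $2^{-\Omega(\width)}$ once $C$ is chosen large. But the lemma's hypotheses only give $\width\ge \log m$, and even in the context where the lemma is invoked (\Cref{thm:one-iter}) you only get $\width \ge \log\log(1/\eps)+\log m$; neither forces $\width \gtrsim \log(1/\eps)$. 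So when $\eps$ is tiny — e.g.\ $\eps=2^{-n}$ with $\width \approx \log n$ — the quantity $2^{-\Omega(\width)}$ is polynomially small while $\eps$ is exponentially small, and the claim that the union bound ``makes this at most $\eps$'' is simply false. No once-and-for-all choice of $C$ fixes this, because the gap is not a constant factor but an exponent.

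The paper sidesteps this by \emph{not} demanding that every block shrink. It sets $J := \{i: |T\cap I_i| \ge 2\width\}$ and $Q := \bigcup_{j\in J} I_j\cap T$, and shows (\Cref{claim:reduce-input-length}) that $|Q| \le \log(1/\eps)$ with probability $1-\eps$; separately (\Cref{claim:reduce-junta-size}) it shows the original spill keeps only $\log(1/\eps)$ alive bits. The few bad blocks are then simply absorbed into a new spill $I_0' := (T\cap I_0)\cup Q$ of size $\le 2\log(1/\eps)$, while all remaining blocks automatically have $\le 2\width$ alive bits. Crucially, the event $|Q|\ge\log(1/\eps)$ forces roughly $u \approx \log(1/\eps)/(3\width)$ \emph{independent} bad blocks simultaneously, so the failure probability is roughly $p^{\Theta(\width u)} = 2^{-\Theta(\log(1/\eps))}\le\eps$ — the $\eps$ dependence enters through $u$, which is the step your argument is missing. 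You would also then need to show the original spill shrinks from $2\log(1/\eps)$ down to $\log(1/\eps)$ alive bits, which you currently don't do (you say the spill budget is ``automatically maintained''), since otherwise there is no room in the spill to hold $Q$.
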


We prove \Cref{lem:reduce-short-prod} in \Cref{sec:short-prod}.

\begin{theorem} \label{thm:width-reduction}
  Let $w \ge \log\log(1/\eps)+ 2\log(1/\theta) $.
  Suppose there is a PRG $P_1$ with seed length $s_1$ that $\eps$-fools $(\terms,1,3\log(1/\eps))$-product $f := \prod_{i=0}^\terms f_i$ over a matrix group $\calM$ supported on $\calU_\theta(d)$, where $\terms \ge 2^{3\width} \theta^{-2}$ and each $f_i$ is non-constant.
  Then there is a PRG that fools $(\terms,3\width,2\log(1/\eps))$-products $f = \prod_{i=0}^\ell f_i$ over $\calM$, where $\terms \in [2^{30\width} \theta^{-5}, 2^{45\width} \theta^{-5}]$ and every $f_i$ is non-constant, with seed length $s = s_1 + O_\theta\bigl( \log(1/\eps) + \width + \log\log n \bigr)$.
\end{theorem}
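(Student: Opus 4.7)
The plan is a two-layer construction. An outer pseudorandom restriction collapses each width-$3\width$ block down to a single ``live'' variable; the inner layer invokes $P_1$ to set these live variables. Concretely, sample a hash $h \colon [n] \to [3\width]$ and an independent string $r \in \zo^n$, both from almost $k$-wise independent distributions with $k = O(\width + \log(1/\eps))$, using $O(\width + \log(1/\eps) + \log\log n)$ seed bits via \cite{NaN93,AGHP92}. For each block $I_i$, declare the first coordinate $j_i \in I_i$ with $h(j_i) = 1$ to be pivotal (discarding the rare event that no such $j_i$ exists), set every non-pivotal bit $x_j := r_j$, and let $P_1$ supply the $\terms$ pivotal bits $x_{j_i}$.

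Conditioned on $(h,r)$, each factor $g_i^{f_i(x_{I_i})}$ depends only on $x_{j_i}$ and so equals one of $I, g_i, g_i^{-1}$ up to a constant determined by the restriction of $f_i$ to $r_{I_i \setminus \{j_i\}}$. By the Barrington-style conjugation trick used in the proof of \Cref{lem:p-group-word-polynomial}, the whole restricted product rewrites as a single constant times a width-1 product $\prod_i \tilde g_i^{x_{j_i}}$ in $\calM$. Tail bounds for $k$-wise independence then show that, except with probability $\eps$, every block has exactly one pivot and at least $\terms/(10\width) \ge 2^{3\width}\theta^{-2}$ of the $\tilde g_i$ are non-identity: each non-constant $f_i$ has at least one influential coordinate in $I_i$, so a Chernoff-type argument on the $k$-wise hash gives a constant fraction of blocks whose pivot lands on an influential coordinate. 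Under this good event the hypothesis on $P_1$ applies, with any remaining non-influential pivots (at most $3\log(1/\eps)$ of them) absorbed into $P_1$'s spill budget.

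For the error analysis I would fool each nontrivial irrep $\rho$ of $\calM$ separately and invoke \Cref{claim:closeness}. Splitting via triangle inequality, $\norm{\E_{\text{seed}}[\rho(f(x))] - \E_U[\rho(f(x))]}_{\op}$ is bounded by (i) the error from $P_1$ after conditioning on a good $(h,r)$, which is at most $\eps$ by hypothesis, plus (ii) the error from using almost $k$-wise independent $(h,r)$ instead of truly uniform $(h,r)$. For (ii) I would invoke the width-$3\width$ Fourier bound for group products of \cite{LPV22,ReingoldSV13} combined with mixing: by \Cref{claim:avg-norm} and \Cref{claim:irreps-mixing} each non-identity factor's expected operator norm is $\le 1 - \Omega(\theta^2/2^{\width})$, so the product of factor-wise operator norms decays exponentially in the number of active blocks, which remains $\ge \terms/(6\width)$ on the good event.

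The main obstacle is controlling (ii) uniformly across blocks and representations while compensating for the $\sqrt{|\calM|}$-type factor coming from \Cref{claim:closeness}. The hypothesis $\width \ge \log\log(1/\eps) + 2\log(1/\theta)$ is precisely what allows a $k$-wise independent $r$ to reproduce the block-wise Fourier tails to within error $\eps$: a degree-$k$ Fourier tail of the block functions decays like $(1 - \theta^2/2^{\width})^{k}$, which drops below $\eps$ once $k \gtrsim 2^{\width}/\theta^2 \cdot \log(1/\eps)$, within our budget. Totaling the seed costs, $s_1$ for $P_1$ plus the $O(\width+\log(1/\eps)+\log\log n)$ bits for $(h,r)$ yields the claimed seed length $s_1 + O_\theta(\log(1/\eps) + \width + \log\log n)$.
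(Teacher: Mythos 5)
Your proposal misses the key idea that makes the paper's proof go through with the claimed seed length, and your own seed accounting does not close. The paper never runs a Forbes--Kelley-style fooling argument for the restriction step here. Instead it exploits the ``both small'' trick: since the original product has $\ell \ge 2^{30w}\theta^{-5}$ non-constant blocks, \Cref{claim:long-prod-small-bias} gives $\norm{\E[f(U)]}_\op \le \eps$ outright; then \Cref{lem:1bit-product} shows a typical $D + T\wedge(\cdot)$ restriction still leaves at least $2^{3w}\theta^{-2}$ non-constant width-$1$ factors, so (using $w \ge \log\log(1/\eps)$) $\norm{\E[f_{D,T}(U)]}_\op \le \eps$ as well; finally $P_1$ keeps $\norm{\E[f_{D,T}(P_1)]}_\op \le 2\eps$. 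Hence $\norm{\E[f(U)] - \E[f(P)]}_\op \le \norm{\E[f(U)]}_\op + \norm{\E[f(P)]}_\op = O(\eps)$ by the triangle inequality through $0$, with no comparison between the pseudorandom and uniform restriction distributions needed. This is what lets $\delta$ be as mild as $\theta^k$ (the restriction only has to preserve the \emph{combinatorial} structure of \Cref{lem:1bit-product}, not fool Fourier tails) and gives seed $O_\theta(\log(1/\eps)+w+\log\log n)$ without the $\log w$ factor that a Forbes--Kelley-type bound would force.

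In your step (ii) you try to directly argue that pseudorandom $(h,r)$ is as good as uniform $(h,r)$, which is exactly what the paper avoids, and your own numbers show why: you state the degree-$k$ Fourier tail decays like $(1-\theta^2/2^w)^k$ and requires $k \gtrsim (2^w/\theta^2)\log(1/\eps)$, but this is exponentially larger than your budget $k = O(w+\log(1/\eps))$. When $w = \log\log(1/\eps) + 2\log(1/\theta)$ your requirement is $k \gtrsim \log^2(1/\eps)/\theta^4$, while you are sampling with only $O(w+\log(1/\eps))$ bits. You have also misread the role of $w \ge \log\log(1/\eps)$: it is not there to tame Fourier tails, it is there so that the surviving count $2^{3w}\theta^{-2}$ of non-constant width-$1$ factors still exceeds $\Theta(\theta^{-2}\log(1/\eps))$, making \Cref{claim:long-prod-small-bias} applicable to the restricted product. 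Separately, your hash $h\colon[n]\to[3w]$ keeps roughly one pivot per block, so the restricted product has $\approx\ell/(3w)$ live factors; the paper instead uses noise rate $p = 2^{-23w}\theta^3$ so most blocks become fully constant, which is what \Cref{lem:1bit-product} actually analyzes; these are not interchangeable without redoing the survival/junta bounds. As written, the proposal does not establish the theorem.
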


\begin{corollary}
  \Cref{thm:width-reduction} applies to products over any $\theta$-mixing group $G$ with $\eps$ replaced by $\eps/\sqrt{\abs{G}}$.
\end{corollary}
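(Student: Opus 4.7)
The plan is to apply \Cref{thm:width-reduction} separately to each non-trivial irreducible representation of $G$ and combine the results via \Cref{claim:closeness}. The key observation is that if $G$ is $\theta$-mixing, then for every non-trivial irrep $\rho \in \hG$ and every non-identity $g \in G$, the matrix $\rho(g)$ lies in $\calU_\theta(d_\rho)$ by definition. Hence, for any $(\ell, 3w, 2\log(1/\eps))$-product $f = \prod_{i} g_i^{f_i(x_{I_i})}$ over $G$, the matrix-valued product $\rho \circ f := \prod_{i} \rho(g_i)^{f_i(x_{I_i})}$ is a product over a matrix group supported on $\calU_\theta(d_\rho) \cup \{I_{d_\rho}\}$, so \Cref{thm:width-reduction} applies to $\rho \circ f$ with target operator-norm error $\eps' := \eps/\sqrt{\abs{G}}$.

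To supply the hypothesis of \Cref{thm:width-reduction} at level $\eps'$, I would take the given PRG $P_1$ fooling $(\ell, 1, 3\log(1/\eps'))$-products over $G$ in statistical distance with error $\eps'/2$. Such a $P_1$ automatically $\eps'$-fools, in operator norm, each matrix-valued product $\rho \circ f'$ for every non-trivial $\rho$, since for $G$-valued random variables $X,Y$ and any unitary representation $\rho$,
\[
  \norm{\E[\rho(X)] - \E[\rho(Y)]}_\op
  \le \sum_{g \in G} \abs{X(g)-Y(g)} \cdot \norm{\rho(g)}_\op
  = 2 \cdot \tfrac{1}{2}\sum_{g} \abs{X(g)-Y(g)},
\]
using that each $\rho(g)$ is unitary. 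Crucially, the construction and seed length in \Cref{thm:width-reduction} depend only on $\theta$ (and the shared parameters $\ell, w, \eps'$) and not on $d_\rho$, so a \emph{single} output PRG $P$ fools the matrix-valued products under all non-trivial irreps of $G$ simultaneously with operator-norm error $\eps'$.

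Finally, \Cref{claim:closeness} converts the per-irrep operator-norm guarantees back to a bound on statistical distance over $G$: the distance between $f(P)$ and $f(U)$ is at most $\sqrt{\abs{G}} \cdot \eps' = \eps$, which is the desired conclusion. The main point requiring care is ensuring that the same PRG $P$ works uniformly across all irreps; this is precisely what the $d$-independence of the seed length in \Cref{thm:width-reduction} provides. Consequently, the factor $\sqrt{\abs{G}}$ enters exactly once, at the final Parseval-style step of \Cref{claim:closeness}, and correspondingly the error parameter throughout the reduction is shifted from $\eps$ to $\eps/\sqrt{\abs{G}}$ as claimed.
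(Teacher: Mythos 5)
Your proposal is correct and takes essentially the same route as the paper: decompose into irreps, apply \Cref{thm:width-reduction} to each matrix-valued product $\rho\circ f$ (noting that $\theta$-mixing puts each non-identity $\rho(g)$ in $\calU_\theta(d_\rho)$), and recombine via \Cref{claim:closeness}, which produces the $\sqrt{\abs{G}}$ loss. You spell out two points the paper leaves implicit -- that a statistical-distance PRG yields the operator-norm hypothesis for every irrep, and that the construction in \Cref{thm:width-reduction} is $d_\rho$-independent so a single PRG serves all irreps simultaneously -- but these are elaborations of the same argument, not a different one.
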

\begin{proof}
  By definition, all its irreps $\rho$ belong to $\calU_\theta(d_\rho)$.
  It follows from \Cref{claim:closeness} that it suffices to fool all its irreps.
\end{proof}

We prove \Cref{thm:width-reduction} in \Cref{sec:width-reduction}.
We now show how \Cref{thm:one-iter} follows from \Cref{lem:sbpnfp,lem:reduce-short-prod} and \Cref{thm:width-reduction}.

\begin{proof}[Proof of \Cref{thm:one-iter}]
  Let $P_1$ be the PRG that $\eps$-fools $(\ell,1,3\log(1/\eps))$-products with seed length $s_1$.
  Applying \Cref{thm:width-reduction} with $P_1$, we obtain a PRG $P_{\LONG}$ that $\eps$-fools every $(\ell,3w,2\log(1/\eps))$-product $f = \prod_{i=0}^\ell f_i$, where $\ell \in [m^5 2^{30w}, m^5 2^{45w}]$ and every $f_i$ is non-constant, with seed length $s_\LONG = s_1 + O_m(\log(1/\eps) + w + \log\log n)$.
	We now sample the distributions $D,T$ in \Cref{lem:sbpnfp}, and output
	\[
    (D + T \wedge P(U)) + P_\LONG .
	\]
  Using \Cref{lem:sbpnfp} and $\ell \le 2^{O_m(w)}$, sampling $D$ and $T$ uses $s_\SHORT = O_m((\log(1/\eps) + w)\log w + \log\log n)$ bits.
  So altogether this takes $s + s_\LONG + s_\SHORT = s + s_1 + O_m((\log(1/\eps) + w) \log w + \log\log n)$ bits.

\smallskip

  Let $f$ be an $(m^5 2^{45\width}, 3\width, 2\log(1/\eps))$-product with $\terms$ many non-constant $f_i$'s.
	If $\terms \ge m^5 2^{30\width}$, then $P_\LONG$ $\eps$-fools it.
  Otherwise, $\terms \le m^5 2^{30\width}$ and so $f$ is an $(m^5 2^{30\width}, 3\width, 2\log(1/\eps))$-product.
  So by \Cref{lem:reduce-short-prod}, with probability at least $1-\eps$ over the choices of $D$ and $T$, the function $f_{D,T}$ is an $(m^5 2^{30\width}, 2\width, 2\log(1/\eps))$-product, and therefore can be $\eps$-fooled using the generator $P$ given by the assumption.
  The total error is $O(\eps)$.
\end{proof}

\section{Width reduction for long products: Proof of \Cref{thm:width-reduction}} \label{sec:width-reduction}
In this section, we prove \Cref{thm:width-reduction}.
Let $f = \prod_{i=0}^\ell f_i$ be an $(\terms,3\width,2\log(1/\eps))$-product over a matrix group $\calM \subseteq \calU_\theta(d)$, where $\terms \in [2^{30\width}\theta^{-5}, 2^{45\width}\theta^{-5}]$ and each $f_i$ is non-constant.
Note that when a product $f$ has this many non-constant functions, the ``bias'' $\norm{\E[f(U)]}_\op$ of $f$ is \emph{doubly exponentially small} in $\width$, i.e. at most $\exp(-2^{2\width})$ (see \Cref{claim:long-prod-small-bias}, which is at most $\eps$ whenever $\width \ge \log\log(1/\eps)$).
Following \cite{MRT19}, we will pseudorandomly restrict most of the coordinates of $f$ and show that the bias of a typical restricted product remains bounded by $\eps$.
More importantly, we will show that this restricted product has \emph{width 1} (with a small spill).
Therefore, it suffices to construct a PRG for width-1 products (with a small spill).

We remark that previous works showed that a typical restricted product has degree at most 16, as opposed to 1.
This difference is already crucial in fooling products over $\Z_m$ for composites $m$ with good seed lengths, as we do not have (bit)-PRGs even for degree-2 polynomials over $\Z_6$.

\subsection{The reduction}

We will use the following standard construction of $\delta$-almost $k$-wise independent distributions with marginals $p$.
\begin{claim} \label{claim:almost-indept-w-marginals}
  There exists an explicit $\delta$-almost $k$-wise independent distribution $T$ on $\zo^n$ with $\E[T_i] = 2^{-b}$ for every $i \in [n]$ which can be sampled using $O(b + k + \log(1/\delta) + \log\log n)$ bits.
\end{claim}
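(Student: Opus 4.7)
The natural construction is to encode $T$ via a pseudorandom hash: let $h : [n] \to \{0,1\}^b$ be a $\delta$-almost $k$-wise independent hash function with uniform one-point marginals, and define $T_i := \mathbf{1}[h(i) = 0^b]$. Then $\E[T_i] = \Pr[h(i) = 0^b] = 2^{-b}$ exactly, and for any $k$ distinct coordinates $i_1, \ldots, i_k$, the tuple $(T_{i_1}, \ldots, T_{i_k})$ is a deterministic function of $(h(i_1), \ldots, h(i_k))$, which is within statistical distance $\delta$ of uniform on $(\{0,1\}^b)^k$; hence $(T_{i_1}, \ldots, T_{i_k})$ is $\delta$-close to $k$ i.i.d.\ copies of $\mathrm{Bernoulli}(2^{-b})$, as required.

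To construct $h$, I would start with an $\epsilon$-biased distribution $Y$ on $\{0,1\}^{bn}$ parsed as $n$ blocks of $b$ bits giving $h(1), \ldots, h(n)$. A key observation is that we do not need generic $kb$-wise almost independence; it suffices to fool the $2^k$ product events $\bigwedge_{l=1}^k \mathbf{1}[\text{block}_{i_l} = v_l]$ for $v \in \{0,1\}^k$. Each such indicator has Fourier $\ell_1$ norm at most $2^k$ (as a product of $k$ block-wise indicators, each of $\ell_1$ norm at most $2$), so taking $\epsilon = \delta \cdot 2^{-O(k)}$ is enough. Plugging into the Naor--Naor~\cite{NaN93} or AGHP~\cite{AGHP92} $\epsilon$-biased construction, this yields seed length $O(\log(bn) + k + \log(1/\delta)) = O(b + k + \log n + \log(1/\delta))$.

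To sharpen the $\log n$ dependence to $\log\log n$, I would apply the standard two-stage trick: first map $[n]$ into a much smaller universe $[M]$ with $M = \mathrm{poly}(k, b, 1/\delta)$ via a $k$-wise independent hash $\pi : [n] \to [M]$ (specifiable with $O(k \log M)$ bits, and collision-free on any fixed $k$-tuple with probability at least $1 - \delta$), and then apply the $\epsilon$-biased construction on the collapsed universe $[M]$ rather than on $[n]$. Since $\log M = O(\log k + \log b + \log(1/\delta))$, the total seed cost becomes $O(b + k + \log(1/\delta) + \log\log n)$, where the $\log\log n$ term is the cost of specifying $\pi$ from a suitable family (e.g.\ by composing a pairwise-independent projection into a domain of size $\mathrm{poly}(n)$ with a second evaluation step, as in the standard reductions).

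The main obstacle is exactly this last refinement: every naive application of almost bounded independence introduces a $\log n$ term, and achieving $\log\log n$ requires care in ensuring that the initial $[n] \to [M]$ hashing step (i)~preserves exact uniform marginals, (ii)~does not lose more than $\delta$ in $k$-wise statistical distance, and (iii)~itself costs only $O(\log\log n)$ bits. Once these are verified, correctness and the seed bound follow by combining the analysis of the $\epsilon$-biased layer on $[M]$ with the Fourier-$\ell_1$ estimate for the product-AND events described above.
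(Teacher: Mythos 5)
Your overall plan matches the paper's: sample an almost-$(kb)$-wise-biased source on $\zo^{nb}$, parse it into $n$ blocks of $b$ bits, and set $T_i$ to the indicator that block $i$ hits a fixed pattern (equivalently an $\AND$ of $b$ bits). The $\ell_1$-Fourier bookkeeping showing that a block-indicator has $\ell_1$ norm $O(1)$, so that $\eps = \delta \cdot 2^{-O(k)}$ suffices, is also the content of the Meka--Reingold--Tal claim that the paper cites.

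However, there is a genuine gap at the first step. You assert ``$\E[T_i] = \Pr[h(i) = 0^b] = 2^{-b}$ exactly,'' yet the $h$ you then construct --- blocks of a small-bias string --- does \emph{not} have exactly uniform block marginals: an $\eps$-biased distribution can place probability noticeably different from $2^{-b}$ on any fixed $b$-bit pattern. The statement requires the exact equality $\E[T_i]=2^{-b}$, and this is used downstream. The paper's fix is to additionally draw $b$ truly uniform bits $U_b$, XOR \emph{every} block with the same $U_b$, and output $T_i = \AND_b(D_i \oplus U_b)$. For each fixed $i$, $D_i \oplus U_b$ is exactly uniform on $\zo^b$ (XOR with an independent uniform string is uniform regardless of the other argument's distribution), so $\E[T_i]=2^{-b}$ exactly; and XOR by a common mask is a shift, which preserves all biases, so the almost-$k$-wise analysis still goes through. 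The cost is only $b$ extra seed bits. You do list ``preserves exact uniform marginals'' among the requirements at the end, but you attribute the concern to your second-stage $[n]\to[M]$ hashing, whereas it already arises at the first stage and the XOR trick is the clean resolution.

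A secondary issue is your route to $\log\log n$. The universe collapse $[n]\to[M]$ you sketch does not obviously fit in $O(\log\log n)$ bits: the pairwise-independent ``projection into a domain of size $\mathrm{poly}(n)$'' you invoke already costs $\Theta(\log n)$ bits to specify, and a $k$-wise hash from $[n]$ likewise costs $\Theta(k\log n)$ by the polynomial construction. The paper does no universe reduction at all; it simply cites the Naor--Naor and Alon--Goldreich--H\aa{}stad--Peralta constructions of $(\eps,t)$-biased distributions on $\zo^N$, whose seed length $O(\log t + \log(1/\eps) + \log\log N)$ already has the $\log\log N$ built in (via a different mechanism: the inner randomness of the generator is itself drawn small-bias, not via index hashing). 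Instantiating with $N=nb$, $t=kb$, $\eps = 2^{-k}\delta$ gives the stated bound directly.
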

\begin{proof}
  We sample an $(\delta,kb)$-biased distribution $D$ on $\zo^{nb}$ and $b$ uniform bits $U_b$.
  By standard construction~\cite{NaN93,AGHP92}, $D$ can be sampled using $O(b + \log(k/\eps) + \log\log n)$ bits.
  Write $D = (D_1, \ldots, D_n)$ where each $D_i \in \zo^b$.
  We output $T \in \zo^n$, where $T_i = \AND_b(D \oplus U_b)$, where $\AND_b$ is the $\AND$ function on $b$ bits.
  We have $\E[T_i] = 2^{-b}$ because $U_b$ is uniform.
  By \cite[Claim 3.7]{MRT19}, $T$ is $(\eps \cdot 2^k)$-almost $k$-wise uniform.
  Setting $\eps = 2^{-k} \cdot \delta$ proves the claim.
\end{proof}

Let $C$ be a sufficiently large constant.
Let
\begin{align}
  k &= C (\log(1/\eps) + \width) \nonumber \\
  \delta &= \theta^k  \label{eq:long-prod-params}\\
  p &= 2^{-23\width} \theta^3 . \nonumber 
\end{align}
Let $D$ and $T$ be two $\delta$-almost $k$-wise independent distributions, with $\E[D_i]=1/2$ and $\E[T_i] = p$ for every $i \in [n]$, and let $P_1$ be the PRG given by the theorem.
The generator is 
\[
  P := D + T \wedge P_1 .
\]
By \Cref{claim:almost-indept-w-marginals}, this uses $s_1 + O_\theta(\log(1/\eps) + \width + \log\log n)$ bits.

\subsection{Analysis}

We first state a claim showing that if the number of non-constant $f_i$'s $\ell$ in a block product $f$ is much greater than its width $\width$, then the bias $\norm{\E[f(U)]}_\op$ is small.
We defer its proof to \Cref{sec:long-prod-small-bias}.

\begin{claim} \label{claim:long-prod-small-bias}
  For integers $\width$ and $q$, let $f = \prod_{i=0}^\ell f_i$ be an $(\terms,\width,q)$-product over some matrix group $\calM$ supported on $\calU_\theta(d)$ for some $\terms \ge 2^{2\width+2}\theta^2 \log(1/\eps)$, where each $f_i$ is non-constant.
  Then $\norm{\E[f(U)]}_\op \le \eps$.
\end{claim}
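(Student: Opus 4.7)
The plan is to exploit the disjointness of the blocks $I_i$ to decouple $\E[f(U)]$ into a product of independent matrix-valued expectations, and then to bound each factor in operator norm using the mixing assumption on $\calM$. Writing $f(x) = \prod_{i=0}^\ell M_i^{f_i(x_{I_i})}$ with each $M_i \in \calU_\theta(d)$, and setting $p_i := \Pr_U[f_i(U_{I_i}) = 1]$, I observe that the disjointness of the $I_i$ makes the bits $f_i(U_{I_i})$ mutually independent under uniform $U$, so even though matrix multiplication is non-commutative the expectation of the product factors as
\[
  \E[f(U)] \;=\; \prod_{i=0}^\ell \E\bigl[M_i^{f_i(U_{I_i})}\bigr] \;=\; \prod_{i=0}^\ell \bigl((1-p_i) I + p_i M_i \bigr).
\]
By submultiplicativity of $\norm{\cdot}_\op$, it then suffices to bound $\norm{(1-p_i) I + p_i M_i}_\op$ for each $i \ge 1$, with the trivial bound $\le 1$ for the spill factor $i = 0$ (a convex combination of unitaries).

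The per-factor estimate is a strengthening of \Cref{claim:avg-norm} to arbitrary mixing weights $p_i$. Since $(1-p_i) I + p_i M_i$ is a polynomial in the unitary $M_i$, it is normal, so its operator norm equals the maximum modulus of its eigenvalues $(1-p_i) + p_i e^{2\pi i \theta_j}$, where $\theta_j$ ranges over the eigenphases of $M_i$. A direct calculation gives
\[
  \bigl|(1-p_i) + p_i e^{2\pi i \theta_j}\bigr|^2 \;=\; 1 - 4 p_i (1-p_i) \sin^2(\pi \theta_j),
\]
and the assumption $M_i \in \calU_\theta(d)$ together with $|\sin(\pi x)| \ge 2|x|$ on $[0, 1/2]$ yields $\sin^2(\pi \theta_j) \ge 4 \theta^2$. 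Taking square roots and using $\sqrt{1 - y} \le 1 - y/2$, we obtain $\norm{(1-p_i)I + p_i M_i}_\op \le 1 - 8 p_i (1-p_i) \theta^2$.

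Finally, I use the non-constancy of $f_i$ on a block of width at most $\width$ to conclude $p_i \in [2^{-\width}, 1 - 2^{-\width}]$, whence $p_i(1-p_i) \ge 2^{-\width-1}$. Multiplying the $\ell$ per-factor bounds gives
\[
  \norm{\E[f(U)]}_\op \;\le\; \bigl(1 - 4 \theta^2 \cdot 2^{-\width}\bigr)^{\ell} \;\le\; \exp\bigl(-4 \theta^2 \cdot 2^{-\width}\, \ell\bigr),
\]
which is at most $\eps$ under the hypothesis on $\ell$. There is no real obstacle here: the main technical step is just the eigenvalue computation paired with the non-constancy lower bound on $p_i(1-p_i)$, and the only conceptual care needed is to recognize that independence of the disjoint blocks allows factoring the expected non-commutative product into a product of expectations, after which submultiplicativity of $\norm{\cdot}_\op$ completes the argument.
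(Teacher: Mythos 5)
Your proof is correct, and it shares the skeleton of the paper's argument -- factor $\E[f(U)]$ into per-block expectations via independence of the disjoint blocks, apply submultiplicativity of $\norm{\cdot}_\op$, bound each factor, multiply -- but the per-factor bound is obtained by a genuinely different and more direct route. The paper proves a general-purpose bound (\Cref{claim:bounded-bias}) for an arbitrary non-constant $\calM$-valued function $g\colon\zo^\width\to\calM$: it random-restricts $g$ to a non-constant one-bit function via \Cref{claim:restrict} (this succeeds with probability $\ge 2^{-(2\width-1)}$) and then invokes \Cref{claim:avg-norm}, yielding $\norm{\E[g(U)]}_\op \le 1-2^{-(2\width+2)}\theta^2$. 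You instead exploit the specific Boolean-exponent structure of block-product terms: since each term is $M_i^{f_i}$ with Boolean $f_i$, the factor $\E[M_i^{f_i(U)}]$ is exactly $(1-p_i)I + p_iM_i$, a normal matrix whose eigenvalue moduli you compute exactly as $\sqrt{1-4p_i(1-p_i)\sin^2(\pi\theta_j)}$; combining with $\sin^2(\pi\theta_j)\ge 4\theta^2$ and the non-constancy bound $p_i(1-p_i)\ge 2^{-\width-1}$ gives the tighter per-factor bound $1-4\cdot2^{-\width}\theta^2$. Your bound improves the paper's $2^{-(2\width+2)}$ to $2^{-(\width-2)}$, an exponential improvement in $\width$, at the (modest) cost of being specific to terms of the form $g^f$ with Boolean $f$ rather than arbitrary $\calM$-valued block functions. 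For the statement at hand either version more than suffices. One minor note on the statement itself: as written, the hypothesis $\ell\ge 2^{2\width+2}\theta^2\log(1/\eps)$ should read $\theta^{-2}$ (this is what both your computation and the paper's final $\exp(-\ell\cdot2^{-(2\width+2)}\theta^2)\le\eps$ step actually require, and matches how the claim is invoked in the proof of \Cref{thm:width-reduction}); you correctly work with the corrected form.
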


Recall that $\terms \in [2^{30\width} \theta^{-5}, 2^{45 \width} \theta^{-5}]$.
Given $D,T$, let $f_{D,T}\colon \zo^T \to \calM$ be the restricted product
\[
  f_{D,T}(x) := f(D + T \wedge x) = \prod_{i=0}^\terms f_i(D + T \wedge x) .
\]
We use $f_{D,T,i}(x)$ to denote $f_i(D + T \wedge x)$.

The following lemma shows that with high probability over $D$ and $T$, the function $f_{D,T}$ is an $(\terms,1,3\log(1/\eps))$-product, that is, a group program with a small spill.
Note that this lemma is true for products over any group.

\begin{lemma} \label{lem:1bit-product}
  Let $D$ and $T$ be two distributions on $\zo^n$ defined in \cref{eq:long-prod-params}.
  Let $w \ge \log\log(1/\eps) + 2\log(1/\theta)$ and $f = \prod_i f_i$ be an $(\terms,3\width,2\log(1/\eps))$-product, where $\terms \in [2^{30\width}\theta^{-5}, 2^{45 \width}\theta^{-5}]$ and each $f_i$ is non-constant.
  Then with probability $1 - \eps$ over $D$ and $T$, the function $f_{D,T}$ is an $(\terms,1,3\log(1/\eps))$-product, where $\terms \ge 2^{3\width} \theta^{-2}$ and each $f_{D,T,i}$ is non-constant.
\end{lemma}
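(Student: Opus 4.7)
For each realization of $(D,T)$ write $L_i := I_i \cap T^{-1}\{1\}$ for the live coordinates in block $I_i$. Classify each index $i \ge 1$: blocks with $|L_i| = 0$ contribute a constant group element; blocks with $|L_i| = 1$ contribute a width-$1$ factor (possibly still constant); blocks with $|L_i| \ge 2$ are ``overflow'' and their live bits will be shifted into the new spill, together with the live bits of the original spill block $I_0$. The plan is to establish two events, each of probability $\ge 1-\eps$: (a) the total number of overflow bits $\sum_{i : |L_i|\ge 2} |L_i|$ is at most $\log(1/\eps)$, so together with the trivial bound $|I_0| \le 2\log(1/\eps)$ the new spill fits inside $3\log(1/\eps)$ bits; and (b) at least $2^{3\width}\theta^{-2}$ indices $i$ satisfy $|L_i|=1$ with $f_{D,T,i}$ non-constant.

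For (a), set $Y_i := |L_i| \cdot \mathbf{1}[|L_i|\ge 2]$. Since $T$ is $\delta$-almost $k$-wise independent with $|I_i| \le 3\width \le k$, one has $\E[Y_i] \le O(\width^3 p^2) + O(\width\,\delta)$. Plugging in $\terms \le 2^{45\width}\theta^{-5}$, $p = 2^{-23\width}\theta^3$, and $\delta = \theta^k$ with $k = C(\log(1/\eps)+\width)$, the total $\E[\sum_i Y_i]$ is bounded by $\terms \cdot O(\width^3 p^2) = O(\width^3 \cdot 2^{-\width}\theta)$, comfortably below $\log(1/\eps)$. Since the $Y_i$ are functions of disjoint $3\width$-bit windows of $T$, applying the $k$-wise tail inequality (\Cref{lem:k-wise-tail}) yields $\Pr[\sum_i Y_i > \log(1/\eps)] \le \eps$.

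For (b), fix $i \ge 1$. Since $f_i$ is non-constant, there is a sensitive edge of the Boolean cube on $I_i$: some $x^\ast \in \zo^{I_i}$ and $j_i \in I_i$ with $f_i(x^\ast) \neq f_i(x^\ast \oplus e_{j_i})$. Let $E_i$ be the event that $T_{j_i}=1$, $T_{I_i\setminus\{j_i\}}\equiv 0$, and $D_{I_i\setminus\{j_i\}} = x^\ast|_{I_i\setminus\{j_i\}}$; whenever $E_i$ holds, $|L_i|=1$ (the live bit is $j_i$) and $f_{D,T,i}$ is non-constant by the choice of $x^\ast, j_i$. Using $k$-wise independence of $D$ and $T$ on the $\le 3\width$ coordinates involved, $\Pr[E_i] \ge p(1-p)^{3\width-1}\,2^{-(3\width-1)} - O(\delta) = \Omega(2^{-26\width}\theta^3)$. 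Summing over $i$ gives expected count $\ge \terms\cdot 2^{-26\width}\theta^3 \ge 2^{4\width}\theta^{-2}$ from $\terms \ge 2^{30\width}\theta^{-5}$. Because the $\mathbf{1}[E_i]$ depend on disjoint coordinates of $(D,T)$, a second application of \Cref{lem:k-wise-tail} (as a lower tail on this sum) shows the realized count is at least $2^{3\width}\theta^{-2}$ except with probability $\eps$.

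Combining (a) and (b) by a union bound and absorbing all constant (and constant-after-restriction) factors into neighbouring group elements $g_j$ yields an $(\terms',1,3\log(1/\eps))$-product with $\terms' \ge 2^{3\width}\theta^{-2}$ and each surviving $f_{D,T,i}$ non-constant, as required. The main obstacle is the interlocking choice of $p$: controlling overflow forces $p$ small (one needs roughly $p^2\,\terms\,\width^{O(1)} \ll \log(1/\eps)$), while preserving non-constancy demands $p$ large enough that $\terms\cdot p \cdot 2^{-3\width}$ still beats $2^{3\width}\theta^{-2}$; the regime $\terms \in [2^{30\width}\theta^{-5}, 2^{45\width}\theta^{-5}]$ together with $p = 2^{-23\width}\theta^3$ is exactly what threads this needle. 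A secondary technical point is verifying that \Cref{lem:k-wise-tail} applies at the chosen $k = C(\log(1/\eps)+\width)$ and $\delta = \theta^k$; this is routine given that each statistic we concentrate is a function of at most $k$ coordinates within each block.
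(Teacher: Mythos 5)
Your high-level plan is correct and matches the paper's: you need to show (i) that at least $2^{3\width}\theta^{-2}$ blocks reduce to non-constant width-one factors, and (ii) that the ``overflow'' coordinates (from blocks with $\ge 2$ live bits, together with the original spill) number at most $3\log(1/\eps)$. Your argument for (i) is essentially the paper's: for each non-constant $f_i$ fix a sensitive coordinate, lower-bound the probability that $T$ isolates it while $D$ matches the sensitive assignment, and apply \Cref{lem:k-wise-tail} to the resulting indicator sum. The paper carries this out by noting the indicators $\Id(i\in J_1)$ are $\delta$-almost $\lfloor k/(3\width)\rfloor$-wise independent and choosing $t$ explicitly in the tail bound; your sketch elides the bookkeeping but threads the same needle on $p$, $\ell$, and $\theta$.

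The gap is in (a). You set $Y_i := |L_i|\cdot\mathbf{1}[|L_i|\ge 2]$, compute $\E[\sum_i Y_i]=O(1)$, and then invoke \Cref{lem:k-wise-tail} to conclude $\Pr[\sum_i Y_i>\log(1/\eps)]\le\eps$. That step does not follow. First, the $Y_i$ take values in $[0,3\width]$, not $[0,1]$ as the lemma requires; rescaling merely relocates the problem. More seriously, the lemma only controls the event $\{|X-\mu|\ge\mu/2\}$, with bound $O(t/\mu)^t+O(\ell/\mu)^t\gamma$. Here $\mu=O(1)$ while the target threshold is $\log(1/\eps)\gg\mu$, so to make the $\gamma$-term small you would need $t\gtrsim\log(1/\eps)$, at which point $O(t/\mu)^t$ is astronomically large. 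The lemma is only useful in the regime $\mu\gg t$ (precisely the regime of part (i), where $\mu\approx 2^{4\width}\theta^{-2}$), and it is useless when $\mu\ll t$, which is what you face here. The paper instead uses a direct combinatorial union bound: if $|Q|\ge\log(1/\eps)$, then some $u\le\lceil\log(1/\eps)/2\rceil$ of the sets $I_i\cap T$ with $|I_i\cap T|\ge 2$ jointly contain at least $2u$ free coordinates, and this event has probability at most $\binom{\ell}{u}\binom{3\width u}{2u}\bigl(p^{2u}+\delta\bigr)$; plugging in $u\approx\log(1/\eps)/(2\width)$ yields the $\eps$ bound. You should replace your tail-bound invocation with such a union bound (equivalently, a $2u$-th moment argument), since the concentration lemma as stated cannot give the conclusion you need for (a).
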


\Cref{thm:width-reduction} follows from \Cref{claim:gkm-prg,claim:long-prod-small-bias,lem:1bit-product}.

\begin{proof}[Proof of \Cref{thm:width-reduction}]
  As $\terms \ge 2^{2\width+2} \theta^{-2} \log(1/\eps)$, by \Cref{claim:long-prod-small-bias}, we have $\norm{\E[f(U)]}_\op \le \eps$.
  By \Cref{lem:1bit-product}, with probability $1-\eps$ over $D$ and $T$, the restricted function $f_{D,T} = \prod_i f_{D,T,i}$ is an $(\terms,1,3\log(1/\eps))$-product, where $\terms \ge 2^{3\width} \theta^{-2}$ and each $f_{D,T,i}$ is non-constant.
  As $\width \ge \log\log(1/\eps)$, again by \Cref{claim:long-prod-small-bias}, we have $\norm{\E[f_{D,T}(U)]}_\op \le \eps$.
  By our assumption, we have $\norm{\E[f_{D,T}(P_1)]}_{\op} \le \norm{\E[f_{D,T}(U)]}_\op + \eps \le 2\eps$.
  So altogether we have $\norm{\E[f(U)] - \E[f(G(U))]}_\op \le O(\eps)$.
  The seed length follows from the construction.
\end{proof}

\paragraph{Proof of \Cref{lem:1bit-product}.}
To get some intuition, think of $\theta$ as a constant.
Recall that the number of functions $\ell$ is roughly between $2^{30w}$ and $2^{45w}$, and $T$ is keeping each bit free with probability $p = 2^{-25w}$.
Therefore, under a typical restriction, we expect for most functions in the product, only 1 bit is set to free, and very few functions have 2 free bits.

We first need to lower-bound the probability that a non-constant function remains non-constant under a random restriction.

\begin{claim} \label{claim:restrict}
  Let $g$ be a non-constant function on $w$ bits.
  For $p \in [0,1]$, let $T$ be the distribution on $\zo^\width$, where the coordinates $T_i$'s are independent and $\E[T_i] = p$ for each $i \in [\width]$.
  With probability at least $p \cdot ((1-p)/2)^{\width-1}$ the function $g_{U,T}(x) := g(U + T \wedge x)$ is a non-constant function on $1$ bit.
\end{claim}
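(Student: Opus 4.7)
The plan is to exhibit an explicit event in the product space of $U$ and $T$ that forces $g_{U,T}$ to be a non-constant function of exactly one bit, and then read off its probability from the independence structure of $U$ and $T$.

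First, I would use non-constancy of $g$ to extract a ``witness edge'' in the hypercube: since $g$ is non-constant, there exist $x,y\in\zo^{w}$ with $g(x)\ne g(y)$, and along any Hamming-path from $x$ to $y$ one finds two neighbors differing in a single coordinate whose $g$-values disagree. Equivalently, there is a coordinate $i\in[w]$ and a setting $z\in\zo^{[w]\setminus\{i\}}$ of the other coordinates such that $g(z,0)\ne g(z,1)$ (writing the $i$-th coordinate last for notational convenience).

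Next, I would define the event $E$: $T_{i}=1$, $T_{j}=0$ for every $j\ne i$, and $U_{j}=z_{j}$ for every $j\ne i$ (with $U_{i}$ left free). By independence of the coordinates of $T$, of the coordinates of $U$, and between $T$ and $U$, the probability factors as
\[
  \Pr[E] \;=\; p\cdot\prod_{j\ne i}(1-p)\cdot\prod_{j\ne i}\tfrac{1}{2} \;=\; p\cdot\Bigl(\tfrac{1-p}{2}\Bigr)^{w-1}.
\]

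Finally, I would verify that under $E$ the restricted function has the desired form: at each $j\ne i$ the value entering $g$ is $(U+T\wedge x)_{j}=U_{j}=z_{j}$, while at coordinate $i$ it is $U_{i}+x_{i}$. Hence $g_{U,T}(x)=g(z,\,U_{i}+x_{i})$ depends only on $x_{i}$ and takes the two values $g(z,U_{i})$ and $g(z,1-U_{i})$, which are distinct by the choice of the witness edge. Thus $g_{U,T}$ is a non-constant function on one bit on the event $E$, giving the claimed bound. I do not anticipate any real obstacle; the only care is to keep track of the ``free $\leftrightarrow T_{i}=1$'' convention and to use that $U$ and $T$ are jointly independent across all coordinates so that the probability of $E$ factors cleanly.
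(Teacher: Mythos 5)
Your proof is correct and takes essentially the same route as the paper: find a sensitive coordinate $i$ with a witness setting $z$ of the other coordinates, condition on $T$ selecting exactly $\{i\}$ and $U$ matching $z$ off coordinate $i$, and multiply probabilities by independence. The only cosmetic difference is that you spell out the final verification that the restriction is a non-constant one-bit function, which the paper leaves implicit.
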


\begin{proof}
  Since $g$ is non-constant, there is an $x \in \zo^\width$ and a coordinate $j \in [\width]$ such that $g(x + e_j) \ne g(x)$.
  The probability that only the coordinate $T_j$ is 1 (and the rest are 0), and $U$ agrees with $x$ on the rest of the $\width-1$ coordinates is
  \begin{align*}
    \Pr\Bigl[ \bigl( T = \{j\} \bigr) \wedge \bigwedge_{i \ne j} U_i = x_i \Bigr]
    &= \Pr\Bigl[ T = \{j\} \Bigr] \cdot \Pr\Bigl[ \bigwedge_{i \ne j} U_i = x_i \Bigr] \\
    &= p \cdot (1-p)^{\width-1} \cdot 2^{-(\width-1)}
    = p \cdot \left(\frac{1-p}{2}\right)^{\width-1} . \qedhere
  \end{align*}
\end{proof}

We will use the following standard tail bound for almost $k$-wise independent random variables.

\begin{lemma}[Lemma 8.1 in \cite{LeeV-rop}] \label{lem:k-wise-tail}
  Let $X_1, \ldots, X_\terms$ be $\gamma$-almost $t$-wise independent random variables supported on $[0,1]$.
  Let $X := \sum_i X_i$, and $\mu := \E[X]$.
  We have
  \[
    \Pr\Bigl[ \abs{X - \mu} \ge \mu/2 \Bigr] \le O\left( \frac{t}{\mu} \right)^t + O\left(\frac{\terms}{\mu}\right)^t \gamma.
  \]
\end{lemma}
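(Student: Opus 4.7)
The plan is to prove this via the classical $t$-th moment method applied to $X - \mu$, with a small perturbation argument to absorb the $\gamma$ slack. Set $Y_i := X_i - \mu_i$ so that each $Y_i$ has mean zero and $|Y_i| \le 1$, and assume WLOG that $t$ is even. Markov's inequality gives
\[
  \Pr\bigl[|X - \mu| \ge \mu/2\bigr] \le \frac{\E[(X - \mu)^t]}{(\mu/2)^t} = \frac{\E\bigl[(\sum_i Y_i)^t\bigr]}{(\mu/2)^t},
\]
and the numerator expands into a sum over multi-indices $(i_1, \ldots, i_t) \in [\ell]^t$ of $\E[Y_{i_1} \cdots Y_{i_t}]$. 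I would bound this in two stages: first assuming fully $t$-wise independence, then accounting for the $\gamma$ error.

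In the fully $t$-wise independent case, any multi-index in which some value appears \emph{exactly once} contributes zero, since that factor $\E[Y_j] = 0$ splits off. The surviving multi-indices are those whose distinct values each occur at least twice, forcing the number of distinct values $s$ to satisfy $s \le t/2$. Because $|Y_i| \le 1$, for any exponent $a \ge 2$ we have $Y_i^a \le Y_i^2$, so $\E[Y_i^a] \le \mathrm{Var}(X_i) \le \mu_i$ (the last using $X_i \in [0,1]$). Counting the multi-indices with exactly $s$ distinct values by at most $\binom{\ell}{s} \cdot s^t$ and invoking $\sum_{i_1 < \cdots < i_s} \prod_j \mu_{i_j} \le \mu^s / s!$, a standard combinatorial estimate yields
\[
  \E[(X-\mu)^t] \le (Ct\mu)^{t/2}
\]
for an absolute constant $C$. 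Dividing by $(\mu/2)^t$ gives $O(t/\mu)^{t/2}$, which I would weaken to the first term $O(t/\mu)^t$ as stated.

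For the $\gamma$-almost $t$-wise part, observe that every multi-index involves at most $t$ distinct coordinates, so the expectation $\E[Y_{i_1} \cdots Y_{i_t}]$ under the actual joint distribution differs from its product-distribution value by at most $O(\gamma)$: $\gamma$-almost $t$-wise independence of the $X_i$'s transfers to the centered $Y_i$'s, each $|Y_i| \le 1$, and the marginal on any $t$ coordinates is $\gamma$-close in total variation to the product. Summing these per-term perturbations over the at most $\ell^t$ multi-indices adds at most $O(\ell^t \gamma)$ to $\E[(X-\mu)^t]$, and dividing by $(\mu/2)^t$ yields the second term $O(\ell/\mu)^t \gamma$ of the bound.

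The main obstacle is the per-term error bookkeeping in the almost-independence step: one has to pin down precisely which notion of ``$\gamma$-almost $t$-wise independent'' is in force (total-variation on any $t$ coordinates vs.\ $\gamma$-biasedness, etc.) and verify that the resulting slack on each moment term is $O(\gamma)$ rather than some $2^{O(t)}$-blowup, and also ensure that centering the $X_i$'s to form the $Y_i$'s does not degrade the approximation. Once this accounting is sorted out, combining the two estimates produces the claimed bound immediately.
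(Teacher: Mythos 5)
The paper does not prove this lemma; it imports it verbatim as Lemma~8.1 of \cite{LeeV-rop}, so there is no in-paper argument to compare against. Your route --- the $t$-th moment method with a per-term total-variation correction for the almost-independence --- is the standard way to prove such a bound, and the second half of your argument (the $O(\ell/\mu)^t\gamma$ term, obtained by perturbing each of the at most $\ell^t$ monomials by $O(\gamma)$ and dividing by $(\mu/2)^t$) is fine modulo the definitional bookkeeping you yourself flag.

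The genuine gap is in the last line of your independent-case analysis: you derive $\E[(X-\mu)^t]/(\mu/2)^t \le O(t/\mu)^{t/2}$ and then say you ``weaken'' this to $O(t/\mu)^t$. That step goes in the wrong direction. In the only regime where the lemma is nontrivial, namely $t \le \mu$, we have $t/\mu \le 1$ and hence $(C_1 t/\mu)^{t/2} \ge (C_2 t/\mu)^t$ whenever $\mu \ge (C_2^2/C_1)\,t$; so the exponent-$t$ bound is the \emph{stronger} statement and does not follow from what you proved. Worse, the $t$-th moment method provably cannot reach $(Ct/\mu)^t$ when $\mu \gg t^2$: by the power-mean inequality $\E[(X-\mu)^t] \ge \Var(X)^{t/2}$, which for, say, independent Bernoulli variables with small means is $\ge (\mu/2)^{t/2}$, so Markov at level $\mu/2$ cannot give anything below $(2/\mu)^{t/2}$, whereas $(Ct/\mu)^t$ is smaller than that once $\mu > 2C^2t^2$. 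So either you must settle for the bound with first term $O(t/\mu)^{t/2}$ --- which, for what it is worth, suffices for both places the lemma is invoked in this paper, since each application only needs the first term to be $2^{-\Omega(t)}$ when $t/\mu$ is bounded below a small constant --- or you need an argument of a different character (not the $t$-th moment) to get exponent $t$ as literally stated. A smaller, harmless imprecision: your bound $\sum_{s\le t/2} s^t\mu^s/s! \le (Ct\mu)^{t/2}$ only holds when $\mu \gtrsim t$; in general there is an extra $(Ct)^t$ contribution from small $s$, but after dividing by $(\mu/2)^t$ that contribution is exactly $O(t/\mu)^t$ and so is absorbed by the stated bound.
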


\begin{proof}[Proof of \Cref{lem:1bit-product}]
  We will show that for most choices of $D$ and $T$, at least $2^{3w} \theta^{-2}$ of the $f_{D,T,i}$ depend on only 1 coordinate, and the ones that depend on at least $2$ coordinates together form a $\log(1/\eps)$-junta.

  \medskip
  
  We first consider the set of functions $f_{D,T,i}$ that are restricted to $1$-bit non-constant functions.
  Let
  \[
    J_1 := \{ i \in [\terms]: \text{$\abs{T \cap I_i}=1$ and $f_{T,D,i}$ is non-constant} \} .
  \]
  If $D$ and $T$ were \emph{exactly} independent instead of almost-independent, then applying \Cref{claim:restrict} with our choice of $p \ge 2^{-23\width}\theta^3$,  we would have
  \[
    \E_{D,T}\bigl[ \abs{J_1} \bigr]
    \ge \terms \cdot p \cdot \left(\frac{1 - p}{2}\right)^{3\width-1}
    \ge ( 2^{30\width} \theta^{-5} ) \cdot (2^{-23\width} \theta^3) \cdot 2^{-3\width}
    \ge 2^{4\width} \theta^{-2} .
  \]
  As $(D,T)$ is $\delta$-almost $k$-wise independent and $\abs{I_i} \le 3\width$ for $i \in [\terms]$, the indicators $\Id(i \in J_1) : i \in [\terms]$ are $\delta$-almost $\floor{k/(3\width)}$-wise independent.
  So applying \Cref{lem:k-wise-tail} with $t = \frac{C(\log(1/\eps) + w)}{300 \width} \le \floor{k/(3w)}$ and $\gamma = \delta = \theta^k$, and recalling $k = C(\log(1/\eps) + w)$, $\terms \le 2^{45\width} \theta^{-5}$, and $\width \ge \log\log(1/\eps) + \log(1/\theta)$, we have
  \begin{align} \label{eq:j1}
    \Pr_{D,T}\bigl[ \abs{J_1} \le 2^{3\width} \theta^{-2} \bigr]
    &\le O\left( \frac{t}{2^{4\width} \theta^{-2}} \right)^{t} + O\left(2^{41\width} \theta^{-3}\right)^t \cdot \theta^k \nonumber \\
    &\le 2^{-\Omega\bigl(w \cdot \frac{C(\log(1/\eps) + w)}{300\width}\bigr)} + \theta^{k/2} \nonumber \\
    &\le \eps .
  \end{align}
  
  \medskip
  
  We now consider the $f_{D,T,i}$'s that depend on at least two coordinates.
  We will show that these functions altogether depend on at most $\log(1/\eps)$ coordinates.
  As a result, we can think of these functions as a single $\log(1/\eps)$-junta.

  \smallskip
  
  Let $J_{\ge 2} := \{ i \in [\terms]: \abs{I_i \cap T} \ge 2 \}$ be the set of functions $f_{D,T,i}$'s that depend on at least $2$ coordinates, and $Q := \bigcup_{i \in J_{\ge 2}} I_i \cap T$ be the collection of coordinates these functions depend on.
  Suppose $\abs{Q} \ge \log(1/\eps)$. 
  Then as $\abs{I_i \cap T} \ge 2$ for $i \in J_{\ge 2}$, it must be the case that some $u \le \ceil{\frac{\log(1/\eps)}{2}}$ of the subsets $I_i \cap T: i \in J_{\ge 2}$ together contain at least $2u$ many free coordinates.
  The probability of the latter event is at most
  \[
    \binom{\terms}{u} \cdot \binom{u \cdot 3\width}{2u} \cdot \Bigl (p^{2u} + \delta \Bigr) .
  \]
  Setting $u = \frac{\log(1/\eps)}{2\width} + 1$, and recalling $\terms \le 2^{45\width} \theta^{-5}$, $p = 2^{-23\width}\theta^3$ and $\delta = \theta^k \le p^{2u}$, the above is at most
  \begin{align}
    \terms^u \cdot (6\width)^{2u} \cdot 2 p^{2u}
    &\le (2^{45\width u} \theta^{5u} ) \cdot 2^{3 u \log \width} \cdot (2 \cdot 2^{-46w} \theta^6) \nonumber \\
    &\le (2^{-2\width} \theta)^u \le \eps . \label{eq:j2}
  \end{align}
  
  Let $I_0' := (T \cap I_0) \cup Q$.
  By \cref{eq:j1,eq:j2}, with probability $1-2\eps$ over $(D,T)$, we have $\abs{J_1} \ge 2^{3\width} \theta^{-2}$ and $\abs{I_0'} \le 3 \log(1/\eps)$.
  In this case, the function $f_{D,T}$ is a product of at least $2^{3\width}\theta^{-2}$ non-constant $1$-bit functions and a $(3\log(1/\eps))$-junta.
  In other words, $f_{D,T} = \prod_i f_{D,T,i}$ is a $(\terms,1,3\log(1/\eps))$-product, where $\terms \ge 2^{3\width} \theta^{-2}$, and each $f_{D,T,i} : i \in [\terms]$ is non-constant.
\end{proof}

\subsubsection{Long products have small bias: Proof of \Cref{claim:long-prod-small-bias}} \label{sec:long-prod-small-bias}
In this section, we prove \Cref{claim:long-prod-small-bias}.
We start with bounding the bias of a single arbitrary non-constant function on $w$ bits.

\begin{claim}\label{claim:bounded-bias}
  Let $\calM$ be a group of matrices supported on $\calU_\theta(d)$.
  We have $\norm{\E[g(U)]}_{\op} \le 1 - 2^{-(2\width+2)}\theta^2$ for every non-constant function $g\colon \zo^\width \to \calM$.
\end{claim}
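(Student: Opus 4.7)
My plan is to exploit the non-constancy of $g$ to locate a single coordinate along which the value of $g$ changes, and then extract the bias by averaging over that coordinate.

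Since $g$ is non-constant, there exist $x, y \in \zo^w$ with $g(x) \ne g(y)$; walking the Boolean hypercube from $x$ to $y$ by single bit flips, some step must change the value of $g$, so I obtain a coordinate $j \in [w]$ and a point $u^\star \in \zo^w$ with $g(u^\star) \ne g(u^\star \oplus e_j)$. I would then pair up $\zo^w$ by the involution $u \leftrightarrow u \oplus e_j$, writing
\[
  \E[g(U)] \;=\; 2^{-(w-1)} \sum_{\{u, u \oplus e_j\}} \frac{g(u) + g(u \oplus e_j)}{2} ,
\]
with the sum over the $2^{w-1}$ unordered pairs. The triangle inequality gives
\[
  \norm{\E[g(U)]}_\op \;\le\; 2^{-(w-1)} \sum_{\{u, u \oplus e_j\}} \norm[\big]{ (g(u) + g(u \oplus e_j))/2 }_\op .
\]

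Every summand is at most $1$ by unitarity. For the distinguished pair containing $u^\star$, setting $A := g(u^\star)$ and $B := g(u^\star \oplus e_j)$, I would factor out the unitary $A$ to get
\[
  \norm{(A + B)/2}_\op \;=\; \norm{(I + A^{-1}B)/2}_\op .
\]
Because $\calM$ is a group and $A \ne B$, the element $A^{-1}B$ is a non-identity element of $\calM$ and hence lies in $\calU_\theta(d)$; its eigenvalues are therefore $e^{2\pi i \phi_k}$ with $\abs{\phi_k} \ge \theta$. Applying \Cref{claim:avg-norm} with angle parameter $2\pi\theta$ then yields $\norm{(I + A^{-1}B)/2}_\op \le 1 - (2\pi\theta)^2/8 \le 1 - \theta^2$.

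Combining these bounds across the $2^{w-1}$ pairs,
\[
  \norm{\E[g(U)]}_\op \;\le\; 2^{-(w-1)}\bigl[(2^{w-1} - 1) + (1 - \theta^2)\bigr] \;=\; 1 - 2^{-(w-1)}\theta^2 ,
\]
which is strictly stronger than the claimed $1 - 2^{-(2w+2)}\theta^2$ for every $w \ge 1$. I do not anticipate a genuine obstacle: the only place where the group hypothesis on $\calM$ is used is in asserting $A^{-1}B \in \calU_\theta(d)$, which is immediate from $\calM$ being a group whose non-identity matrices lie in $\calU_\theta(d)$; everything else is the triangle inequality plus a direct invocation of \Cref{claim:avg-norm}.
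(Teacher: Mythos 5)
Your proof is correct, and it is sharper than the paper's. The paper reaches the claim by a random-restriction argument: it applies \Cref{claim:restrict} with $p=1/2$, which says that a uniform restriction of $g$ is a non-constant $1$-bit function with probability at least $2^{-(2w-1)}$, and then invokes \Cref{claim:avg-norm} on that event. You instead find a single sensitive direction $e_j$, pair up the hypercube along that direction, and isolate the one distinguished pair; the distinguished pair occurs with weight $2^{-(w-1)}$, not $2^{-(2w-1)}$, which is what saves a factor of two in the exponent and lets you conclude $1-2^{-(w-1)}\theta^2$ instead of $1-2^{-(2w+2)}\theta^2$. Both arguments funnel through the same key step (\Cref{claim:avg-norm} applied to $A^{-1}B$ for a non-identity $A^{-1}B\in\calU_\theta(d)$), but yours avoids \Cref{claim:restrict} entirely, so it is more self-contained; the paper's version is less tight because conditioning on a \emph{random} restriction yielding the good $1$-bit subfunction wastes probability compared to choosing the good pair deterministically. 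Two small points worth noting: you are also more careful than the paper about the $2\pi$ rescaling between the eigenvalue parametrization $e^{2\pi i\theta_j}$ in $\calU_\theta(d)$ and the $e^{i\theta_j}$ of \Cref{claim:avg-norm} (the paper silently drops it, which is harmless since it only weakens the bound); and both proofs implicitly use that ``$\calM$ supported on $\calU_\theta(d)$'' means the non-identity elements of $\calM$ lie in $\calU_\theta(d)$, which is the only sensible reading since $I\notin\calU_\theta(d)$. Since the claim as stated only asks for $1-2^{-(2w+2)}\theta^2$, your stronger bound is more than enough.
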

\begin{proof}
  Let $T$ be the uniform distribution on $\zo^n$.
  Note that $\E[g(U)] = \E_{U,T}[ \E_{U'} [g(U + T \wedge U') ]]$.
  Applying \Cref{claim:restrict} with $p=1/2$, with probability at least $2^{-(2\width-1)}$, the function $g_{U,T}(x) := g(U + T \wedge x)$ is a non-constant $1$-bit function.
  Suppose $M_1 =: g_{U,T}(1) \ne g_{U,T}(0) := M_0$.
  By \Cref{claim:avg-norm}, we have $\norm{(M_1 + M_0)/2}_\op = \norm{M_1 (I + M_1^{-1}M_0)/2}_\op \le 1 - \theta^2/8$.
  So
  \begin{align*}
    \norm[\big]{ \E[g(U)] }_\op
    &\le (1 - 2^{-(2\width-1)}) \cdot 1 + 2^{-(2\width-1)} \cdot \norm{(M_1 + M_0)/2}_\op \\
    &\le 1 - 2^{-(2\width-1)} \cdot \bigl(1 - (1 - \theta^2/8) \bigr) \\
    &= 1 - 2^{-(2\width+2)} \theta^2 . \qedhere
  \end{align*}
\end{proof}

\begin{proof}[Proof of \Cref{claim:long-prod-small-bias}]
  By \cref{claim:bounded-bias}, we have $\norm{\E[f_i(U)]}_\op \le 1 - 2^{-(2\width+2)}\theta^2$ for each $i \in [\terms]$.
  Hence, for $\terms \ge 2^{2\width+2} \theta^2 \log(1/\eps)$, we have
  \[
    \norm[\big]{\E[f(U)]}_\op
    \le \prod_{i \in [\terms]} \norm[\big]{\E[f_i(U)]}_\op
    \le \bigl( 1 - 2^{-(2\width+2)}\theta^2 \bigr)^\terms
    \le \exp\bigl( -\terms \cdot 2^{-(2\width+2)}\theta^2 \bigr)
    \le \eps . \qedhere
  \]
\end{proof}

\section{Width reduction of short products: Proof of \Cref{lem:reduce-short-prod}} \label{sec:short-prod}
In this section, we prove \Cref{lem:reduce-short-prod}.
Recall that $D, T$ are $\delta$-almost $k$-wise independent distributions with $\E[D_i]=1/2$ and $\E[T_i]=p$, where
\begin{align*}
  k &= C\bigl(w + \log(m/\eps)\bigr) \\
  \delta &= (m \cdot w)^{-k} \\
  p &= 2^{-C} .
\end{align*}
for a sufficiently large constant $C$.

Given $T$, say a coordinate $i \in [n]$ is \emph{fixed} if $T_i = 0$ and is \emph{free} if $T_i = 1$.
Let $f(x) := \prod_{i=0}^\terms f_i(x_{I_i})$ be a $(\terms, 3\width, 2\log(1/\eps))$-product, where $\terms = m^5 \cdot 2^{30\width}$.
We will show that with high probability over $T$, (1) at most $\log(1/\eps)$ of the $2\log(1/\eps)$ coordinates in $I_0$ are free; (2) for most of the $I_i: i \ge 1$, at most $2\width$ coordinates in each of them are free, and (3) in the remaining $I_i$'s, there are at most $\log(1/\eps)$ many free coordinates in total.

\smallskip
To proceed, let 
\[
  J := \{ i \in [\terms]: \abs{T \cap I_i} \ge 2\width \} \quad\text{and}\quad  Q := \bigcup_{j \in J} I_j \cap T.
\]
It suffices to show the following two claims.
\begin{claim} \label{claim:reduce-input-length}
  $\abs{Q} \le \log(1/\eps)$ with probability $1 - \eps$ over $T$.
\end{claim}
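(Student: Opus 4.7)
The plan is a union bound over ``heavy'' blocks. Since the $I_i$ in a block product are pairwise disjoint, $|Q| = \sum_{j \in J} |I_j \cap T|$, with each $j \in J$ contributing between $2\width$ and $3\width$ to the sum. Consequently, $|Q| > \log(1/\eps)$ forces $|J| > \log(1/\eps)/(3\width)$, i.e.\ the existence of $u := \lceil \log(1/\eps)/(3\width) \rceil$ distinct indices $j_1,\dots,j_u \in [\terms]$ with $|I_{j_r} \cap T| \ge 2\width$ for every $r$. I will bound the probability of this latter event directly.

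For any fixed tuple of $u$ indices, the event depends on at most $3u\width \le \log(1/\eps) + 3\width$ coordinates of $T$, which is $\le k$ for $C$ a sufficiently large constant (recall $k = C(\width + \log(m/\eps))$). So $\delta$-almost $k$-wise independence of $T$ lets me replace its probability by the one under the $p$-biased product measure at an additive cost of $\delta$. Under the product measure the events on the disjoint blocks are independent, and each has probability at most $q := \binom{3\width}{2\width} p^{2\width} \le 2^{(3 - 2C)\width}$ by a further union bound over the $\binom{3\width}{2\width}$ choices of which $2\width$ positions of $I_j$ are free.

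A final union bound over the $\binom{\terms}{u}$ choices of indices then gives
\[
\Pr\bigl[|Q| > \log(1/\eps)\bigr] \le \binom{\terms}{u}\bigl(q^u + \delta\bigr) \le (\terms q)^u + \terms^u \delta.
\]
Plugging in $\terms \le m^5 \cdot 2^{30\width}$ and $\width \ge \log m$ yields $\terms q \le 2^{(O(1) - 2C)\width}$, so $(\terms q)^u \le 2^{-\Omega(u\width)} = 2^{-\Omega(\log(1/\eps))} \le \eps/2$ for $C$ large enough; similarly $\terms^u \delta \le 2^{O(\log(1/\eps) + \width) - k \log(m\width)} \le \eps/2$ using $\delta = (m\width)^{-k}$. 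The argument is a routine tail bound; the only point of care is choosing the constants so that the smallness of the marginal $p = 2^{-C}$ (making each block doubly-exponentially-in-$\width$ unlikely to be heavy) and of the near-independence parameter $\delta$ together beat the number of blocks $\terms = 2^{O(\width)}$, which is easy because all three parameters are tuned by the same constant $C$. There is no real conceptual obstacle.
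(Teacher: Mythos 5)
Your proof is correct and takes essentially the same route as the paper: both bound $\Pr[\abs{Q} > \log(1/\eps)]$ by a union bound over $u \approx \log(1/\eps)/(3\width)$ choices of heavy blocks and a choice of $\ge 2\width$ free coordinates within each, then pass to the $p$-biased product measure at an additive cost of $\delta$. The only cosmetic difference is that you bound the combinatorial factor per block as $\binom{3\width}{2\width}^u$ while the paper collects all $2\width u$ free positions at once as $\binom{3\width u}{2\width u}$; these lead to the same final estimate.
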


\begin{claim} \label{claim:reduce-junta-size}
  $\abs{T \cap I_0} \le \log(1/\eps)$ with probability $1 - \eps$ over $T$.
\end{claim}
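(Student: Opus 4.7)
The plan is a direct union bound, exploiting the fact that $|I_0| \le 2\log(1/\eps)$ is already small and that each coordinate is free under $T$ only with the tiny probability $p = 2^{-C}$. The expected number of free coordinates in $I_0$ is at most $2p\log(1/\eps)$, far below the threshold $\log(1/\eps)$, so $|T \cap I_0| \ge \log(1/\eps)$ is a large-deviation event that should be very unlikely. Unlike \Cref{claim:reduce-input-length}, there is no need for a subtle second-moment or tail-bound argument, because the set we are looking at is a single fixed block.

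To formalize this, I would enumerate subsets $S \subseteq I_0$ of size exactly $\log(1/\eps)$. Since $|S| \le \log(1/\eps) \le k = C(w + \log(m/\eps))$, the $\delta$-almost $k$-wise independence of $T$ gives
\[ \Pr\bigl[\forall i \in S \colon T_i = 1\bigr] \le p^{|S|} + \delta = \eps^C + \delta . \]
Summing over the at most $\binom{2\log(1/\eps)}{\log(1/\eps)} \le 4^{\log(1/\eps)} = \eps^{-2}$ such subsets yields
\[ \Pr\bigl[|T \cap I_0| \ge \log(1/\eps)\bigr] \le \eps^{-2}\bigl(\eps^C + \delta\bigr) . \]

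Finally, the parameter choice in \Cref{lem:sbpnfp} guarantees $\delta = (mw)^{-k} \le 2^{-k} \le \eps^C$, because $k \ge C\log(1/\eps)$ and $mw \ge 2$, so both terms on the right-hand side contribute at most $\eps^{C-2}$. Taking $C$ sufficiently large (say $C \ge 4$) then bounds the total probability by $\eps$, as required. I do not anticipate any real obstacle here: the claim is the easy sibling of \Cref{claim:reduce-input-length}, since $I_0$ is a single small block of size $\le 2\log(1/\eps)$ rather than a union of many restricted $I_i$'s whose combined free coordinates have to be controlled collectively. The only sanity check is that the almost-independence error $\eps^{-2}\delta$ does not dominate, and this is built into the choice of $\delta$ in the setup.
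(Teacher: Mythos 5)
Your proof is correct and takes essentially the same approach as the paper: a union bound over all size-$\log(1/\eps)$ subsets of $I_0$, bounding each by $p^{\log(1/\eps)} + \delta$ via almost $k$-wise independence, and observing that $\binom{|I_0|}{\log(1/\eps)} \le \eps^{-2}$ and $\delta \le \eps^C$ so the total is at most $\eps$ for $C$ large enough. The paper's version is just slightly more compact in its arithmetic.
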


\begin{proof}[Proof of \Cref{lem:reduce-short-prod}]
	Let $I_0' = (T \cap I_0) \cup Q$.
	By \Cref{claim:reduce-input-length,claim:reduce-junta-size}, with probability $1 - 2\eps$ over $T$, we have $\abs{I_0'} \le 2\log(1/\eps)$, and for every $i \in [\terms]\setminus J$, we have $\abs{T \cap I_i} \le 2\width$.
	Therefore, the function $f_{D,T}$ is a $(\terms,2\width,2\log(1/\eps))$-product.
\end{proof}

\begin{proof}[Proof of \Cref{claim:reduce-input-length}]
  Suppose $\abs{Q} \ge \log(1/\eps)$.
  Then as $\abs{T \cap I_j} \ge 2\width$ for $j \in J$, it must be the case that some $u \le \ceil{\frac{\log(1/\eps)}{2\width}}$ subsets $T \cap I_j: j \in J$ altogether contain $2\width \cdot u$ many free coordinates.
  This happens with probability at most
  \[
    \binom{\terms}{u} \cdot \binom{3\width \cdot u}{2\width \cdot u} \cdot \Bigl( p^{2\width \cdot u} + \delta \Bigr) .
  \]
  Setting $u = \frac{\log(1/\eps)}{3\width} + 1$, and recalling $\terms \le m^5 2^{30\width}$ and $\delta = (m \cdot w)^{-k} \le p^{2w\cdot u}$, the above is at most
	\begin{align*}
    \terms^u \cdot 2^{3\width \cdot  u} \cdot 2 p^{2\width \cdot u}
    &\le \bigl( m^5 2^{30\width} \bigr)^u \cdot 2^{3\width \cdot u} \cdot 2^{-C \cdot  2\width \cdot u + 1} \\
    &\le 2^{u \bigl( 33\width + 5 \log m - C \cdot 2\width \bigr)} \\
		&\le 2^{-u \cdot 3\width}
		\le \eps 
	\end{align*}
  where in the second last inequality we used $\width \ge \log m$.
\end{proof}

\begin{proof}[Proof of \Cref{claim:reduce-junta-size}]
  Recall that $\abs{I_0} \le 2\log(1/\eps)$, and $\delta = (m \cdot w)^{-k} \le p^{\log(1/\eps)}$.
  So
  \begin{align*}
    \Pr\Bigl[ \abs{T \cap I_0} \ge \log(1/\eps) \Bigr]
    &\le \binom{\abs{I_0}}{\log(1/\eps)} \Bigl( p^{\log(1/\eps)} + \delta \Bigr) \\
    &\le 2^{2\log(1/\eps)} \cdot 2^{-C\log(1/\eps) + 1} \\
    &\le \eps/2 . \qedhere
  \end{align*}
\end{proof}

\section{Fooling $(1,w,3\log(1/\eps))$-products over groups} \label{sec:prg-spill-product}

In this section, we show how to extend the PRGs for $(\ell,1,0)$-products over $p$-groups (\Cref{thm:fooling-p-groups}) and commutative groups~\cite{GopalanKM15} to fool $(\ell,1,3\log(1/\eps))$-products.

\subsection{$p$-groups}

We use the fact that our generator in \Cref{thm:fooling-p-groups} is simply the XOR of independent copies of small-bias distributions.
The following claim shows that conditioning on a small number of bits of a small-bias distribution remains small bias.

\begin{claim} \label{claim:bias-conditioning}
  Let $D$ be an $\eps$-biased distribution on $\zo^n$.
  For any set $S$ and $y \in \zo^S$, the distribution of $D$ conditioned on $D_S = y$ is $(2^{\abs{S}+1} \eps)$-biased on $\zo^{[n] \setminus S}$.
\end{claim}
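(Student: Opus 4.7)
The plan is to expand both the numerator and denominator of the conditional expectation using the Fourier/character basis on $\zo^S$ and invoke the small-bias hypothesis on each nonzero character of $D$.

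More concretely, fix a nonzero $\beta \in \zo^{[n] \setminus S}$ and write the corresponding character $\chi_\beta(z) = (-1)^{\langle \beta, z \rangle}$. I would first express
\[
  \E\bigl[ \chi_\beta(D_{[n] \setminus S}) \bigm| D_S = y \bigr]
  = \frac{\E\bigl[ \chi_\beta(D_{[n] \setminus S}) \cdot \Id[D_S = y] \bigr]}{\Pr[D_S = y]},
\]
and then expand the indicator as
\[
  \Id[D_S = y] = \frac{1}{2^{|S|}} \sum_{\alpha \in \zo^S} (-1)^{\langle \alpha, D_S \oplus y\rangle}.
\]

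For the denominator, the term with $\alpha = 0$ contributes exactly $2^{-|S|}$, while every other term is bounded in magnitude by $\eps$ since $\chi_\alpha$ is a nonzero character of $D$ and $D$ is $\eps$-biased. This gives $\Pr[D_S = y] \ge 2^{-|S|} - \eps$. For the numerator, each term in the expansion becomes $\frac{(-1)^{\langle \alpha, y \rangle}}{2^{|S|}} \E[\chi_{(\alpha, \beta)}(D)]$, where $(\alpha, \beta)$ is always nonzero because $\beta \ne 0$. Hence every such expectation is bounded by $\eps$, and summing over the $2^{|S|}$ choices of $\alpha$ yields a numerator of magnitude at most $\eps$.

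Combining, the conditional expectation has magnitude at most $\eps / (2^{-|S|} - \eps) = 2^{|S|}\eps / (1 - 2^{|S|}\eps)$. Under the mild assumption $2^{|S|}\eps \le 1/2$, this is at most $2^{|S|+1}\eps$ as claimed; if instead $2^{|S|}\eps > 1/2$, then $2^{|S|+1}\eps > 1$ and the bound is vacuous since any bias is trivially at most $1$. There is no real obstacle here: the argument is essentially a one-line Fourier computation, and the only thing to be slightly careful about is the separation between the trivial ($\alpha = 0$) and nontrivial characters in the denominator expansion, together with the harmless edge case where $\eps$ is too large relative to $2^{-|S|}$ for the claimed bound to be informative.
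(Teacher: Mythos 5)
Your proof is correct and follows essentially the same route as the paper: expand the indicator $\Id[D_S = y]$ over characters on $S$, bound the denominator $\Pr[D_S = y]$ from below by $2^{-|S|} - \eps$ using small bias on the nontrivial characters, and bound the numerator by $\eps$ since every character in its expansion is nontrivial (as $\beta \ne 0$). The only cosmetic difference is that you carry the general $y$ through the expansion via the $(-1)^{\langle\alpha, y\rangle}$ signs, while the paper writes the special case $y = 1^S$; both arrive at the same $2^{|S|+1}\eps$ bound under the same harmless assumption on $\eps$.
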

\begin{proof}
  We may assume $\eps \le 2^{-(\abs{S}+1)}$, for otherwise the claim is vacuous.
  For a subset $T \subseteq [n]$, let $\chi_T(x) := (-1)^{\sum_{i\in T}x_i}$ be any parity test.
  Let $T$ be any nonempty subset of $[n]\setminus S$.
  First observe that
  \[
    \Id(D_S = y)
    = \prod_{i \in S} \frac{1 - \chi_{\{i\}}(D)}{2}
    = 2^{-\abs{S}} \sum_{S' \subseteq S} (-1)^{\abs{S'}} \chi_{S'}(D) .
  \]
  Taking expectations on both sides and applying the triangle inequality, we have $\Pr[D_S = y] \ge 2^{-\abs{S}} - \eps \ge 2^{-(\abs{S} + 1)}$.
  Note that
  \begin{align*}
    \E[\chi_T(D) \mid D_S = y] \Pr[ D_S = y]
    &=\E[\chi_T(D) \cdot \Id(D_S = y)] \\
    &=\E\Bigl[\chi_T(D) \cdot \prod_{i \in S} \frac{1 - \chi_{\{i\}}(D)}{2}\Bigr] \\
    &= 2^{-\abs{S}} \sum_{S' \subseteq S} (-1)^{\abs{S'}} \E\bigl[\chi_{T \cup S'}(D)\bigr] .
  \end{align*}
  So its magnitude is bounded by $\eps$.
  Therefore, $\abs[\big]{\E[\chi_T(D) \mid D_S = y]} \le \Pr[ D_S = y]^{-1} \eps \le 2^{\abs{S}+1} \eps$.
\end{proof}

\begin{corollary} \label{cor:prg-p-groups}
  There is a PRG that $\eps$-fools $(n,1,3\log(1/\eps))$-products over any $p$-groups of order $m$  with seed length $O_m(\log(n/\eps))$.
\end{corollary}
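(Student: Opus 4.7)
The strategy is to re-use the generator $P = X \oplus Y$ from \Cref{thm:fooling-p-groups}, where $X$ is an $\eps^c$-biased distribution on $\zo^n$ and $Y$ is the derandomization of $N_p$ from \Cref{cor:N_p-fool-polynomial}, with $c = c_m$ a sufficiently large constant. Since $\log(n/\eps^c) = O_m(\log(n/\eps))$, the seed length remains $O_m(\log(n/\eps))$. Given an $(n,1,3\log(1/\eps))$-product $f(x) = f_0(x_{I_0}) \cdot \prod_{i \notin I_0} g_i^{x_i}$, by \Cref{claim:closeness} it suffices to fool each irreducible representation $\rho$ of $G$. Setting $q_\rho(x) := \prod_{i \notin I_0} \rho(g_i)^{x_i}$ and splitting the expectation by the value of $P_{I_0}$, the plan is to write
\[
  \E[\rho(f(P))] = \sum_{z \in \zo^{I_0}} \Pr[P_{I_0} = z] \cdot \rho(f_0(z)) \cdot \E\bigl[q_\rho(P) \mid P_{I_0} = z\bigr],
\]
with the uniform analogue having $\Pr[U_{I_0}=z]=2^{-|I_0|}$ and a conditional expectation equal to the unconditional $\E[q_\rho(U)]$ by independence.

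A direct Fourier calculation from the bias of $P$ (which inherits the $\eps^c$-bias of $X$ by independence of $X$ and $Y$) yields $|\Pr[P_{I_0} = z] - 2^{-|I_0|}| \le \eps^c$ pointwise. Combined with $\|\rho(f_0(z))\|_\op \le 1$ and $\|\E[q_\rho(U)]\|_\op \le 1$, the contribution to the total error from these ``wrong marginal masses'', summed over the $2^{|I_0|} \le \eps^{-3}$ values of $z$, is at most $\eps^{c-3}$, which is at most $\eps/2$ once $c \ge 4$.

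It then suffices to show that, for every fixed $z$, the distribution of $P$ conditioned on $P_{I_0}=z$ fools the $(n-|I_0|,1,0)$-product $q_\rho$ with error $\eps/2$. This is where \Cref{claim:bias-conditioning} enters: conditioning the $\eps^c$-biased $X$ on any fixing $X_{I_0} = y_0$ leaves a $(2^{|I_0|+1}\eps^c)$-biased distribution on $[n]\setminus I_0$, i.e., $O(\eps^{c-3})$-biased. Averaging over pairs $(X_{I_0}, Y_{I_0})$ consistent with $P_{I_0} = z$, the conditional $P$ on $[n]\setminus I_0$ decomposes as a mixture of XORs of conditionally-small-bias distributions with the residual noise component, still in the regime where the proof of \Cref{thm:fooling-p-groups} (i.e., \Cref{lem:FK} combined with \Cref{cor:N_p-fool-polynomial}) applies; this gives the required $\eps/2$ bound once $c$ is large enough in terms of $c_{G,p}$.

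The hard part will be controlling the conditional behaviour of the noise component $Y$ under the constraint $Y_{I_0} = y_0$. For $p=2$ this is essentially a non-issue: the $N_2$-noise is uniform so $Y$ can itself be realized as an XOR of small-bias distributions (via Lovett-type polynomial-fooling), and \Cref{claim:bias-conditioning} then applies factor-wise so that the whole analysis reduces to bias manipulations, as foreshadowed by the paragraph preceding the corollary. For $p>2$ one must either verify that the specific construction of $Y$ from \Cref{cor:N_p-fool-polynomial} (based on Viola's sum-of-small-bias construction over $\F_p$) is robust to conditioning on $O(\log(1/\eps))$ coordinates, or else sample $Y_{I_0}$ via a fresh $\eps$-almost $3\log(1/\eps)$-wise uniform distribution, at an additional cost of $O(\log(1/\eps) + \log\log n)$ seed bits that fits comfortably within the $O_m(\log(n/\eps))$ budget.
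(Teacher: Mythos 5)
Your approach and the paper's are essentially the same: both rely on the observation that the generator of \Cref{thm:fooling-p-groups} is built from small-bias components, and that \Cref{claim:bias-conditioning} says conditioning a small-bias distribution on $O(\log(1/\eps))$ coordinates keeps it small-bias up to an affordable $2^{|I_0|+1}$ blow-up. The paper's proof is a two-line remark that states exactly this; you flesh it out by decomposing $\E[\rho(f(P))]$ over values $z$ of $P_{I_0}$ and bounding separately the error in the marginal masses (at most $2^{|I_0|}\eps^c$) and the error in the conditional expectations (from the conditioned generator still fooling the remaining width-$1$ product). That bookkeeping is fine. You also correctly flag the point that the paper's phrasing, ``simply the XOR of independent copies of small-bias distributions,'' is literally accurate only for $p=2$; for $p>2$ the noise component is the coordinatewise image under $x\mapsto x^{p-1}$ of a sum of small-bias distributions over $\F_p$, so what is actually needed is the $\F_p$-analogue of \Cref{claim:bias-conditioning} (conditioning an $\F_p$-small-bias distribution on a few coordinates keeps it $\F_p$-small-bias), applied to each summand $Z_i$ after decomposing the event $P_{I_0}=z$ into joint fixings $(D_{I_0},Y_{I_0})$. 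Your alternative fix of sampling $Y_{I_0}$ from an auxiliary almost $k$-wise uniform distribution also works and changes the generator slightly, which is permitted since the corollary only asserts existence. Either way the approach is sound; the main thing you add beyond the paper is making explicit the $p>2$ conditioning subtlety and proposing a concrete resolution, whereas the paper handles it implicitly.
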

\begin{proof}
  Recall our generator for $p$-groups in \Cref{thm:fooling-p-groups} is simply the XOR of independent copies of $(\eps/n)^c$-biased distributions.
  By \Cref{claim:bias-conditioning}, for any fixing of the input bits of $f_0$ in each copy, each distribution remains $(2/\eps)(\eps/n)^c$-biased.
\end{proof}

\subsection{Commutative groups}

We now show that the Gopalan--Kane--Meka PRG fools $(\ell,1,3\log(1/\eps))$-products.
We will use the following PRG by Gopalan, Kane, and Meka~\cite{GKM18} that fools $(\ell,1,0)$-products over $\C$.
A simple argument shows that the same PRG also fools $(\ell,1, 3\log(1/\eps))$-products.

\begin{claim} \label{claim:gkm-prg}
  There is an explicit PRG $P$ that $\eps$-fools $(\ell,1,3\log(1/\eps))$-products over commutative groups of order $m$ with seed length $O_m(\log(\ell/\eps)(\log\log(\ell/\eps)^2)$.
\end{claim}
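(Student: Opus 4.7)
The plan is to instantiate the GKM PRG with error parameter $\eps' = \eps^c$ for a sufficiently large constant $c$—which costs only a constant factor in the seed length, since $\log(\ell/\eps') = O(\log(\ell/\eps))$—and argue that this stronger guarantee absorbs the cost of handling the spill, analogously to how \Cref{cor:prg-p-groups} handles the spill for $p$-groups using \Cref{claim:bias-conditioning}.

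First, by \Cref{claim:closeness} applied to the commutative group $G$ (equivalently, by Fourier analysis on $G$), it suffices to bound $\abs{\E[\chi(f(P))] - \E[\chi(f(U))]}$ for each character $\chi$ of $G$. Fix such a $\chi$. Writing $y := x_{I_0}$, $x' := x_{[n]\setminus I_0}$, $h(y) := \chi(g_0)^{f_0(y)}$, and $T(x') := \prod_{i \ge 1} \chi(g_i)^{f_i(x_i)}$, we have $\chi(f(x)) = h(y) \cdot T(x')$ with $\abs{h}_\infty \le 1$, and $T$ is a product of one-bit functions taking values in the $m$-th roots of unity. Under the uniform distribution, $U_{I_0}$ and $U'$ are independent, so I would expand
\[
\E[h(P_{I_0}) T(P')] - \E[h(U_{I_0}) T(U')]
 = \sum_{y} h(y) \Bigl( \Pr[P_{I_0}=y]\,\E[T(P') \mid P_{I_0}=y] - 2^{-\abs{I_0}}\,\E[T(U')] \Bigr).
\]

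Second, I would bound the right-hand side by $O(\eps)$ by establishing two properties of the GKM PRG at bias $\eps'$: (a) the marginal $P_{I_0}$ is pointwise $O(2^{-\abs{I_0}} \eps)$-close to uniform, and (b) for each $y \in \zo^{I_0}$, the conditional distribution $P' \mid P_{I_0}=y$ still $\eps$-fools the test $T$, which is itself an $(\ell,1,0)$-product with values in the order-$m$ cyclic group generated by $\chi(g_i)$'s. Both properties follow in the spirit of \Cref{claim:bias-conditioning}: the GKM construction is built from small-bias (more generally, $\mathbb{Z}_m$-bias) components, and conditioning on the $\abs{I_0} \le 3\log(1/\eps)$ specified bits degrades the bias parameter by a factor of at most $2^{\abs{I_0}+1} = O(\eps^{-3})$. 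Starting with an $\eps' = \eps^c$-biased building block and taking $c$ sufficiently large, the conditioned distribution remains $\eps$-biased on the remaining coordinates, which is enough to fool $T$. Combining this with (a) and summing the triangle-inequality bound over the $O(m)$ characters yields the claimed $\eps$-fooling.

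The main obstacle is verifying the conditioning step (b) for the specific GKM construction. For $p$-groups, \Cref{cor:prg-p-groups} handles this cleanly because the PRG is literally an XOR of small-bias distributions, so \Cref{claim:bias-conditioning} applies directly. For general commutative groups of composite order, the GKM PRG combines generators for distinct prime factors in a more elaborate manner, and one must check that each primitive admits an analogous conditioning lemma with at most a $2^{\abs{I_0}}$ multiplicative loss in the bias parameter. Once this is established (by inspecting the GKM construction and applying \Cref{claim:bias-conditioning} to each of its small-bias ingredients), the reduction goes through and the seed length is the same as that of the GKM PRG at error $\eps^{O(1)}$, namely $O_m(\log(\ell/\eps)(\log\log(\ell/\eps))^2)$.
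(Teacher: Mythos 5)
Your approach diverges from the paper's at the crucial step of handling the spill $f_0$, and the divergence leads to a real gap that you yourself flag.

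The paper does \emph{not} condition the GKM generator on the spill bits. Instead, after reducing to a single character $\chi$ via \Cref{claim:closeness}, it observes that $\chi(f(x)) = \omega^b \cdot \omega^{\sum_{j\in J} a_j x_j + h(x_{I_0})}$, and then \emph{encodes the entire test as a single integer linear form}
\[
F(x) := \sum_{j\in J} a_j x_j + 2^{\lceil \log(m\ell)\rceil} \sum_{j=1}^{|I_0|} 2^{j-1} x_{r_j},
\]
whose low-order bits carry the sum $\sum a_j x_j$ and whose high-order bits carry the raw spill input $x_{I_0}$ verbatim. Since $f(x)$ is determined by $F(x)$, and $F(x) = \langle w, x\rangle$ with $\sum_i |w_i| \le O(m\ell/\eps^3)$, the GKM guarantee of \Cref{lem:gkm-prg} applies directly: setting the GKM error to $O(\eps^3/(m\ell))$ gives $\dist_{TV}(F(U), F(P_{\GKM})) \le \eps/\sqrt{m}$, which is exactly what \Cref{claim:closeness} needs. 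No conditioning lemma about the GKM construction is ever invoked.

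Your route, by contrast, needs two facts about $P_{\GKM}$: that its marginal on $I_0$ is pointwise near-uniform, and that conditioning on $P_{I_0}=y$ preserves the $\eps$-fooling of the residual width-1 product $T$. Neither follows from the black-box GKM guarantee (which is a TV bound on $\langle w, \cdot\rangle$ for bounded $\sum|w_i|$, not a small-bias or pointwise guarantee), and the GKM PRG is not an XOR of small-bias distributions, so \Cref{claim:bias-conditioning} does not apply to it the way it applies to the $p$-group generator in \Cref{cor:prg-p-groups}. You acknowledge this as "the main obstacle," and that is correct: as written, step (b) is an unproved assertion about the internals of a construction you are treating as a black box, and there is no indication it can be made to go through without reworking the GKM analysis. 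The paper's encoding trick is precisely what lets one avoid ever opening that box.

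So the reduction to characters (your first paragraph) matches the paper, but the way the spill is absorbed is genuinely different, and your version has a gap that the paper's version does not.
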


\begin{lemma}[Theorem 1.1 and Lemma 9.1 in \cite{GKM18}] \label{lem:gkm-prg}
  There is an explicit $P_{\GKM}\colon \zo^s \to \zo^n$ where $s = O(\log(\ell/\eps))(\log\log(\ell/\eps))^2$ such that the following holds.
  If $w \in \R^n$ satisfies $\sum_i \abs{w_i} \le W$, then
  \[
    \dist_{TV}\bigl(\angles{w,U}, \angles{w,P_{\GKM}(U)}\bigr) \le O(\sqrt{W}) \cdot \eps . \qedhere
  \]
\end{lemma}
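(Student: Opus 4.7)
The plan is to reduce (via Claim~\ref{claim:closeness}) to fooling every character of the commutative group $G$ of order $m$ with error $\eps/\sqrt{m}$, and then to eliminate the $3\log(1/\eps)$-bit spill block $I_0$ by Fourier-expanding a single bounded function over its few bits.  This way, evaluating any character on the product becomes a short sum of terms, each of which is a complex exponential of a real linear form in $x$, directly within the scope of Lemma~\ref{lem:gkm-prg}.

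Concretely, fix a character $\chi$ of $G$ and write $\chi(g_i) = e^{2\pi i \theta_i}$ with $\theta_i \in [0,1)$.  For $i \ge 1$, every width-$1$ function $f_i\colon \zo \to \zo$ is affine: $f_i(x) = a_i + b_i x$ with $b_i \in \{-1,0,1\}$.  Collecting the constants and the linear parts gives
\[
  \chi(f(x)) = C \cdot \phi(x_{I_0}) \cdot e^{2\pi i \langle \beta, x \rangle},
\]
where $C$ is a unit constant, $\phi(x_{I_0}) := e^{2\pi i \theta_0 f_0(x_{I_0})}$ has magnitude $1$, and $\beta \in \R^n$ is supported on the $\ell$ singleton blocks with $\|\beta\|_1 \le \ell$.

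Next, Fourier-expand $\phi$ over the $|I_0| \le 3\log(1/\eps)$ bits of $I_0$:
\[
  \phi(x_{I_0}) = \sum_{S \subseteq I_0} \hat\phi(S) \cdot (-1)^{\sum_{i \in S} x_i}.
\]
Since $|\phi|\equiv 1$, Parseval gives $\sum_S |\hat\phi(S)|^2 = 1$, and Cauchy--Schwarz yields $\sum_S |\hat\phi(S)| \le 2^{|I_0|/2} \le \eps^{-3/2}$.  Each summand becomes $\hat\phi(S) \cdot e^{2\pi i \langle w^{(S)}, x \rangle}$, where $w^{(S)} := \beta + \tfrac{1}{2}\mathbf{1}_S$ satisfies $\|w^{(S)}\|_1 \le \ell + \tfrac{3}{2}\log(1/\eps) = O(\ell + \log(1/\eps))$.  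Invoking Lemma~\ref{lem:gkm-prg} with error parameter $\eps_0$, and noting that TV-closeness of $\langle w^{(S)}, U\rangle$ and $\langle w^{(S)}, P_{\GKM}(U)\rangle$ controls the difference of expectations of the bounded function $e^{2\pi i\,\cdot}$, each such term is fooled with error $O(\sqrt{\ell + \log(1/\eps)}) \cdot \eps_0$.

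Summing over $S \subseteq I_0$ using the $\ell_1$ bound, the per-character error is at most $\eps^{-3/2} \cdot O(\sqrt{\ell}) \cdot \eps_0$.  Setting $\eps_0 := \eps^{5/2}/(C'\sqrt{m\ell})$ for a sufficiently large constant $C'$ drives the per-character error to $\eps/\sqrt{m}$, and Claim~\ref{claim:closeness} then yields total statistical distance $O(\eps)$.  The seed length inherited from Lemma~\ref{lem:gkm-prg} becomes $O(\log(\ell/\eps_0))(\log\log(\ell/\eps_0))^2 = O_m(\log(\ell/\eps)(\log\log(\ell/\eps))^2)$, matching the statement.  The main delicacy I anticipate is keeping the Fourier $\ell_1$ blowup of $2^{|I_0|/2}$ under control; this is precisely what permits a spill of size $O(\log(1/\eps))$ and relies on the sharp $\sqrt{W}$ (rather than $W$) dependence in Lemma~\ref{lem:gkm-prg}.
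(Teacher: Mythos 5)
Your proposal does not prove the statement in question. The statement is Lemma~\ref{lem:gkm-prg} itself --- the existence of the Gopalan--Kane--Meka generator $P_{\GKM}$ fooling real linear forms $\angles{w,x}$ in total variation distance with error $O(\sqrt{W})\cdot\eps$ and seed length $O(\log(\ell/\eps))(\log\log(\ell/\eps))^2$. This is an external result imported from \cite{GKM18} (their Theorem~1.1 together with Lemma~9.1 on Fourier shapes); the paper offers no proof of it beyond the citation, and neither do you. Your argument instead proves the downstream statement, Claim~\ref{claim:gkm-prg} (that $P_{\GKM}$ fools $(\ell,1,3\log(1/\eps))$-products over commutative groups), and it does so by explicitly \emph{invoking} Lemma~\ref{lem:gkm-prg} as a black box (``Invoking Lemma~\ref{lem:gkm-prg} with error parameter $\eps_0$\dots''). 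As a proof of the lemma this is circular: nothing in your write-up constructs $P_{\GKM}$ or establishes the $O(\sqrt{W})\cdot\eps$ TV bound, which in \cite{GKM18} requires a genuinely different machinery (iterated dimension reduction / pseudorandom sampling for Fourier shapes).

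For what it is worth, read as a proof of Claim~\ref{claim:gkm-prg} your argument is essentially sound but takes a different route from the paper's. The paper handles the spill block $I_0$ by packing $x_{I_0}$ losslessly into the high-order bits of a single integer linear form $F(x)=\sum_{j}a_jx_j+2^{\ceil{\log(m\ell)}}\sum_j 2^{j-1}x_{r_j}$, so that one application of Lemma~\ref{lem:gkm-prg} with weight $W=O(m\ell/\eps^3)$ suffices; you instead Fourier-expand the unit-modulus spill function over its $\le 3\log(1/\eps)$ bits and apply the lemma to each of the $2^{\abs{I_0}}$ resulting linear phases with small weights $\beta+\tfrac12\mathbf{1}_S$, paying an $\ell_1$ blow-up of $2^{\abs{I_0}/2}\le\eps^{-3/2}$ that you absorb into $\eps_0$. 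Both yield the claimed seed length. But neither route addresses the lemma you were asked to prove.
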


\begin{proof}[Proof of \Cref{claim:gkm-prg}]
  By \Cref{claim:closeness}, it suffices to fool the product over each irreducible representation of $G$ with error $\eps/\sqrt{m}$.
  Since $G$ is commutative, all its irreps are 1-dimensional.
  Moreover, they are supported on subsets of $\C_m := \{z \in \C: \abs{z}^m = 1\}$.
  Let $f = \prod_{i=0}^\ell f_i$ be an $(\ell,1,3\log(1/\eps))$-product over $\C_m$.

  Let $\omega := e^{i\frac{2\pi}{m}}$.
  Note that for any 1-bit function $g\colon \zo \to \C_m$ we can write $g$ as
  \[
    g(y)
    = \omega^{a_1 y + a_0 (1-y)}
    = \omega^{a_0} \cdot \omega^{(a_1 - a_0) y}
  \]
  for some $a_0, a_1 \in \{0, \ldots, m-1\}$.
  We can also write $f_0(x_{I_0})$ as $\omega^{h(x_{I_0})}$, for some $h\colon \zo^{I_0} \to \{0, \ldots, m-1\}$.
  Therefore, $f$ has the form of
  \[
    f(x) = \omega^b \cdot \omega^{\sum_{j \in J} a_j x_j + h(x_{I_0})} ,
  \]
  for some coefficients $b$ and $a_j$'s taking values from $\{0, \ldots, m-1\}$, and $J$ and $I_0$ are disjoint subsets of $[n]$.
  Write $I_0 = \{r_1 < \cdots < r_{\abs{I_0}}\}$ for some $r_j \in [n]$.
  Consider the integer-valued function $F\colon\zo^\ell \to \Z$ defined by
  \[
    F(x) := \sum_{j \in J} a_j x_{j} + 2^{\ceil{\log(m\ell)}} \sum_{j=1}^{\abs{I_0}} 2^{j-1} x_{r_j} .
  \]
  So the first $\ceil{\log m\ell}$ bits of $F(x)$ encode the first sum, and the last $\abs{I_0}$ bits are the decimal encoding of the binary string $x_{I_0}$.
  Note that we can compute $f(x)$ given $F(x)$.
  Moreover, $F(x) = \angles{w,x}$ for some $w \in \R^n$ with $\sum_i \abs{w_i} \le O(m\ell/\eps^3)$.
  Therefore, if we let $P$ be the PRG in \Cref{lem:gkm-prg} with error $O(\frac{\eps^3}{m \ell})$, which uses a seed of $O(\log(m\ell/\eps))(\log\log(m\ell/\eps))^2$ bits, then it follows that $\dist_{TV}(F(U),F(P_{\GKM}(U)) \le \eps/\sqrt{m}$.
\end{proof}

\section{Mixing characterization of Dedekind groups} \label{sec:proof-dedeking-mixing}
In this section we give a proof of \Cref{lem:dedekind-mixing}, which says that a (finite) group $G$ is mixing if and only if it is Dedekind.
This proof is provided by Yves de Cornulier on~\url{https://mathoverflow.net/a/482286/8271}.

Recall from \Cref{def:mixing} that a (finite) group $G$ is \emph{mixing} if for every nontrivial irreducible (unitary) representation $\rho$ and non-identity element $g \in G$, the matrix $\rho(g)$ has no eigenvalue 1, and $G$ is Dedekind if it has the form $\mathbb{Q}_{8}\times\mathbb{Z}_{2}^{t} \times D$ for any integer $t$ and commutative group $D$ of odd order.

We first note that the definition of $G$ being mixing is equivalent to the following condition:  for every $g \in G$ and irrep $\rho$, the subspace $\ker(\rho(g)-I)$ is a subrepresentation.
The proof of \Cref{lem:dedekind-mixing} follows from the following two claims.
Here we use the equivalence that a group $G$ is Dedekind if and only if every subgroup of $G$ is normal.

\begin{claim}
  If $G$ is Dedekind, then $\ker(\rho(g) - I)$ is a subrepresentation for every $g$ and $\rho$.
\end{claim}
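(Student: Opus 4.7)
The plan is to unwind what it means for $\ker(\rho(g)-I)$ to be a subrepresentation, and then exploit the Dedekind hypothesis in its equivalent form: every cyclic subgroup $\langle g\rangle$ of $G$ is normal. Concretely, one must show that for every $h\in G$ and every eigenvector $v$ satisfying $\rho(g)v=v$, the vector $\rho(h)v$ is again fixed by $\rho(g)$.

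The key identity is the conjugation relation in the group algebra: $\rho(g)\rho(h)=\rho(gh)=\rho(h)\rho(h^{-1}gh)$. Thus I would compute
\[
\rho(g)\rho(h)v \;=\; \rho(h)\,\rho(h^{-1}gh)\,v.
\]
Here is where the hypothesis enters. Because $\langle g\rangle\trianglelefteq G$, the conjugate $h^{-1}gh$ lies in $\langle g\rangle$, so $h^{-1}gh=g^{k}$ for some integer $k=k(g,h)$. Since $v$ is fixed by $\rho(g)$, it is also fixed by $\rho(g)^{k}=\rho(g^{k})$, and therefore $\rho(g)\rho(h)v=\rho(h)v$. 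This shows $\rho(h)v\in\ker(\rho(g)-I)$, which is exactly the $G$-invariance required for $\ker(\rho(g)-I)$ to be a subrepresentation.

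There is no real obstacle here: the argument is a direct application of normality of cyclic subgroups, and the only thing to notice is that irreducibility of $\rho$ is never used in this direction (only in the companion direction of \Cref{lem:dedekind-mixing}, where one must show non-Dedekind groups fail the eigenvalue-$1$ condition). Writing it out cleanly is essentially a one-line calculation, so the proof is short.
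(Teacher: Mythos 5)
Your proof is correct and follows essentially the same route as the paper's: reduce $G$-invariance of $\ker(\rho(g)-I)$ to the identity $\rho(g)\rho(h)v=\rho(h)\rho(h^{-1}gh)v$, invoke normality of $\langle g\rangle$ to write $h^{-1}gh=g^{k}$, and conclude since $v$ is fixed by all powers of $\rho(g)$. No meaningful differences.
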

\begin{proof}
  Take an element $g \in G$.
  By definition of Dedekind, every subgroup in $G$ is normal.
  In particular $\angles{g}$ is also normal.

  Take $v \in W_g := \{v : \rho(g) v = v\}$.
  To show that $W_g$ is a subrepresentation, we need to show that $\rho(g) (\rho(h) v) = \rho(h) v$ for every $h$.
  But this is equivalent to showing 
  \[
    \rho(h)^{-1} \rho(g) \rho(h) v
    = \rho(h^{-1} g h) v
    = v .
  \]
  Since $\angles{g}$ is normal, we have $h^{-1} g h = g^i$ for some $i$.
  It is clear that $\rho(g^i)v  = \rho(g)^i v = \rho(g)^{i-1} (\rho(g) v) = \rho(g)^{i-1} v = \cdots = v$.
  So indeed $W_g$ is a subrepresentation.
\end{proof}

\begin{claim}
  If $\ker(\rho(g) - I)$ is a subrepresentation for every $g$ and $\rho$ of $G$, then $G$ is Dedekind.
\end{claim}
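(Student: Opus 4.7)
The plan is to show that every cyclic subgroup $\langle g \rangle$ of $G$ is normal; this is enough, since a group is Dedekind iff every cyclic subgroup is normal (if every $\langle g \rangle$ is normal and $K \le G$, then for $h\in G$ and $k\in K$ one has $hkh^{-1}\in\langle k\rangle \subseteq K$).

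First, I would upgrade the hypothesis from irreps to arbitrary (unitary) representations. Since every finite-dimensional representation of $G$ decomposes as a direct sum of irreps by Maschke's theorem, and $\ker(\rho(g)-I)$ decomposes compatibly with any direct sum decomposition, the property ``$\ker(\rho(g)-I)$ is a subrepresentation for every $g$'' passes through direct sums. So it holds for every representation of $G$, not only the irreducible ones.

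Next, I would apply this to the left regular representation $L$ on $\mathbb{C}[G]$, given by $L(g)e_x = e_{gx}$. A vector $\sum_x a_x e_x$ is fixed by $L(g)$ iff $a_{gx}=a_x$ for all $x$, i.e.\ the coefficients are constant on left cosets $\langle g\rangle x$. Hence $\ker(L(g)-I)$ has as a basis the indicator vectors $\mathbf{1}_{\langle g\rangle x}$ of the left $\langle g\rangle$-cosets. By Step 1, this subspace is $L$-invariant, so for every $h\in G$ the vector $L(h)\mathbf{1}_{\langle g\rangle x} = \mathbf{1}_{h\langle g\rangle x}$ must again be a linear combination of coset indicators $\mathbf{1}_{\langle g\rangle y}$. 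Because the set $h\langle g\rangle x$ has the same cardinality $|\langle g\rangle|$ as a single left $\langle g\rangle$-coset, this forces $h\langle g\rangle x$ to be exactly one left $\langle g\rangle$-coset, and since it contains $hx$, we must have $h\langle g\rangle x = \langle g\rangle hx$, i.e.\ $h\langle g\rangle = \langle g\rangle h$. Thus $\langle g\rangle$ is normal.

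Applying this to every $g\in G$ gives that every cyclic subgroup of $G$ is normal, so $G$ is Dedekind, finishing the claim. I do not expect a serious obstacle here: the only subtle step is recognizing that the hypothesis, stated for irreps, automatically passes to the regular representation, which is the natural place to read off coset-structural information from the fixed space of $L(g)$. The rest is a direct translation between invariance of the fixed subspace and normality of $\langle g\rangle$.
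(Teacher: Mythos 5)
Your proof is correct, and it is essentially the same argument as the paper's, with one step made fully explicit. The paper works with an arbitrary irrep $\rho$: invariance of $W_g=\ker(\rho(g)-I)$ under $\rho(h)$ gives $\rho(h^{-1}gh)v=v$ for all $v\in W_g$, and from ``$W_g^\rho\subseteq W_{h^{-1}gh}^\rho$ for every irrep $\rho$'' the paper asserts directly that $\langle h^{-1}gh\rangle=\langle g\rangle$. That last implication is exactly what your proof supplies: by Maschke the hypothesis passes from irreps to every representation, and in the regular representation $L$ the fixed space of $L(g)$ has the concrete basis $\{\mathbf{1}_{\langle g\rangle x}\}$, so $L$-invariance together with the cardinality/containment argument forces $h\langle g\rangle x = \langle g\rangle hx$, i.e.\ $\langle g\rangle$ is normal. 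So you are not taking a different route; you are filling in the justification the paper compresses into one sentence, which makes your version more self-contained. Two very minor remarks: the sets $\langle g\rangle x$ are usually called \emph{right} cosets of $\langle g\rangle$ (you call them left cosets), though this naming has no effect on the argument; and strictly speaking you also need that a $0/1$ vector supported on a set of size $|\langle g\rangle|$ that lies in the span of indicators of disjoint sets of size $|\langle g\rangle|$ must equal one of them — which is exactly the disjoint-support observation you invoke, so that step is fine.
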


\begin{proof}
  As in the previous claim, we consider $W_g = \ker(\rho(g) - I) = \{v : \rho(g) v = v\}$.
  Note that for $v \in W_g$, we have $\rho(g^i) v = \rho(g)^i v = v$.
  Suppose $W_g$ is a subrepresentation.
  Take $h \in G$ and $v \in  W_g$.
  Since $\rho(h) W_g \subseteq W_g$, we have $\rho(g) \rho(h) v = \rho(h) v$.
  That means $\rho(h^{-1} g h) v = v$ for every $v \in W_g$.
  This implies $\angles{h^{-1}g h} = \angles{g}$, meaning $h^{-1}g h \in \angles{g}$ and thus $\angles{g}$ is normal.
\end{proof}

\subsection*{Acknowledgements}
We thank Yves de Cornulier for answering a question about Dedekind groups and providing a proof of \Cref{lem:dedekind-mixing}.

\bibliographystyle{alpha}
\bibliography{OmniBib,ref}

\appendix

\end{document}